\documentclass[twoside,11pt]{report}
\usepackage[utf8]{inputenc}
\usepackage{graphicx}
\graphicspath{{./lq_images/}}
\usepackage[english]{babel}
\usepackage[parfill]{parskip}
\usepackage{amsmath,amssymb,amsthm,amscd}
\usepackage{algorithm}
\usepackage{algpseudocode}
\usepackage{a4wide}
\usepackage{tabu}
\usepackage{palatino}
\usepackage{pdfpages}
\usepackage[cdot,squaren]{SIunits}
\usepackage{url}
\usepackage{fancyhdr}
\usepackage[toc]{appendix}
\usepackage{tikz}
\usetikzlibrary{shapes}
\usepackage{caption}
\usepackage{color}
\usepackage[ddmmyyyy,nodayofweek]{datetime}
\usepackage[subnum]{cases}
\usepackage{enumerate}
\title{Large-scale Multiscale Particle Models in Inhomogeneous Domains: Modelling and Implementation}
\author{Omar Richardson}

\let\oldtil\tilde
\renewcommand{\tilde}[1]{\oldtil{\mathbf{#1}}}
\renewcommand{\vec}[1]{\mathbf{#1}}
\newcommand{\gvec}[1]{\boldsymbol#1}
\DeclareMathOperator{\diag}{diag}
\newtheorem{newdef}{Definition}
\newtheorem{newthm}{Theorem}
\newtheorem{newlemma}{Lemma}
\DeclareMathOperator*{\argmin}{arg\,min}
\newcommand{\bigo}[1]{\mathcal{O}\left(#1\right)}

\DeclareMathOperator{\D}{D}
\newcommand{\ceil}[1]{\left\lceil#1\right\rceil}
\renewcommand{\div}[1]{\operatorname{div}\left( #1 \right)}
\newcommand{\eps}{\varepsilon}
\newcommand{\worddef}[1]{\emph{#1}}
\newcommand{\boolor}{\mbox{ or }}
\newcommand{\booland}{\mbox{ and }}


\newcommand{\comment}[1]{\emph{\color{red}}}
\pagestyle{fancy}
\lhead{}
\begin{document}
\begin{titlepage}
    \centering
    \includegraphics[width=0.5\textwidth]{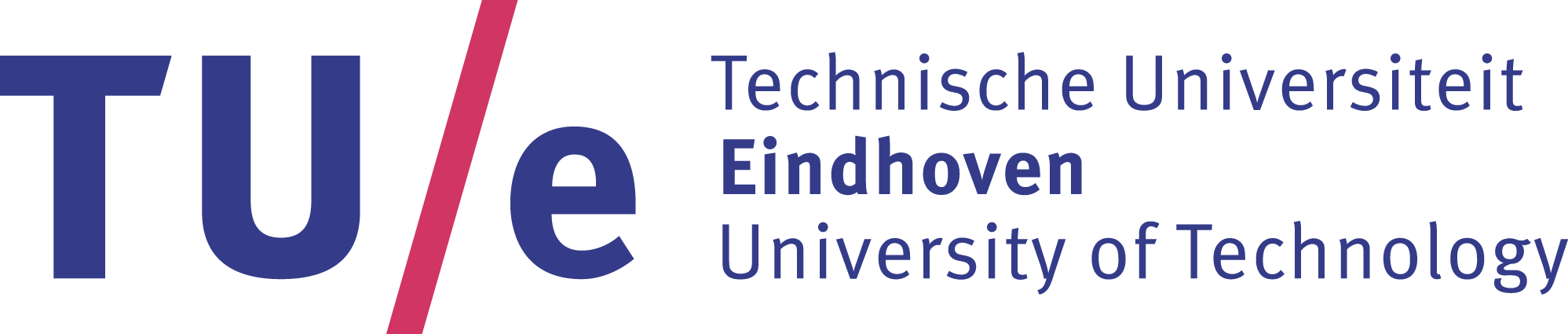}\par\vspace{1cm}
    {\scshape\LARGE University of Technology Eindhoven \par}
    \vspace{1cm}
    {\scshape\Large Master thesis\par}
    \vspace{1.5cm}
    {\huge\bfseries Large-scale Multiscale Particle Models in Inhomogeneous Domains: Modelling and Implementation\par}
    \vspace{2cm}
    {\Large\itshape Omar Richardson\par}
    \vfill
    supervised by\par
    Prof. dr. habil.~Adrian \textsc{Muntean}\\
    Dr.~Andrei \textsc{Jalba}

    \vfill

    {\large \dateenglish\formatdate{11}{04}{2016}\par}
\end{titlepage}
\section*{Abstract}
In this thesis, we develop multiscale models for particle simulations in population dynamics. These models are characterised by prescribing particle motion on two spatial scales: \emph{microscopic} and \emph{macroscopic}.
At the microscopic level, each particle has its own mass, position and velocity, while at the macroscopic level the particles are interpolated to a continuum quantity whose evolution is governed by a system of transport equations.
This way, one can prescribe various types of interactions on a global scale, whilst still maintaining high simulation speed for a large number of particles. 
In addition, the interplay between particle motion and interaction is well tuned in both regions of low and high densities.

We analyse links between models on these two scales and prove that under certain conditions, a system of interacting particles converges to a nonlinear coupled system of transport equations.
We use this as a motivation to derive a model defined on both modelling scales and prescribe the intercommunication between them. 
Simulation takes place in inhomogeneous domains with arbitrary conditions at inflow and outflow boundaries. We realise this by modelling obstacles, sources and sinks.
Integrating these aspects into the simulation requires a route planning algorithm for the particles. Several algorithms are considered and evaluated on accuracy, robustness and efficiency.

All aspects mentioned above are combined in a novel open source prototyping simulation framework called \emph{Mercurial}. 
This computational framework allows the design of geometries and is built for high performance when large numbers of particles are involved. \emph{Mercurial} supports various types of inhomogeneities and global systems of equations.

We apply our framework to simulate scenarios in crowd dynamics.
We compare our results with test cases from literature to assess the quality of the simulations.

\tableofcontents
\chapter*{Acknowledgements}
By finishing this thesis, my time at the University of Technology in Eindhoven has come to an end.
In this final stage I would like to express my gratitude to all that have aided and guided me along the way.

I have deeply enjoyed my time as a student in this creative, supporting and free environment.
Most notably, I want to thank my supervisor Adrian Muntean for the unlimited encouragement in the process that led to this thesis and for harnessing my stubbornness into something fruitful.

I would like to express my gratitude to Andrei Jalba for his unconditional support. I really appreciate the pleasant discussions we had and the fact that you always took the time for unannounced visits with unformulated questions from your unplanned master student.

I also want to thank LOC7000 and especially Maarten van Lokven for his support in the internship that lead to this work, and for showing me all in crowds which could never be expressed in numbers.

Finally, I would like to thank Marko Boon. Apart from our brainstorming sessions, your help in setting up the simulation environment and your feedback, this would be a good moment to express special thanks to your everlasting patience in every report I was due.
\clearpage

\chapter{Introduction}
\begin{quote} 
    \centering 
    \emph{Perception is strong and sight weak. In strategy it is important to see distant things as if they were close and to take a distanced view of close things.}
    
    \raggedleft
    - Miyamoto Musashi, \emph{The Book of Five Rings}.
\end{quote}
\bigskip
Be it in physics, mathematics or biology, many dynamical systems are active on more than one time or length scale. 
If one restricts the scope of a model to only one of those scales, it is nearly impossible to capture all essential phenomena.
This thesis discusses a multiscale approach to particle models for inhomogeneous domains.
Our goal is by exploring particle models at more than one length scale, we are able to devise models better capable of representing complex dynamical systems.

Imagine modelling the circulation of blood in a human body. Blood transport through a single artery can be compared to laminar flow in a pipe.
This has served as a starting point for many models, which have aided greatly in our understanding of the vascular system.
However, if we want to examine the harmful effects of for instance sickle-cell disease we require a different perspective, in which effects like the size and shape of blood cells needs to be included.

Another example is found in predicting the rise and fall of sea levels. A relatively accurate model can be obtained by relating the tides to the position of the Moon.
But without looking at factors on different time scales, like the influence of the Sun and the motion of the Earth, this model is not able to account for spring tides, let alone predict them.

While these phenomena are relatively rare, they are all but irrelevant. If we want to include them, we need to adjust our perspective and look at so-called \worddef{multiscale} models.

Constructing multiscale models is not trivial. One needs to deal with various orders of accuracy and try to obtain relations between the measures and quantities on these scales.

Although this coupling generally requires more knowledge and creativity in the conception of the model, it is often possible to compose the model of simpler components, resulting in a synergetic ensemble.

In this thesis, we explore transport-dominated interactive particle models from a \worddef{microscopic} and a \worddef{macroscopic} perspective.
We look at different modelling approaches, how we can prescribe interaction and transport in inhomogeneous domain, and how we can couple the information from the different scales.
We proceed to build a system in which it is possible to combine both microscopic and macroscopic information in prescribing the evolution of a particle system.
This system is implemented in a simulation framework which is used in simulating crowd dynamics.

After reviewing several relevant contributions we implement two methods related to the various forms of transport discussed.
We discuss the results and validate the simulate by comparing the results to literature.
\section{Background} 
Multiscale modelling in science has only really started ascending from the end of the twentieth century (\cite{horstemeyer90}).
With more and more computational power available to scientists and businesses, it became feasible to build and analyse increasingly complex models.
Both manufacturing industries and scientists became capable of creating reliable simulations that saved a substantial amount of money and time in the design process and analysis of structural systems.

Nowadays, applications of multiscale models are ubiquitous: from the visualisation of flowing lava(\cite{jose16}) or battling armies (\cite{miarmy11}) in movies and video games to the evolution of planetary systems in astrophysics (\cite{zwart09})

Mathematically, simultaneously evaluating systems on multiple scales is often far from trivial.
Multiscale systems are often encountered in perturbation theory, in which a problem is approximated by a series of simpler problems distributed on different time and length scales which can be solved exactly.
A related area of research is homogenisation, where systems of partial differential equations are solved with highly oscillatory coefficients. A popular application is modelling transport through a porous medium.

Our multiscale modelling approach is based on coupling (stochastic) differential equations to transport equations. The differential equations prescribe the motion of point-masses in the plane, while the transport equation governs the evolution of the continuum quantity induced by these point masses.

\section{Own contribution}
We derive formal relations between the limiting behaviour of multispecied interaction-dominated particle systems and macroscopic diffusion-driven transport systems using self-organisation properties and \^Ito's lemma.
These relations and their arguments are found in the paper in Appendix~\ref{chap:paper}.
Furthermore, we show a relation between microscopic and macroscopic measures and show that the translation is mathematically sound using geometrical arguments.

These techniques are implemented in a novel simulation framework called \emph{Mercurial} and applied to various situations in crowd dynamics, the modelling and analysis of the behaviour of pedestrians.
\emph{Mercurial} is built with ease of implementation and high performance in mind. The framework is released as an open source software package to promote the reusability of software in research. More information on \emph{Mercurial} is found in Appendix~\ref{chap:mercurial}.

\section{Outline}
We start by introducing the microscopic and macroscopic modelling scales in Chapter~\ref{chap:analysis}. We show some links that connect models from different scales. In Chapter~\ref{sec:micro_macro} we continue by deriving a consistent translation of microscale information to its macroscale representant and vice versa.
We explore how these concepts come back in the field of modelling crowds in Chapter~\ref{chap:crowds} and elaborate on two models which are extensions of the modelling approaches discussed in Chapter~\ref{chap:analysis}. We implement these models and discuss their results. In Chapter~\ref{sec:comparison} we compare these different simulation techniques and validate them using experimental studies from literature.

\chapter{Multiscale modelling}
\label{chap:analysis}
This chapter introduces the two modelling scales we use throughout this thesis. 
In this framework, we define microscale systems as well as interactions between the microscopic quantities, and then give similar definitions for macroscopic systems.

In addition, this chapter is devoted to providing some mathematical arguments on why it is feasible to couple particle systems with continuum systems. 
We hope to find a connection that justifies our notion of multiscale systems by showing that specific kinds of interactive systems can be expressed on both microscale and macroscale.
We review some commonly used models of particle systems and continuum systems that fit our framework and show links between these models.
Finally, we discuss the notion of self-organisation, a measure of complexity in many of the systems we are interested in.

Before starting on the formulations, we introduce a notation we use throughout this thesis.
\section{Notation}
\label{sec:notation}
We denote a vector ($\vec{a}$) using a boldface script. This is also used for vectors that represent a discretised function.\\
We denote variables ($x$) or functions ($u: \mathbb{R}^n\to \mathbb{R}^m$) with a normal script.\\
Dot products of vectors $\vec{a}$ and $\vec{b}$ are denoted with $\langle\vec{a},\vec{b}\rangle$.

We use $\vec{x}=(x,y)$ to denote a vector in $\mathbb{R}^2$, $\Omega\subset \mathbb{R}^2$ to denote the spatial domain, and $[0,T]$ for some $T>0$ to denote the time domain.
\section{Microscopic formulations}
\label{sec:micro}
We start by defining transport systems at the microscale level.
Let domain $\Omega\subset \mathbb{R}^2$ be a connected space, simulated in time interval $[0,T]$. We look at $N$ particles with positions $\vec{x}_{i}(t)$ for $i=1,\dots,N$ on time $0\leq t \leq T\leq \infty$ for $i=1,\dots,N$. Let the mass of each particle be denoted by $m_i$.
We denote the velocity of particle $i$ with $\vec{v}_i(t)$, or $\dot{\vec{x}}_i(t)$ when we wish to emphasise the physical nature of the system.

In classical mechanics, the random motion of particle $i$ can be expressed by means of a stochastic differential equation commonly known as the Langevin equation, proposed for instance in \cite{lemons97}.
\begin{equation}
    \begin{split}
    m_i \ddot{\vec{x}}_i(t) &= -\lambda\dot{\vec{x}}_i(t) + B_i(t),\\
    \vec{x}_i(0) &= \xi_i,
    \end{split}
    \label{eq:langevin_}
\end{equation}
where $\lambda$ denotes the friction coefficient and $B_i(t)$ denotes the Gaussian noise particle $i$ experiences.
For all $i$, $\xi_i$ are independently and identically distributed on domain $\Omega$.
This is often modelled with a normal probability distribution having a correlation function of the form 
\begin{equation}
    \langle B_i(t),B_i(t')\rangle = 2\lambda k_B\delta_{t'}(t).
    \label{eq:correlation}
\end{equation}
Here $k_B$ represents Boltzmann's constant and $\delta_{t'}(t)$ is the Dirac distribution.
Relation \eqref{eq:correlation} implies that no correlation exists between time $t$ and $t'$ if $t\neq t'$.

In practical applications, $\delta$ is approximated by a smooth function to model some correlation for $t'-t$ small. More on how a Dirac distribution can be approximated is found in Section~\ref{sec:interpolants}.
\subsection{Interactive systems}
In interactive systems, a particle is aware of and responds to other particles. 
Since we focus on social systems, we make the assumption \worddef{particle interaction} is determined by inter-particle distance, and the interaction manifests itself in \worddef{attraction} and \worddef{repulsion}.
Other types of particle interaction include maintaining fixed distances (present in leader-follower pairs) and assymmetric interaction (present in predator-prey pairs).
We can include interactions based on particle distances by means of an \worddef{interaction potential} function $V$. 
Including $V$ in \eqref{eq:langevin_} and ignoring the friction component results into the system
\begin{equation}
    m_i \ddot{\vec{x}}_i(t) = \frac{1}{N}\sum_{j=1}^N\nabla V\left(\vec{x}_i(t) - \vec{x}_j(t)\right) + B_i(t).
    \label{eq:interaction}
\end{equation}
By choosing an increasing function for $V$, it is possible to model attraction between particles, while a decreasing $V$ models repulsion.
Combinations of these phenomena (like a preferred interparticle distance) can be modelled by manipulating the slope of $V$.

These systems are general enough to model many interactive particle phenomena. A well known example in crowd dynamics is proposed by \cite{helbing95} and is discussed in Section~\ref{sec:social_force}.
We apply a formulation of this system in \cite{duong16}.
Another example regarding population dynamics is treated by \cite{di13}.

\eqref{eq:interaction} is easily extended to multiple species. 
By prescribing different interaction potentials one is able to model more complex symbiotic phenomena, like predator-prey systems (\cite{ackleh13}) or juvenile-adult models (\cite{dieckmann95}). 

The shape of the interaction potential strongly influences the behaviour of the system. 
In Appendix~\ref{chap:paper} we model a system with particle repulsion using a smooth symmetric potential function with finite support. The potential function is defined as
\begin{equation}
    V_\eps(r) = \frac{1}{c\eps^2\sqrt{2\pi}} e^{-r^2/(c\eps)^2}
    \label{eq:fin_pot}
\end{equation}
The finite support and symmetry enable us to evaluate this system on macroscopic level for $N\to\infty$ and $\eps\to0$.
Because $\nabla V_\eps=0$, the repulsive force between particles weakens when their distances becomes zero. The stochastic component in \eqref{eq:interaction} prevents this anomaly from undermining repulsive behaviour.

\subsection{Philipowski's approach}
\label{sec:philip}
In \cite{philipowski07}, Philipowski shows how a particle system similar to \eqref{eq:interaction} conditionally converges to a macroscopic density that satisfies the porous medium equation.
With some minimal adaptions, this result can be used to show convergence of our particle system to macroscopic transport equations as well.
First, we follow his line of arguments. 

Assume the distance-interaction dominated system defined in \eqref{eq:interaction} in $d$ dimensions, with an interaction potential $V$.
An asymptotic scaling $V^\eps: \mathbb{R}^d\to \mathbb{R}$ is introduced, defined as 
\begin{equation}
    V^\eps(x) := \frac{1}{\eps^d}V(x/\eps),
    \label{}
\end{equation}
where $\eps>0$ scales interaction range.
Introducing a diffusion coefficient $\delta>0$ we obtain the system
\begin{equation}
    \dot{\vec{x}}_i(t) = -\frac{1}{N}\sum_{j=1}^N\nabla V^\eps\left(\vec{x}_i(t) - \vec{x}_j(t) \right) + \delta B_i(t).
    \label{eq:philip}
\end{equation}
Before we state the results of his contribution, we introduce the porous medium equation and the empirical measure:
Let the classical porous medium equation for density $u(x,y,t):\Omega\times[0,T]\to \mathbb{R}$ be defined as 
\begin{equation}
    \begin{split}
        \frac{\partial u}{\partial t} &= \frac{1}{2}\Delta(u^2),\\
        u(\cdot,0) &= u_0.
    \end{split}
    \label{eq:pme}
\end{equation}
By examining the \worddef{empirical measure} $\mu(t):[0,T]\to\Omega$ defined as 
\begin{equation}
    \mu(t) = \frac{1}{N}\sum_{i=1}^N\delta_\vec{x}(t),
    \label{eq:emp_measure}
\end{equation}
Philipowski examines the limit of the system when $N\to\infty$, $\eps\to0$ and $\delta\to0$ such that $N \gg \frac{1}{\eps}$ and $\eps \ll \delta$ under the following assumptions:

Let $W_{n,1}^2\left( \mathbb{R}^d \right)$ be the weighted Sobolev function space. This space consists of all $n$ times weakly differentiable functions $f:\mathbb{R}^d\to \mathbb{R}$ with compact support for $f$ together with the partial derivatives.\\
Take $u_0 \in W_{n,1}^2\left( \mathbb{R}^d \right)$ for all $n\in \mathbb{N}$. The initial condition $u_0$ is defined as
\begin{equation}
    u_0 = \mu(0) = \frac{1}{N}\sum_{i=1}^N\delta_{\vec{x}_i(0)}.
    \label{}
\end{equation}
For $\vec{r}\in \mathbb{R}^d$, assume the following properties hold for $V$.
\begin{itemize}
    \item $V(\vec{r}) = V(\vec{-r})$.
    \item $V\geq0$.
    \item $\int_{ \mathbb{R}^d } V(\vec{r})d\vec{r} = 1$.
    \item $\int_{ \mathbb{R}^d } ||\vec{r}||^nV(\vec{r})d\vec{r} < \infty$.
\end{itemize}
In \cite{philipowski07}, it is shown that under these conditions, both the empirical measure of the particle system as the distribution of the particles converges weakly to a measure that solves \eqref{eq:pme}.

We sketch the steps taken to prove this result.
\begin{enumerate}
    \item Prove that the particle system in \eqref{eq:philip} is well-posed.
    \item Prove that \eqref{eq:non_lin_stoc_1} is obtained from \eqref{eq:philip} if $N\to\infty$ by enforcing $N\gg\frac{1}{\eps}$.
    \item Prove that \eqref{eq:non_lin_stoc_2} is obtained from \eqref{eq:non_lin_stoc_1} if $\eps\to0$ using a fixed point argument and by enforcing $\eps \gg \delta$.
    \item Finally, prove that \eqref{eq:non_lin_stoc_2} converges to a weak solution of \eqref{eq:pme} if $\delta\to0$ using It\^o's lemma.
\end{enumerate}
The intermediate systems are shown below.
\begin{equation}
    \begin{split}
        \dot{\bar{\vec{x}}}_i(t) &= -\left( \nabla V^\eps\ast u^{\eps,\delta} \right)\left( \bar{\vec{x}}_i(t),t \right) + \delta B_i(t),\\
        P\left( \dot{\bar{\vec{x}}}_i(t) \in B(\vec{s},r) \right) &= \int_{B(\vec{s},r)}u^{\eps,\delta}(\vec{s},t)d\vec{x},
    \end{split}
    \label{eq:non_lin_stoc_1}
\end{equation}

\begin{equation}
    \begin{split}
        \dot{\bar{\vec{x}}}_i(t) &= -\nabla u^\delta\left( \bar{\vec{x}}_i(t),t \right) + \delta B_i(t),\\
        P\left( \bar{\vec{x}}_i(t) \in B(\vec{s},r) \right)&= u^\delta(\vec{s},t)|B(\vec{s},r)|,\\
        u^\delta &\in \mathbf{C}^{1,2}_b\left( [0,T]\times \mathbb{R}^d \right)\quad\forall T\geq 0.
    \end{split}
    \label{eq:non_lin_stoc_2}
\end{equation}
Here $B(\vec{s},r)\subset \mathbb{R}^d$ denotes a control volume, a ball with centre $\vec{s}$ and arbitrarily small radius $r>0$.

\subsection{Domain potentials}
\label{sec:domain_pots}
Microscopic systems need not only be defined by interaction. Especially when modelling inhomogeneous geometries, the domain itself plays an significant role in influencing particle motion.
When particle motion is not dominated by distance-based interactions but by other (spatially determined) factors, we require a different formulation of the particle system, and a different interpretation for the potential function. 
Let $\Phi:\Omega\to \mathbb{R}^2$ be a \worddef{domain potential}.
We require $\Phi$ to be differentiable everywhere in $\Omega$. 
Then the (possibly non-linear) system that governs the particle motion is given by
\begin{equation*}
    m \ddot{\vec{x}}_i(t) = g\left(\nabla\Phi(\vec{x})\right) + \eta(t).
\end{equation*}
Here $g$ is a function that converts the potential gradient into a motion direction. A popular choice is to pick $g$ as a normalizing function. The effects of such a system are explored in Section~\ref{sec:potential_planner}.

This potential function was used in \cite{treuille06} to model the reactionary nature of particles (in this case pedestrians in a crowd) to their environment.\\
In \cite{helbing99} it is shown that when such a potential function can be formulated, one can measure the self-organisation in such a system. 
Moreover, when the system satisfies other conditions, like symmetry in interaction, it is shown that this self-organisation leads to optimality in terms of the energy spent in moving.
Since this derivation is provided in a macroscopic framework, we describe it in Section~\ref{sec:macro:domain_pots}.

\section{Macroscopic formulations}
\label{sec:macro}
In this section, we give a definition of a quantity defined on macroscale, and translate the concepts introduced for the microscopic quantities to their macroscopic alternatives.

The macroscopic formulations are typically defined by fluid-dynamic-like representations. More precisely, they are defined in terms of mass and momentum. We introduce a density field $\rho: \Omega\times[0,T] \to \mathbb{R}$ and a velocity field $v = \begin{pmatrix}v_x\\v_y\end{pmatrix}:\Omega\times[0,T] \to \mathbb{R}^2$.
The collection of particle masses is represented in $\rho$, while the velocities are represented in $v$.
The evolution of $\rho$ and $v$ is governed by the conservation law of mass, resulting in the \worddef{continuity equation}:
\begin{equation}
    \frac{\partial \rho}{\partial t} = - \div{\rho v}.
    \label{eq:cont_equation}
\end{equation}
A derivation of \eqref{eq:cont_equation} starting from the conservation of mass can be found in textbooks on fluid dynamics.

Velocity field $v$ can be specified in various ways.
For us, the most interesting choice is to pick functions that approximate particle systems with interacting potentials or domain potentials.

\subsection{Interaction potentials}
Modelling distanced-based interaction on a macroscopic level is possible by coupling the velocity field to the density. In this section we discuss a technique that can be used to incorporate repulsion and attraction.

We model repulsion by imposing Darcy's law on the macroscopic transport.
Darcy's law states a relation between flux $q$ and pressure $p$ in a porous medium with permeability parameter $\kappa$.
This relation is defined as
\begin{equation}
    q = \frac{-\kappa}{\mu}\nabla p,
    \label{eq:darcy_flux}
\end{equation}
where $\mu$ denotes the kinematic viscosity of the fluid.
We assume a relation between pressure and density satisfying
\begin{equation}
    p(\rho) = \left(\frac{\rho}{\rho_{\min}}\right)^\alpha,
    \label{eq:pressure_dens}
\end{equation}
for some $\alpha \in \mathbb{N}^+$ and normalizing constant $\rho_{\min}$.\\

This causes high densities to yield a pressure which reduces those densities and in that way emulates particle repulsion. 
We provide an detailed elaboration on how to model and implement a repulsive interaction potential in simulations in Section~\ref{sec:crowds2}.

Darcy's law also provides us a way to model attraction of particles. Reversing the sign of the flux in \eqref{eq:darcy_flux} we obtain a system where particles are attracted to locations of high densities.
\subsection{Domain potentials}
\label{sec:macro:domain_pots}
A domain potential can be modelled on macroscale by incorporating it in the flux term.
Assuming the domain potential $\Phi$ from Section~\ref{sec:domain_pots} and then plugging it in the continuity equation, then mass flow is propagated along the steepest descent of the potential function.

The motion of active particles in inhomogeneous systems is often more complex and calls for a more elaborate transport prescription: \worddef{intelligent transport}.
One form of intelligent transport can be induced by ensuring $\Phi = \Phi(x,y,\rho)$. This has been explored in \cite{hughes02} by limiting maximum speed and imposing constraints on the maximum density.
If we reformulate his system of governing equations (expanded on in Section~\ref{sec:macro_literature}) we obtain
\begin{equation}
    \begin{split}
        \frac{\partial \rho}{\partial t} &= - \div{-\rho \nabla\Phi},\\
    \Phi(x,y;\rho) &= g(\rho)f^2(\rho)\varphi(x,y).
    \end{split}
    \label{eq:hughes_pot}
\end{equation}
In \eqref{eq:hughes_pot}, $\varphi$ represents the base domain potential and $f$ and $g$ limit speed and attraction for high densities. 
In Section~\ref{sec:crowds1}, we show how to model geometries with a potential function and examine the performance of such a method.
The domain potential is illustrated in Figure~\ref{fig:narrowdf1}.

\subsection{Self-organisation and the porous medium equation}
\label{sec:self_organisation}
In \cite{helbing99}, it is shown that under certain conditions, it is possible to formulate a Lyapunov functional $F(t)$ for the overall transport system defined as
\begin{equation}
    F(t) = \int_\Omega\rho(\vec{x},t)\Phi(\vec{x},t)d\vec{x}.
    \label{eq:lyapunov}
\end{equation}
When $\Phi$ is non-negative everywhere, $F(t)$ represents a measure of \worddef{self-organisation} of the system at time $t$. The lower $F(t)$, the less energy is spent in transport and the more optimal the system performs. 

Self-organising systems are therefore identified by a decreasing Lyapunov functional.

As an example inspired by \eqref{eq:pressure_dens}, let $\Phi(x,y):=\rho(x,y)$. This models a repulsive system without any directional preference for the particles for a spatially homogeneous system. The resulting system becomes the porous medium equation, equivalent to \eqref{eq:pme}.
\begin{equation*}
    \frac{\partial \rho}{\partial t} = \div{ \rho \nabla \rho }.
\end{equation*}
The resulting Lyapunov functional becomes
\begin{equation*}
    F(t) = \int_\Omega\rho(\vec{x},t)\rho(\vec{x},t)d\vec{x}.
\end{equation*}
We show the self-organising property of this system by proving $F(t)$ is non-increasing with respect to $t$.
\begin{equation*}
    \begin{split}
        F'(t) &= \frac{d}{dt}\int_\Omega\rho^2d\vec{x},\\
        & = 2\int_{\Omega}\rho \frac{\partial\rho}{\partial t}d\vec{x}, \\
        &= 2\int_{\Omega} \rho \div{\rho\nabla\rho} d\vec{x}, \\
        &= 2\int_{\partial\Omega}\rho \left(\rho\nabla\rho \right) d\vec{x} - 2\int_{\Omega} \nabla \rho \left(\rho\nabla\rho  \right)\cdot d\vec{x}, \\
        &= 0-2\int_{\Omega}  \rho\left( \nabla\rho  \right)^{2}d\vec{x}  \leq 0.
    \end{split}
    \label{}
\end{equation*}
In this derivation, we assume $\rho(\vec{x},t)\geq0$ for $\vec{x}\in\Omega$ and $\nabla\rho(\vec{x},t)=0$ for $\vec{x}\in\partial\Omega$, for $t\geq0$.
This assumes the density in the system cannot become negative and the system conserves mass by allowing no transport around the boundaries of the domain.
In addition, because $F(t)\geq0$ for all $t$, the system in \eqref{eq:pme} converges to an optimal equilibrium.
\section{Exploring the limit behaviour of large particle systems}
\label{sec:paper}
In this section, we have seen at least two ways exist of modelling transport and interaction phenomena. On microscale, this can be done by specifying a set of particles moving according to a set of stochastic differential equations. 
On macroscale, this can be done by prescribing the evolution of a quantity with a continuity equation.
What remains to be shown is how the microscale system relates to the macroscale system.

More specifically, if we focus on systems defined by an interaction potential, is it then possible to view the macroscale system as an asymptotic representation of the microscale system?
The derivations from \cite{philipowski07} provided in Section~\ref{sec:philip} show that under certain conditions microscopic models can be reformulated as a porous medium equation.
From the interaction potential formulation in Section~\ref{sec:self_organisation}, we see that the porous medium equation can be viewed as a macroscopic transport equation of two populations driven by repulsive forces.

In \cite{duong16}, we investigate further this relation by modelling an advection-diffusion transport system involving two populations and showing its equivalence to a two-species particle model.
We support these findings by simulating both systems and comparing the results.
This paper is included in Appendix~\ref{chap:paper}.

Our findings in Appendix~\ref{chap:paper} are connected to the crowd dynamics applications described in Chapter~\ref{chap:crowds}. 
In that chapter, we discuss using an interaction potential to model repulsion.
But since we want to avoid evaluating this interaction on a microscopic level, we incorporate this into the macroscopic model using an advection based transport equation based on Darcy's law, much like the system discussed in Appendix~\ref{chap:paper}.

Also, by coupling and implementing particle models on multiple scales, we gain computational efficiency without sacrificing the simplicity of our model definition.

\chapter{Conversion between micro and macro scales}
\label{sec:micro_macro}
The previous chapter discusses formal arguments on the link between microscopic and macroscopic models.
We have seen that a microscopic system is defined by the mass and velocities of its particles, while a macroscopic system is defined by the density and velocity field of the continuum quantity.
In this chapter we treat implementation aspects involved in describing the interaction between these representations.

First, inspired by the distance-based interaction formulations in Section~\ref{sec:micro} we introduce an interpolation-based method: smooth particle hydrodynamics (SPH). We show how we apply this method to the particle representation to obtain a continuum quantity, translating thus the information from microscale to macroscale.
We use SPH only as an interpolation method. Once we obtain a measure of the state of the system on the macroscopic scale, we proceed to compute the propagation on a grid.
Computing the evolution of continuum quantities with grid-based methods is mathematically much better understood than using mesh-free SPH.

Our approach is loosely based on the 'Particle In Cell'-method as described in \cite{zhu05}.

We show a relation between the minimum distance as a measure on a microscopic level, and the maximum density measured on a macroscopic level.\\
Finally, we elaborate on the bilinear interpolation used to translate macroscopic information back to a microscopic level.
All of these techniques are used in the simulations presented in Chapter~\ref{chap:crowds}.

\section{Smooth particle hydrodynamics}
\label{sec:sph}
\worddef{Smooth particle hydrodynamics}, originally proposed in \cite{gingold77}, is a numerical simulation technique where the motion of fluids (or gases) are modelled by the evolution of a set of discrete particles.
While classical fluid-dynamic models use a fixed grid on which each time step the state variables are approximated, SPH is meshfree and approximates the fluid properties at moving interpolation points using so-called \emph{kernel interpolants}.
These interpolation points represent (a collection of) fluid particles moving with the advection. In particle systems, these interpolation points represent the actual particles.
Extended introductions and formal derivations of the SPH-method are found in \cite{monaghan05} and \cite{violeau12}.

The advantages of the SPH method for particle-like simulations are numerous: mass conservation is easy to achieve, advection based transport can be modelled exact. 
Also, because the SPH interpolation points represent presence of mass, the resolution of the approximation increases in locations with higher densities, and vanishes where no mass is present.
Finally, SPH provides a nice representation of the state variables in both a microscopic sense (as particles with a specified position, velocity and mass) and a macroscopic sense (as interpolated density and velocity fields).

One disadvantage of SPH is the numerical diffusion it introduces. Because of the continuous interpolations, it is difficult to represent and maintain discontinuities which sometimes are expected to occur in the evolution of the system.
Another disadvantage is the difficulty in modelling incompressibility in fluids. Due to accumulating numerical time integration errors, a divergence-free velocity field is difficult to enforce.

\section{Kernel interpolants}
\label{sec:interpolants}
The information from the discrete set of particles is transferred to the continuum level with the use of \emph{kernel interpolants}.
These are functions that approximate the Dirac distributions that account for the positions of the point particles.
The best known example is the Gaussian function, the density function of the normal distribution.

Following the definitions in \cite{monaghan05} and \cite{violeau12}, a kernel interpolant $\psi: \mathbb{R}^2\to\mathbb{R} $ must satisfy the following properties for all $\vec{r}\in \mathbb{R}^2$:
\begin{enumerate}
    \item $\int\psi(\vec{r})d\vec{r} = 1$.
    \item $\psi(\vec{r}) =\psi(-\vec{r})$.
    \item $\nabla \psi(0) = 0$.
\end{enumerate}
We also demand our interpolation kernels to be non-negative everywhere and have finite support (to increase computational efficiency).

Let $\delta_{\vec{x}_0}$ be the Dirac distribution with value $\infty$ in $\vec{x}_0$ and 0 everywhere else. Let $f\ast g$ be the convolution of $f$ and $g$, defined as 
\begin{equation}
    (f\ast g)(\vec{x}) := \int f(\vec{r})g(\vec{x}-\vec{r})d\vec{r}.
\label{eq:convolution}
\end{equation}
Particle $a$ is fully defined by its position $\vec{x}_a$, velocity $\vec{v}_a$ and mass $m_a$.
We model the (discrete) particle density $\tilde{\rho}_a$ with a Dirac distribution:
\begin{equation*}
    \tilde{\rho}_a(\vec{x}) = m_a\delta_{\vec{x}_a}(\vec{x}).
\end{equation*}
When smoothing this particle with the kernel interpolant, we retrieve the continuous density $\rho_a$:
\begin{equation*}
    \rho_a(\vec{x}) = (\tilde{\rho}_a\ast \psi)(\vec{x}) = m_a\int \delta_{\vec{x}_a}(\vec{r})\psi(\vec{x}-\vec{r})d\vec{r}.
\end{equation*}
We need to establish an interaction range. The straightforward way to address this is with a \emph{smoothing length} $h>0$.
Drawing a parallel to the normal distribution, the smoothing length corresponds to the standard deviation of a Gaussian distributed random variable.

The radial symmetry implies we can express the kernel interpolant as a one-dimensional function $f$ satisfying
\begin{equation*}
    \psi(\vec{r}) = \psi(||\vec{r}||) = f(r),
\end{equation*}
for $r = ||\vec{r}|| \in \mathbb{R}$.

Commonly used kernels include:
\begin{enumerate}
    \item The Gaussian function $f_G(r;h) := \frac{1}{\sqrt{2\pi}h^2}e^{-\frac{r^2}{2h^2}}$.
    \item The so-called B-spline with polynomial degree 4:

        $f_4(r;h) := \frac{384}{1199\pi h^2}
        \begin{cases}
            \left( \frac{5}{2} - q \right)^4 - 5\left( \frac{3}{2} - q \right)^4 + 10\left( \frac{1}{2}-q \right)^4 &\mbox{ for } 0 \leq q \leq0.5\\
            \left( \frac{5}{2} - q \right)^4 - 5\left( \frac{3}{2} - q \right)^4 &\mbox{ for } 0.5\leq  q \leq 1.5\\
            \left( \frac{5}{2} - q \right)^4 &\mbox{ for } 1.5\leq  q \leq 2.5\\
            0 &\mbox{ for } 2.5\leq q 
        \end{cases},$\\
        with $q=r/h$.
    \item The Wendland kernel $f_W(r;h) := \begin{cases}
                \frac{7}{4\pi h^2}\left(1-\frac{r}{2h}\right)^4(1+\frac{2r}{h})& 0\leq r \leq 2h\\
                0 & 2h < r 
            \end{cases}.$
\end{enumerate}
In our simulations, we pick the Wendland kernel.
This choice is motivated by the convenient support radius of $2h$ and by the fact it can be reformulated to a single algebraic expression:
\begin{equation}
    f_W(r;h) = \frac{7}{4\pi h^2}\max\left\{1-\frac{r}{2h},0\right\}^4\left(1+\frac{2r}{h}\right).
\end{equation}
This provides a welcome computational benefit over more complex kernel interpolants, since interpolation is a common operation in the simulation.
We plot the Wendland kernel in Figure~\ref{fig:kernel_interpolant}. 
\begin{figure}[h]
    \centering
    \includegraphics[width=0.5\textwidth]{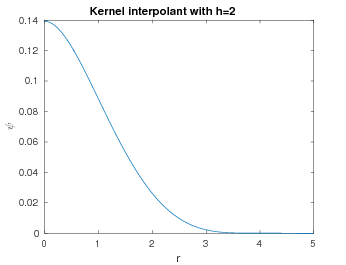}
    \caption{Shape of the Wendland kernel interpolant.}
    \label{fig:kernel_interpolant}
\end{figure}
\section{Smoothing the particles}
Based on the definitions from Section~\ref{sec:interpolants}, we are able to translate the microscopic information.
For convenience, we assume all particles to have equal mass $m$. Let $\Omega \subset \mathbb{R}^2$ be the simulation domain containing $N$ particles $a_1,\dots,a_N$.
Then for all $\vec{x} \in \Omega$ the density field $\rho: \Omega \mapsto \mathbb{R}$ is given by
\begin{equation}
    \rho(\vec{x}) = \sum_{i=1}^N\rho_{a_i}(\vec{x})= \sum_{i=1}^Nm\left( \delta_{\vec{x}_{a_i}}\ast \psi  \right)(\vec{x}),
    \label{eq:dens_sum}
\end{equation}
and the velocity field $v: \Omega \mapsto \mathbb{R}^2$ is given by 
\begin{equation}
    v(\vec{x}) = \frac{\sum_{i=1}^N\vec{v}_{a_i}\left( \delta_{\vec{x}_{a_i}}\ast \psi  \right)(\vec{x})}{\rho(\vec{x})}.
\end{equation}
Note the resemblance between \eqref{eq:dens_sum} and the empirical measure in \eqref{eq:emp_measure}.
Before interpolation, we partition $\Omega$ in square cells with sides of length $\frac{4}{3}h$. For each cell, we collect the particles it contains as well as the particles of each of its eight neighbours and compute their contribution to the density in that cell. Coupling the cell size to the smoothing length ensures that all contributing particles are found in a one-cell radius.
\begin{figure}[h]
	\centering
	\includegraphics[width=0.65\textwidth]{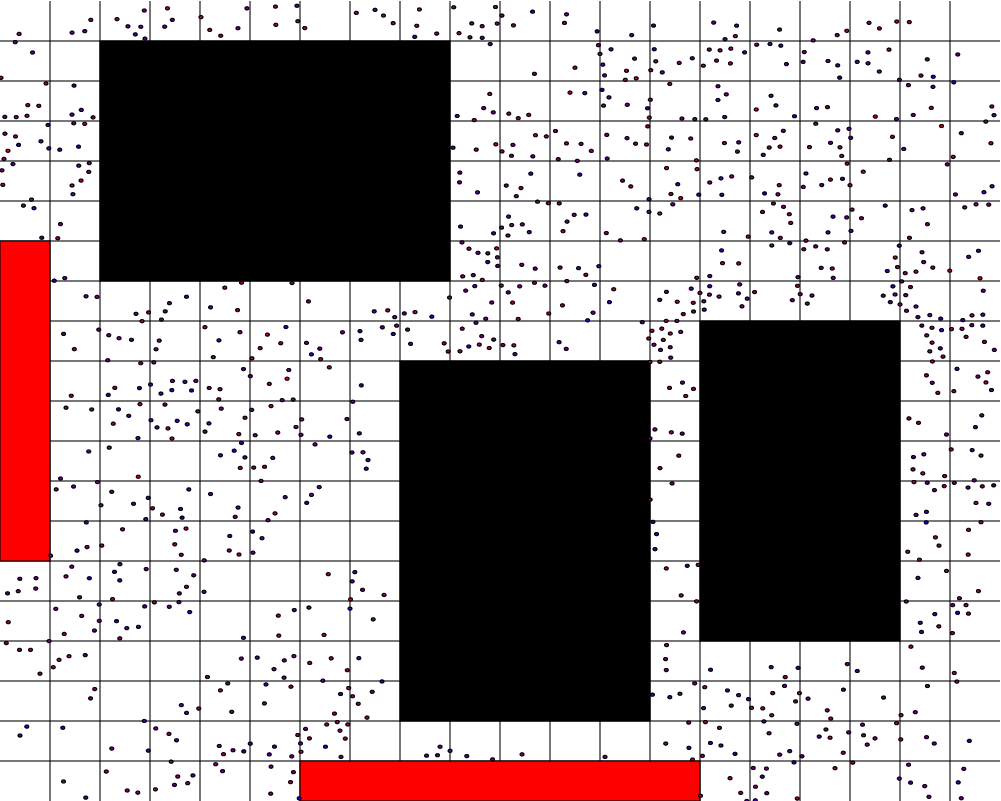}
    \caption{Domain representation with particles as dots, solid boundaries in black and outflow boundaries in red.}
	\label{fig:cont_ex_scene}
\end{figure}
\begin{figure}[h]
	\centering
	\includegraphics[width=0.85\textwidth]{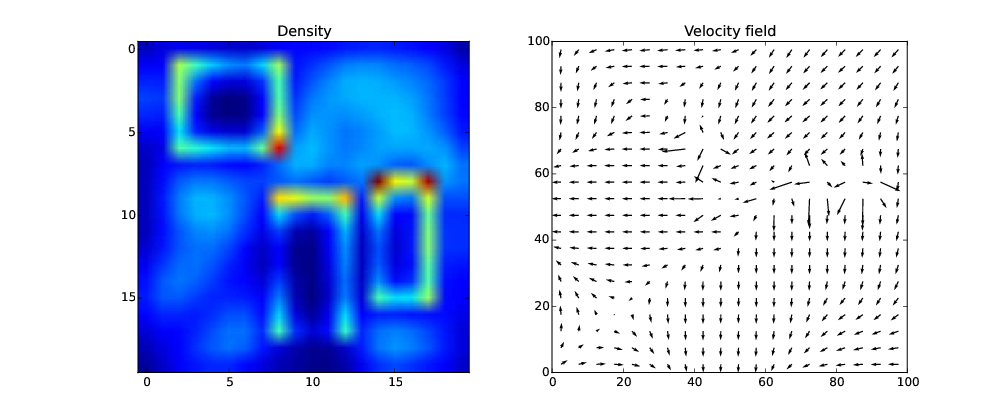}
    \caption{Macroscopic representation of the domain in Figure~\ref{fig:cont_ex_scene}: density and velocity.}
	\label{fig:cont_flow_field}
\end{figure}\\
Figure \ref{fig:cont_ex_scene} shows an example of a domain with one thousand particles and three obstacles. Figure \ref{fig:cont_flow_field} shows the corresponding density and velocity fields.

\section{Discretisation of the state variables}
\label{sec:discretisation}
We require a numerical approximation of the macroscopic density $\rho$ and velocity $v$. To obtain this, we first discretise the domain $\Omega$. From now on, we assume $\Omega$ to be rectangular.
To establish a spatial discretisation, let $N_x,N_y \in \mathbb{N}$ the number of cells in $x$ and $y$-direction on a equidistant grid. 
Let $\Delta x,\Delta y \in \mathbb{R}^+$ be the cell size in $x$ and $y$-direction, such that each cell $c_{(i,j)} = [(i-1)\Delta x,i\Delta x]\times[(j-1)\Delta y,j \Delta y]$ for all $i=1,\dots,N_x$ and $j=1,\dots,N_y$. 
This ensures $\bigcup_{i,j}c_{(i,j)} = \Omega$.\\
The time domain is discretised with a step size $\Delta t>0$.

We discretise the scalar and vector fields corresponding to the grid. We index the cells and the fields from the bottom left, corresponding to to the Cartesian indexing used for the fields. This is illustrated in Figure \ref{fig:indexing}.
\begin{figure}
	\centering
	\begin{tikzpicture}
		\draw[step=2cm,black,thin] (0,0) grid (5,5);
		\draw[thick,->] (0,0) -- (0,5) node[anchor=south east] {$y$};
		\foreach \x in {1,2,3}
			\pgfmathparse{\x*2-1}
			\edef\cx{\pgfmathresult}
			\foreach \y in {1,2,3}
				\pgfmathparse{\y*2-1}
				\edef\cy{\pgfmathresult}
				\pgfmathparse{int(\y-1)}
				\edef\ym{\pgfmathresult}
				\draw (\cx cm,\cy cm) node[anchor=center] {$\begin{matrix}\x+\ym N_x\\ (\x,\y)\end{matrix}$};
	\end{tikzpicture}
	\captionof{figure}[Caption]{Orientation of the coordinates and fields in the scene with corresponding $\left(\begin{matrix}\text{1D} \\ \text{2D} \end{matrix}\right)$ cell indexing of the scene. The 1D notation is used to orient corresponding vectors and matrices, while the 2D notation is used for legibility and element-wise computations.}
	\label{fig:indexing}
\end{figure}
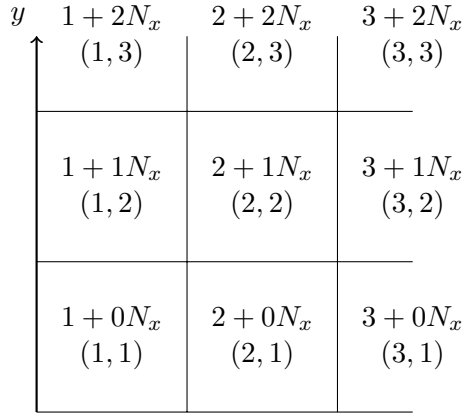\\
This way, we discretise fields to vectors for any time step $k\in\mathbb{N}$. 

For all $i=1,\dots,N_x$ and $j=1,\dots,N_y$, the relation between a field $u$ and its discrete representation $\vec{u}^k_{(i,j)}$ is defined as
$$\vec{u}^k_{(i,j)} := u((i-0.5)\Delta x,(j-0.5)\Delta y,k\Delta t).$$
The entries of vector $\vec{u}$ are approximations of $u$ in the centres of the corresponding cells.

So for any discrete scalar field $\vec{u}$, the value in cell $c_{(i,j)}$ is denoted as $\vec{u}_{i+(j-1)N_x}$. To increase legibility, we introduce a corresponding notation for indexing the scalar field in $(i,j)$:
\begin{equation*}
	\vec{u}_{(i,j)}:=\vec{u}_{i+(j-1)N_x}.
\end{equation*}

Finally, we give an expression for the density approximation in  $\vec{x}\in c_{(i,j)}$ for $i=1,\dots,N_x$ and $j=1,\dots,N_y$.
\begin{equation*}
    \gvec{\rho}(\vec{x}) = \sum_{i=1}^n m\psi\left( \left|\left|\vec{x}_{a_i} - \begin{pmatrix} (i-0.5)\Delta x\\(j-0.5)\Delta y \end{pmatrix}\right|\right| \right).
\end{equation*}
The velocity field can be expressed analogously.
\section{Determining the maximum density}
\label{sec:maxdens}
If we wish to couple a microscopic model with a macroscopic model, then we need a notion of maximum density derived from microscopic quantities. 
If we fix particle mass $m$ and size $r$, then we are able to compute a maximum density by prescribing a minimal distance $d_{\min}$ between particles using geometric arguments.
A derivation is shown below:

\begin{newthm}
	\label{thm:close}
    Let $B(\vec{x},s)\subset \mathbb{R}^2$ denote a circle with centre $\vec{x}$ and radius $s$.\\
    Assume a collection of $N$ circles $B(\vec{x}_1,r+d_{\min}),\dots,B(\vec{x}_N,r+d_{\min})$ such that 
    \begin{equation*}
        \bigcup_{i=1}^N B(\vec{x}_i,r+d_{\min})\subset \Omega\text{ and } B(\vec{x}_i,r+d_{\min}) \cap  B(\vec{x}_j,r+d_{\min}) = \emptyset\mbox{ if }i \neq j.
    \end{equation*}
    In addition, assume $\Omega$ is much larger than $B(\cdot,r+d)$.
	The closest packing a group of particles can attain is a triangular structure.
\end{newthm}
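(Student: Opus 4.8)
<br>

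The statement to prove is essentially the classical result that the densest packing of equal circles in the plane is the hexagonal (triangular) packing. The proof plan is:

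**Approach.** The claim is that among all configurations of non-overlapping equal disks of radius $R := r + d_{\min}$ inside a large domain $\Omega$, the one maximizing the number of disks per unit area (hence the particle density) is the one whose centres form a triangular lattice. Since $\Omega$ is assumed much larger than a single disk, boundary effects are negligible and the question reduces to a local/asymptotic density computation. I would compare the area efficiency of candidate lattice packings and then invoke (or sketch) the optimality of the hexagonal arrangement.

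**Key steps, in order.** First I would reduce the global statement to a local one: because $\Omega$ is much larger than $B(\cdot, R)$, the maximum number of disks is (up to lower-order boundary corrections) $|\Omega|$ times the maximal \emph{packing density} $\eta$, the supremal fraction of the plane coverable by disjoint congruent disks. So it suffices to show the triangular arrangement achieves the largest $\eta$. Second, I would compute $\eta$ for the triangular lattice directly: place disk centres at distance $2R$ apart; the fundamental cell is a rhombus (two equilateral triangles) of area $(2R)^2 \sin(60^\circ) = 2\sqrt{3}R^2$ containing one disk of area $\pi R^2$, giving $\eta_{\triangle} = \pi/(2\sqrt 3) = \pi\sqrt 3/6 \approx 0.9069$. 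For contrast I would do the same for the square lattice, whose cell has area $(2R)^2 = 4R^2$, yielding $\eta_{\square} = \pi/4 \approx 0.7854 < \eta_\triangle$, to show the triangular structure is strictly better among the natural alternatives. Third, to argue that no configuration beats the triangular one, I would use a Voronoi/Delaunay argument: assign to each disk its Voronoi cell; each cell is a convex polygon containing its disk, and a short argument shows every such polygon has area at least that of the regular hexagon circumscribing a disk of radius $R$, namely $2\sqrt 3 R^2$ — with equality exactly when the cell \emph{is} that regular hexagon, which tiles the plane precisely in the triangular (hexagonal) packing. Summing over cells gives the area bound and hence the density bound.

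**Main obstacle.** The genuinely hard part is the third step — the Voronoi cell area lower bound, i.e. proving $\mathrm{area}(\text{Voronoi cell}) \ge 2\sqrt3\, R^2$ with equality only for the regular hexagon. This is exactly the content of the Thue–Fejes Tóth theorem on circle packing, and a fully rigorous proof requires the convexity/isoperimetric-type estimate on each Voronoi polygon (showing a convex polygon that contains a disk of radius $R$ and whose edges are at distance $\ge R$ from the centre has area minimized by the regular hexagon). In the context of this thesis, I would most likely not reprove Thue–Fejes Tóth from scratch but instead cite it, and present the explicit triangular-lattice computation as the constructive half, noting that the boundary-negligibility assumption ($\Omega \gg B(\cdot,R)$) is what lets the asymptotic packing density govern the finite count. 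The remaining routine steps — converting the optimal packing density into a maximum particle density via $\rho_{\max} = m \,\eta_\triangle / (\pi R^2) = m/(2\sqrt3\, R^2)$ with $R = r + d_{\min}$ — are straightforward bookkeeping.
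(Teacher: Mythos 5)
Your proposal is correct in substance, but it is worth knowing that the paper does not actually prove this statement at all: its ``proof'' consists of a single remark that the result goes back to Lagrange (1773) and that a recent simple proof can be found in the cited reference (Fukshansky), and the argument is explicitly omitted. So your sketch is genuinely more than what the thesis offers. Your route --- reduce to asymptotic packing density using the assumption that $\Omega$ is much larger than a single disk, compute $\eta_{\triangle}=\pi/(2\sqrt{3})$ for the triangular lattice, contrast with the square lattice, and then bound general configurations via Voronoi cells with the Thue--Fejes T\'oth area estimate $\mathrm{area}\geq 2\sqrt{3}R^2$ per cell --- is the standard modern proof of the full (non-lattice) statement, and you correctly identify the Voronoi-cell lower bound as the only genuinely hard ingredient, which you would cite rather than reprove; that is a defensible and honest division of labour. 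One point of precision worth adding if you write this up: Lagrange's 1773 result, which the paper invokes, only establishes optimality among \emph{lattice} packings, whereas the claim as used in the thesis (an upper bound on the density of an arbitrary admissible particle configuration) needs the stronger Thue--Fejes T\'oth theorem for general packings --- exactly the step your Voronoi argument supplies. Your closing conversion $\rho_{\max}=m/(2\sqrt{3}\,R^2)$ with $R=r+d_{\min}$ also matches the formula the paper derives in its subsequent theorem, so the bookkeeping is consistent with the text.
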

\begin{proof}
	The proof of this theorem can be omitted. It was proven by Lagrange in 1773 and a recent simple proof is found in \cite{fukshansky09}.
\end{proof}
\begin{figure}[h]
	\centering
	\begin{minipage}{.45\textwidth}
		\centering
		\begin{tikzpicture}
		\def\h{1.732}
		\def\x{0.577}
		\def\r{0.1}
		\def\d{0.5}
		\foreach \i in {0,...,5}{
			\draw ({(1+2*\i)*\d},-\h*\d) circle(\r);
			\draw (2*\d*\i,0) circle(\r);
			\draw ({(1+2*\i)*\d},\h*\d) circle(\r);
			\draw (2*\i*\d,2*\h*\d) circle(\r);
			\draw ({(1+2*\i)*\d},3*\h*\d) circle(\r);
		}
		\draw (2*\d,0*\d)--(6*\d,0*\d)--(4*\d,2*\h*\d)--(2*\d,4*0*\d);
		\draw (6*\d,0*\d)--(10*\d,0*\d)--(8*\d,2*\h*\d)--(6*\d,4*0*\d);
		\draw [red] (4*\d,2*\h*\d)--(6*\d,0*\d)--(8*\d,2*\h*\d)--(4*\d,2*\h*\d);
		\draw [red,thin,dashed] (6*\d,0*\d)--node[left]{$h$} (6*\d,\h);
		\node [red,above] at (5*\d,3.5*\d)  {$d$};
		
		\end{tikzpicture}
		\captionof{figure}{Schematic representation of the most dense configuration. Particles (shown as circles) are packed in a triangular lattice with distance $d$ between their centres.}
		\label{fig:dense_conf}
	\end{minipage}%
	\hfill
	\begin{minipage}{.45\textwidth}
		\centering
		\begin{tikzpicture}
		\def\r{1}
		\def\d{4}
		\def\pi{3.1416}
		\def\ss{1.414}
		\draw (2,2) circle(\r);
		\draw (2+\d,2) circle(\r);
		\draw [red] (2,2) -- node[left] {$r$}(2-\ss/2*\r,2+\ss/2*\r);
		\draw (2+\r,2)--node[above] {$d_{\min}$} (2-\r+\d,2);
		\end{tikzpicture}
		\captionof{figure}{Close up view of two particles with radius $r$ and distance $d_{\min}$. Coupled with Figure~\ref{fig:dense_conf}, we observe $d = d_{\min}+2r$.}
		\label{fig:dense_close}
	\end{minipage}
\end{figure}
\begin{newthm}
	Assume a crowd of particles with mass 1 and radius $r$ within a space $\Omega$. Given a minimal distance of $d_{\min}$, the maximum density $\rho_{\max}$ this crowd can attain follows from $d_{\min}$ by
	\begin{equation}
	\rho_{\max}(d_{\min}) = \frac{2}{(d_{\min}+2r)^2\sqrt{3}}.
	\label{eq:max_dens_result}
	\end{equation}
\end{newthm}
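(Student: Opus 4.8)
The plan is to reduce the statement to the elementary geometry of the triangular lattice, with Theorem~\ref{thm:close} supplying the fact that this lattice is the relevant extremal configuration. First I would reformulate the exclusion constraint: to each particle, centred at $\vec{x}_i$ with radius $r$, associate the open disk $B(\vec{x}_i, r + d_{\min}/2)$. Two such disks are disjoint precisely when the surface-to-surface distance of the underlying particles is at least $d_{\min}$, so an admissible crowd corresponds exactly to a packing of equal disks of radius $r + d_{\min}/2$ inside $\Omega$. By Theorem~\ref{thm:close}, the densest such packing places the disk centres on a triangular lattice; in that lattice nearest-neighbour centres are exactly $2(r + d_{\min}/2) = d_{\min} + 2r =: d$ apart, which is the relation $d = d_{\min} + 2r$ read off from Figures~\ref{fig:dense_conf} and~\ref{fig:dense_close}.

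Next I would compute the areal particle density of this lattice. Tiling the plane by the equilateral triangles spanned by triples of nearest-neighbour centres, each triangle has side $d$, hence area $\tfrac{\sqrt3}{4}d^2$, and it contains exactly $3 \cdot \tfrac16 = \tfrac12$ of a particle centre, since each of its three $60^\circ$ vertices is shared by six triangles. (Equivalently, the Voronoi cell of each centre is a regular hexagon of area $\tfrac{\sqrt3}{2}d^2$ containing a single centre.) Therefore the number of particles per unit area in the bulk is $\tfrac{1/2}{(\sqrt3/4)d^2} = \tfrac{2}{\sqrt3\,d^2}$, and since every particle is assigned unit mass, the attained density is $\tfrac{2}{\sqrt3\,(d_{\min}+2r)^2}$, which is exactly~\eqref{eq:max_dens_result}.

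The one point that needs care — and the main obstacle — is passing from the infinite-lattice density to a statement about the finite domain $\Omega$. This is what the hypothesis that $\Omega$ is much larger than $B(\cdot,r+d)$ is for: a boundary layer of width $\bigo{d}$ can hold only $\bigo{|\partial\Omega|/d}$ particles, a relative correction of order $\bigo{|\Omega|^{-1/2}}$, so the triangular-lattice value is the supremal density as $|\Omega|\to\infty$, while truncating the lattice configuration to $\Omega$ shows it is approached. I would therefore phrase the conclusion as this asymptotic (supremal) density and note that for any fixed finite $\Omega$ the true maximum cannot exceed it. Everything else is the routine area bookkeeping above, together with keeping track that the statement already normalises the particle mass to $1$.
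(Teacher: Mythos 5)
Your proposal is correct and takes essentially the same route as the paper: both arguments reduce to the triangular closest packing of Theorem~\ref{thm:close}, set $d = d_{\min}+2r$, and divide the mass of a fundamental region by its area — you use the side-$d$ elementary triangle (mass $\tfrac12$, area $\tfrac{\sqrt3}{4}d^2$) or equivalently the hexagonal Voronoi cell, while the paper uses the side-$2d$ triangle (mass $2$, area $\sqrt3\,d^2$), which is the same bookkeeping. Your explicit reduction to disjoint disks of radius $r+d_{\min}/2$ and the remark on the boundary-layer correction for finite $\Omega$ merely make precise what the paper handles informally via the assumption that $\Omega$ is much larger than the particles.
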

\begin{proof}
	To provide an upper bound for the density of a crowd, we examine the closest packing mentioned in Theorem~\ref{thm:close}. An example structure is depicted in Figure~\ref{fig:dense_conf}. 
	In this structure, we identify a triangle (in red) that covers the whole structure when replicated. \\

    We want to obtain an expression for the average density in this structure, expressed in quantity $d$.
	We compute the average density within this triangle, and since replicating the triangle and its contents provides us with the original structure, we conclude the average density of the triangle must equal the average density of the entire structure.
	
	The area $A$ of the indicated triangle in Figure~\ref{fig:dense_conf} can be expressed as $A=dh$, where $d$ is the distance between particle centres and $h$ satisfies $d^2 + h^2 = (2d)^2$, resulting in $A = \sqrt{3}d^2$. 
	Replacing the distance between centres by the distance between particle circumferences, we get $d = d_{\min} + 2r$, as illustrated in Figure~\ref{fig:dense_close}.
	
	In the triangle, 6 particles contribute to its mass. Weighing each particle to their contribution, we obtain a total mass of  $3\cdot \frac{1}{2}+3\cdot\frac{1}{6} = 2$ per triangle. If we divide mass by area, then we obtain the formula in \eqref{eq:max_dens_result}.
\end{proof}

Using this derivation, we come to the same relation assumed in \cite{narain09}. 

\section{Verification of the density conversion}
To illustrate the fact that the relation in the previous section holds, we verify that \eqref{eq:max_dens_result} yields good numerical approximations.
Using the notation from Section~\ref{sec:maxdens}, we let $r=0$ (since the particle radius is irrelevant in this discussion) and $d_{\min}:=d$ arbitrary but fixed.
\begin{figure}
    \centering
    \begin{minipage}{0.45\textwidth}
    \begin{tikzpicture}
    \def\h{1.732}
    \def\x{0.577}
    \def\r{0.1}
    \def\d{0.5}
        \draw ({(1+2*-2)*\d},\h*\d) circle(\r);
    \foreach \i in {-1,...,1}{
        \draw ({(1+2*\i)*\d},-\h*\d) circle(\r);
        \draw (2*\d*\i,0) circle(\r);
        \draw ({(1+2*\i)*\d},\h*\d) circle(\r);
        \draw (2*\i*\d,2*\h*\d) circle(\r);
        \draw ({(1+2*\i)*\d},3*\h*\d) circle(\r);
    }
    \draw (2*\d*2,0) circle(\r);
    \draw (2*\d*2,2*\h*\d) circle(\r);
    \draw (5*\d,\h*\d) circle(\r);
    \draw (1*\d,\h*\d) node[below]{$c_{(i,j)}$} circle (0.01);
    \draw  (1*\d,\h*\d) circle (2*\d);
    \draw  (1*\d,\h*\d) circle (2*\h*\d);
    \draw[dashed] (1*\d,\h*\d) circle (4*\d);
    \end{tikzpicture}
    \caption{Densest packing, high density observed in cell centre. Circles indicate particles with same distance.}
    \label{fig:micro_vs_macro_1}
\end{minipage}%
\hfill
\begin{minipage}{0.45\textwidth}
    \centering
    \begin{tikzpicture}
    \def\h{1.732}
    \def\x{0.577}
    \def\r{0.1}
    \def\d{1}
    \draw (0,0) node[below]{$c_{(i,j)}$}  circle(0.01);
    \draw[dashed] (0,0) circle (2*\d);
    \draw (0,0) circle(\x*\d);
    \draw (0,0) circle(2*\x*\d);
    \draw (0,0) circle(2.646*\x*\d);

    \draw (-1.5*\d,{(\x-1.5*\h)*\d}) circle(\r);
    \draw (-0.5*\d,{(\x-1.5*\h)*\d}) circle(\r);
    \draw (0.5*\d,{(\x-1.5*\h)*\d}) circle(\r);
    \draw (1.5*\d,{(\x-1.5*\h)*\d}) circle(\r);

    \draw (-2*\d,{(\x-\h)*\d}) circle(\r);
    \draw (-1*\d,{(\x-\h)*\d}) circle(\r);
    \draw (0*\d,{(\x-\h)*\d}) circle(\r);
    \draw (1*\d,{(\x-\h)*\d}) circle(\r);
    \draw (2*\d,{(\x-\h)*\d}) circle(\r);

    \draw (-2.5*\d,{(\x-\h/2)*\d}) circle(\r);
    \draw (-1.5*\d,{(\x-\h/2)*\d}) circle(\r);
    \draw (-0.5*\d,{(\x-\h/2)*\d}) circle(\r);
    \draw (0.5*\d,{(\x-\h/2)*\d}) circle(\r);
    \draw (1.5*\d,{(\x-\h/2)*\d}) circle(\r);
    \draw (2.5*\d,{(\x-\h/2)*\d}) circle(\r);

    \draw (-2*\d,{(\x)*\d}) circle(\r);
    \draw (-1*\d,{(\x)*\d}) circle(\r);
    \draw (0*\d,{(\x)*\d}) circle(\r);
    \draw (1*\d,{(\x)*\d}) circle(\r);
    \draw (2*\d,{(\x)*\d}) circle(\r);

    \draw (-1.5*\d,{(\x+\h/2)*\d}) circle(\r);
    \draw (-0.5*\d,{(\x+\h/2)*\d}) circle(\r);
    \draw (0.5*\d,{(\x+\h/2)*\d}) circle(\r);
    \draw (1.5*\d,{(\x+\h/2)*\d}) circle(\r);

    \draw (-1*\d,{(\x+\h)*\d}) circle(\r);
    \draw (0*\d,{(\x+\h)*\d}) circle(\r);
    \draw (1*\d,{(\x+\h)*\d}) circle(\r);
    \end{tikzpicture}
    \caption{Densest packing, low density observed in cell centre. Circles indicate particles with same distance.}
    \label{fig:micro_vs_macro_2}
\end{minipage}
\end{figure}
We use \eqref{eq:max_dens_result} to compute the maximum density, which due to the closest packing is equal to the observed density, and we obtain
\begin{equation}
    \rho_{ij}(d) = \frac{2}{d^2\sqrt{3}} \approx \frac{1.15}{d^2}.
    \label{eq:sph_bound}
\end{equation}
We use the SPH interpolation with kernel $f_W$ to determine the density in the centre of cell $c_{(i,j)}$. 
This yields an approximation of the density that depends on the offset of the grid with respect to the particle configuration. 
We compute an upper and a lower bound for the density, corresponding to the situations in respectively Figure~\ref{fig:micro_vs_macro_1} and Figure~\ref{fig:micro_vs_macro_2}.
The dashed line represents the support radius of the particles.

We couple the smoothing length to the particle configuration by imposing $h = d$.

\subsubsection{Upper bound on density interpolation}
We obtain an upper bound on the interpolated density by observing the density from the centre of a particle.
By repeatedly using the Pythagorean theorem, we compute the distances between the cell centre and the contributing particles. Table~\ref{tab:upper} shows these distances. 
\begin{table}
	\centering
    \begin{minipage}{0.45\textwidth}
        {\tabulinesep=1.2mm
	\begin{tabu}{|c|c|c|}
		ID &number of particles &Distance  \\
        \hline
		1 & 1 & 0\\
		2 & 6 & $d$\\
		3 & 6 & $\sqrt{3}d$
	\end{tabu}
}
	\caption{Distances between particles and cell centre for the upper bound.}
	\label{tab:upper}
\end{minipage}%
\hfill
\begin{minipage}{0.45\textwidth}
	\centering
        {\tabulinesep=1.0mm
	\begin{tabu}{|c|c|c|}
		ID &number of particles &Distance  \\
        \hline
        1 & 3 & $\frac{1}{\sqrt{3}}d$\\
        2 & 3 & $\frac{2}{\sqrt{3}}d$\\
        3 & 6 & $\sqrt{\frac{7}{3}}d$
	\end{tabu}
}
	\caption{Distances between particles and cell centre for the lower bound.}
	\label{tab:lower}
\end{minipage}
\end{table}
The total density sums up to
\begin{equation}
    \rho_{ij}(d) =\psi(0) + 6\psi(d) + 6\psi\left(\sqrt{3}d\right) \approx \frac{1.19}{d^2}.
    \label{eq:upper_bound}
\end{equation}
\subsubsection{Lower bound on density interpolation}
We obtain the lower bound by maximizing the distance between the interpolation centre and the closest particles. The distances are listed in Table~\ref{tab:lower}. The configuration relative to the particle centre is illustrated in Figure~\ref{fig:micro_vs_macro_1}. In this case, the total density sums up to 
\begin{equation}
    \rho_{ij}(d) = 3\psi\left( \frac{1}{\sqrt{3}}d \right) + 3\psi\left( \frac{2}{\sqrt{3}} d\right) + 6\psi\left( \sqrt{\frac{7}{3}} d\right) \approx \frac{1.14}{d^2}.
    \label{eq:lower_bound}
\end{equation}
Comparing \eqref{eq:upper_bound} and \eqref{eq:lower_bound} with \eqref{eq:sph_bound} shows that independent of the distance $d$ between the particles, both the density measure on microscale and on macroscale scale with $\frac{1}{d^2}$. In addition, they scale with virtually the same proportionality constant.
\section{Bilinear interpolation}
To translate the grid-based information back to individual particle positions, we use \worddef{bilinear interpolation}, a technique often used in image processing. 
Let $f$ be a discrete function only defined on the cell centres of the discretised $\Omega$. We can approximate the value of $f$ in any point $\vec{x}=(x,y) \in \Omega$ by first applying linear interpolation in $x$-direction and subsequently in $y$-direction.

Let $\bar{f}$ be the continuous bilinear interpolation function of $f$ defined on the entire space $\Omega$. Let $d_{(i,j)} := c_{(i,j)} + \left( \frac{1}{2}\Delta x,\frac{1}{2}\Delta y \right)$.
Then for each set $d_{(i,j)} \subset \Omega$ we define function $\bar{f}_{(i,j)}: d_{(i,j)} \mapsto \mathbb{R}$ as 
\begin{equation}
    \bar{f}_{(i,j)}(x,y) = a_0 + a_1x + a_2y + a_3xy.
    \label{}
\end{equation}
The coefficients $a_0,\dots,a_3$ are defined by the solution of the system
\begin{equation}
    \begin{pmatrix}
        1 & x_i & y_j & x_iy_j\\
        1 & x_i & y_{j+1} & x_iy_{j+1}\\
        1 & x_{i+1} & y_j & x_{i+1}y_j\\
        1 & x_{i+1} & y_{j+1} & x_{i+1}y_{j+1}\\
    \end{pmatrix}
    \begin{pmatrix}
        a_0 \\ a_1 \\ a_2 \\ a_3
    \end{pmatrix}
     = 
     \begin{pmatrix}
         f(x_i,y_{j})\\
         f(x_i,y_{j+1})\\
         f(x_{i+1},y_j)\\
         f(x_{i+1},y_{j+1})\\
     \end{pmatrix}.
\end{equation}
Finally, for $(x,y) \in d_{(i,j)}$ for some $i$ and $j$, the interpolation function is defined as $\bar{f}(x,y) = \bar{f}_{(i,j)}(x,y)$.

In spite of the name, bilinear interpolation yields not a linear, but a quadratic function. 
Figure~\ref{fig:bilinear} shows an image where the velocity field is interpolated for a particle with arbitrary location $\vec{x}_a$.
\begin{figure}
   \centering
    \begin{tikzpicture}
    \def\dx{2.5}
    \draw (\dx/2,\dx/2) circle  (0.01);
    \draw[step=\dx,black,thin,shift={(-\dx/2,-\dx/2)}] (0,0) grid (\dx*2,\dx*2);
    \draw[->,dashed]  (0,0) node[below]{$c_{(i,j)}$}-- (0.8,-1.1);
    \draw[->,dashed]  (\dx,0) node[below]{$c_{(i+1,j)}$}-- (\dx + 0.6,0-0.3);
    \draw[->,dashed]  (0,\dx) node[below]{$c_{(i,j+1)}$}-- (0+1.2,\dx + 1.1);
    \draw[->,dashed]  (\dx,\dx) node[below]{$c_{(i+1,j+1)}$}-- (0+\dx + 1,\dx + 0.3);
    \draw[->,dashed,red] (\dx*0.3,\dx*0.6) node[below]{$\vec{x}_{a}$} -- (\dx*0.3 + 0.98,\dx*0.6 + 0.46);
    \draw[dotted,red] (\dx*0.3,-\dx/2) node[below]{$x_a$} -- (\dx *0.3,\dx*3/2);
    \draw[dotted,red] (-\dx/2,\dx*0.6) node[left]{$y_a$} -- (\dx*3/2,\dx*0.6);

    \draw[dotted] (0,-\dx/2) node[below]{$x_i$} -- (0,\dx*3/2);
    \draw[dotted] (-\dx/2,0) node[left]{$y_j$} -- (\dx*3/2,0);

    \draw[dotted] (\dx,-\dx/2) node[below]{$x_{i+1}$} -- (\dx ,\dx*3/2);
    \draw[dotted] (-\dx/2,\dx) node[left]{$y_{j+1}$} -- (\dx*3/2,\dx);
    \end{tikzpicture}
    \caption{Example of bilinear interpolation of a velocity field.}
    \label{fig:bilinear}
\end{figure}
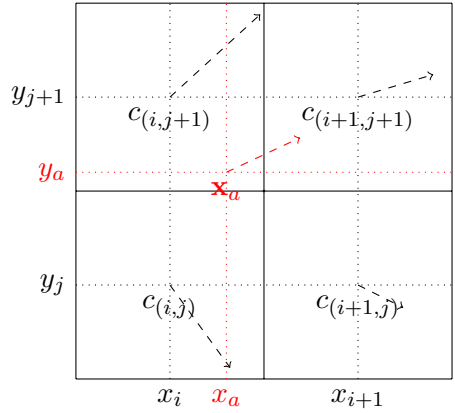
\section{Discussion}
In this chapter we discussed how microscopic measures like mass, position and velocity can be converted to macroscopic measures like density and velocity fields. 
In the next chapter, we use these techniques to simulate pedestrian dynamics on multiple scales. 
We introduce an implementation of a particle model as defined in Section~\ref{sec:micro} and model the interaction on a global level by obtaining a macroscopic measures of the crowd. 
The effects of this global interaction are evaluated for each of the pedestrians using bilinear interpolation.

\chapter{Application: Crowd dynamics in inhomogeneous domains}
\label{chap:crowds}
\begin{quote} 
    \centering 
    \emph{Thus the scientist hopes that from an objective study of the actual ways that we human beings do in fact behave, he may disclose the nature of the underlying principles that govern our conduct.}

    \raggedleft
    - George Kingsley Zipf, \emph{Human Behaviour and the Principle of Least Effort}.
\end{quote}

\section{Introduction}
Crowd dynamics is the field of research dedicated to studying the behaviour and interaction of groups of people in motion. 
Highly multifaceted, it is of interest to researchers from many different areas.
The ubiquitous nature of human crowds creates a universal need for their understanding not only in science, but also in applied fields.
Architects designing new urban environments benefit greatly from insight in crowds, which they can use to assess the level of safety and improve evacuation procedures. 
The same holds for traffic engineers working on infrastructures in large cities, calling for the avoidance congestions and hazardous situations. 
Another example presents itself in the entertainment industry, where film and game developers require a convincing virtual environment, including realistic crowd animations.\\
Many more applications exist. Yet we hope these suffice to get the point across: the world is packed with crowds.

Current understanding of the quantitative aspects in crowds is quite limited. For the reasons mentioned above, modelling crowds is receiving an increasing amount of attention in the scientific community. 

We provide a literature review discussing the most popular crowd modelling techniques used in the last decades. 
In doing so we create a context for our multiscale model in order to compare it to existing techniques. 
It is shown that employing a multiscale model has several recognised benefits in analysing crowd dynamics.
We tailor our framework with features specific to human crowds and use it on several test cases. 
Afterwards, the results are compared to observed phenomena found in experimental studies and we evaluate our implementation.

\section{Crowd modelling approaches}
Correctly predicting the behaviour of an individual is not a trivial task. By extension, correctly predicting the motions and interactions of many individuals is a challenge, especially since the nature of crowds is as diverse as the individuals they are composed of. A group of people might show different interactions depending on their location, time of day, state of mind, cultural habits, etcetera.
As a consequence, many different models have been developed to analyse and predict different types of crowds. At the time of writing, to our knowledge no unified and validated crowd dynamic model exists.
Nevertheless, a lot of progress has been made in developing new models and improving the accuracy of existing ones. 
To model the behaviour of a crowd, one needs at least four components.
\begin{itemize}
    \item \textbf{Representation of people}\\
        The model should capture the dynamic effects of a moving crowd, be it on an individual or a global level.
    \item \textbf{Scene representation}\\
        Crowds interact with their environment, so in modelling crowd it is often necessary to model their environments as well.
    \item \textbf{Route planning}\\
        Very often, a crowd has a (or more than one) destination. Models should incorporate means for the people in the crowd to reach this destination.
    \item \textbf{Interaction prescription}\\
        The limited free space in a crowd makes interaction inevitable. While is probably the most complex aspect of crowd dynamics, any model should incorporate it.
\end{itemize}

When reviewing scientific contributions in crowd dynamics (like \cite{pettre14}, \cite{hoogendoorn15}, \cite{pietschmann13}), there seems to be a widely accepted classification of models into two categories microscopic and macroscopic models.
A recent review in \cite{zheng09} provides a subdivision into seven commonly used models. We limit ourselves to the three more dominant and mathematically-based ones:
\begin{enumerate}
	\item Cellular automata
	\item Particle models
	\item PDE-like models
\end{enumerate}
\section{Cellular automata}
The \worddef{cellular automaton model} (CA),  also called lattice model, has been proposed by Von Neumann around 1940. 
According to the definitions in \cite{sayama15}, a cellular automaton is defined as a theoretical machine, usually a cell in a rectangular grid, that changes its internal state depending on input and its previous state. 
A collection of cellular automata creates a spatially distributed dynamical system, discrete in both time and space.
A CA model has several beneficial aspects. For instance, emergent behaviour can already be observed with a simple set of rules. 
Also, some rule sets have the property to show consistent and converging behaviour when letting the grid size in time and space go to zero.
This allows for mathematical analysis of the CA, the prediction of limiting behaviour and sometimes an inferred system of equations representing macroscopic behaviour.
Lastly, a CA model is quite easy to implement and has a low computational cost in comparison to other models.

Probably the best known cellular automata is John Conway's 'Game of Life' (proposed in \cite{conway70}), a deterministic simulation game where cells live or die depending on their neighbour cells. The Game of Life somewhat resembles the rise and fall of a population. From there, it is a small leap to invent a set of rules that simulate population motion and interaction.

\begin{figure}[h!]
	\centering
	\includegraphics[width=0.3\textwidth]{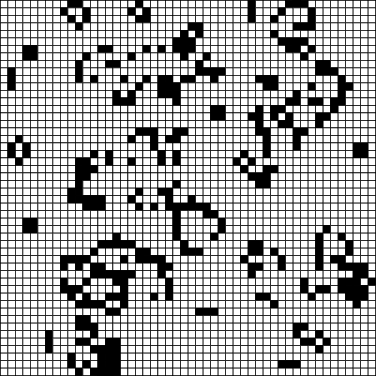}
	\caption{Example of a generation in the CA simulation Game of Life.}
	\label{fig:gameoflife}
\end{figure}
One of the first applications of a CA model for pedestrian dynamics was proposed in \cite{blue00}.
In this model, pedestrians are represented by occupied cells on the lattice. 
Whenever a pedestrian moves to a new cell, the cell he departs becomes unoccupied. Every cell holds at most one pedestrian.\\
The main focus of this simulation was the observation of bidirectional flow and the formation of lanes. 
They were able to differentiate in pedestrian speed by varying the number of cells a pedestrian moves each time step. 
While congestion and acceleration/deceleration were not that realistic,
the implementation of this model shows some emergent behaviour culminating in sidestepping and lane formation. 
However, it should be noted lane formation was the phenomenon sought after.

Cellular automata have remained a popular way for modelling crowds, because of their extendibility and relative ease of implementation.
For instance, in \cite{bandini11} a CA is proposed in which pedestrians are assigned certain goals. Pedestrians are equipped with an observation fan, a means of perceiving the surroundings in front of them, and are able to avoid potential obstacles and other pedestrians. 
They also extended the stochastic nature of the CA presented in \cite{blue00}, accounting for group cohesion and geometric and proxemic repulsion.
\cite{sarmady10} describes an extension in which pedestrians take up several cells, thereby providing a way to model various crowd densities and different speeds.
It must be noted that for detailed microscopic analysis, most CA models are not capable of accurately describing real life crowd dynamics.  
Most CA's lack directional flexibility because of the spatial discretisation on a lattice. This also results in a fixed speed for all pedestrians and rigid maximum densities.

Despite these facts, some CA models lend themselves particularly well for mathematical analysis.
\cite{pietschmann13} describes a CA inhabited by two species with rules modelling pedestrian dynamics and social cohesion, where state transitions are based on probabilities.
In \cite{cirillo13}, a model is implemented and analysed where pedestrians experience limited visibility while being evacuated. By including social factors like cooperation, they discovered altruism can very well induce disasters in evacuation.
\cite{burger10} describes a similar model and proceeds to develop a corresponding system of limit equations, thereby analysing the microscopic model on a macroscopic level. 
This allows for a derivation of stability conditions and, up to a certain level, validate the results of the corresponding Monte Carlo simulations.
We provide a short summary of their derivations, since the model examined is representative for many crowd dynamics lattice simulation. More important, it shows a (different) way to couple microscopic and macroscopic models together with the transition from a discrete to a continuous system.
\subsection{Modelling interaction and social dynamics}
We define a CA on a two-dimensional grid of size $m_x\times m_y$ with square cells of size $h\times h$. We assume time step $\Delta t$, at which all cells simultaneously attain their new state.
We introduce particles (pedestrians) which occupy a subset of the grid cells, according to a certain initial condition. Each grid cell holds at most one pedestrian. 
Each time step, pedestrians are able to move to an adjacent cell in horizontal or vertical direction.

To model route planning, floor fields are introduced, driving forces which determine the probabilities of moving to other cells. Each pedestrian has a static floor field $S\in\mathbb{R}^{m_x\times m_y}$ which does not change throughout the simulation, and a dynamic floor field $D\in\mathbb{R}^{m_x\times m_y}$ which is updated on every time step. 
The static field $S$ represents the scene and accounts for the environment, like obstacles, entrances, exits, etcetera. Field $S$ has lower values for cells closer to the exits and high values for cells surrounding obstacles.
It is a discrete representative of a domain potential function as discussed in Section~\ref{sec:domain_pots}.

The dynamic field $D$ represents the pedestrian interaction and accounts for group behaviour effects like herding and following of other group members, modelling forms of attraction and repulsion. Field $D$ is initialized with zero values.

This method is extensible to multiple types of pedestrians. 
Each group of pedestrians is assigned one goal and has a social cohesion with other group members. In particular, each group has its own static and dynamic fields. In this implementation two groups are considered, labelled red (${r}$) and blue (${b}$), each with their own static field $S_r$/$S_b$ and dynamic field $D_r$/$D_b$. 
Whenever a cell becomes unoccupied, its dynamic field value increases to account for the popularity of that cell. Also, each time step, all positive dynamic floor field values decrease with a certain rate $\delta$. This approach is also found in models of swarm intelligence, as a way to model ant trails (\cite{watmough95}).

These fields are used to determine the probability of a pedestrian to move into a neighbouring grid cell $(i,j)$ and is expressed as follows:
\begin{equation}
	(P_r)_{i,j}=(N_r)_{i,j}e^{(\kappa_D(D_r)_{i,j})}e^{(\kappa_S(S_r)_{i,j})}(1-c_{i,j}).
	\label{eq:ca_prob}
\end{equation}
In this expression, $\kappa_D$ and $\kappa_S$ are weight factors for the floor fields. The term $c_{i,j}$ has value 1 if cell $(i,j)$ is occupied, and zero otherwise, to ensure each cell holds at most one pedestrian per time step. 
Finally, $N_r$ is a normalizing factor such that all jump-probabilities sum to 1.

\subsection{Deriving macroscopic equations}
Macroscopic equations are derived for this model in one dimension, along the lines of \cite{burger10}.
We combine the static and dynamic fields into one potential function per species: $\Phi_r$ and $\Phi_b$.
Computation takes place in spatial domain $L = [0,1]$. Domain $L$ is partitioned into $n$ cells with cell size $h$ such that $nh=1$. The potential fields and probabilities are defined on $L\times [0,T]$ with $T$ the maximum computation time. Let $\Delta t$ be the time step.
We define diffusion parameters $\alpha_l$ and $\alpha_r$ and flux parameters $\beta_l$ and $\beta_r$.
Below follows a derivation for red particles (in which we omit subscript $r$). Blue particles follow analogously.
With these parameters, right and left flux rates $\gamma^+$ and $\gamma^-$ are defined as
\begin{equation}
    \begin{split}
    \gamma^+ = \alpha - h\beta\partial_x \Phi(x+h/2,t),\\
    \gamma^- = \alpha + h\beta\partial_x \Phi(x+h/2,t).
    \end{split}
\end{equation}
Potential function $\Phi$ is scaled such that $0\leq\gamma^-,\gamma^+\leq 1$.
High $\alpha$ increases the randomness of pedestrian motion, thereby increasing diffusion. High $\beta$ increases the weight of the potential function, prioritizing the flux to the exit.
It is assumed right and left hopping is only possible when the destination cell is unoccupied. 
Let $r(x,t)$, $b(x,t)$ denote the probability that cell $x$ holds a red or blue particle pedestrian at time $t$.
The hopping probabilities are expressed as:
\begin{align}
    \begin{split}
    \Pi^+(x,t) = \gamma^+P(\text{cell $x+h$ free}) = \gamma^+(1-r(x+h,t)-b(x+h,t)),\\
    \Pi^-(x,t) = \gamma^-P(\text{cell $x-h$ free}) = \gamma^-(1-r(x-h,t)-b(x-h,t)).
    \end{split}
    \label{eq:hop_prob_def}
\end{align}
Considering that a pedestrian moves at most one cell, $r(x,t+\Delta t)$ can be computed from probabilities in $t$ with
\begin{align*}
    r(x,t+\Delta t) = r(x,t) \left( 1- \Pi^+(x,t) - \Pi^-(x,t) \right),\\
    +r(x+h,t)\Pi^-(x+h,t) + r(x-h,t)\Pi^+(x-h,t).
\end{align*}
This expression is rewritten to facilitate the substitution of Taylor expressions 
\begin{align*}
    r(x,t+\Delta t) - r(x,t) = r(x,t)\left(\Pi^+(x-h,t) + \Pi(x+h,t)-\Pi^+(x,t)-\Pi^-(x,t)\right),\\
    + (r(x+h,t)-r(x,t))\Pi^-(x+h,t)+(r(x-h,t)-r(x,t))\Pi^+(x-h,t),
\end{align*}
and after substitution, this becomes
\begin{align}
    \begin{split}
        r(x,t+\Delta t) - r(x,t)= \\
        r(x,t)\left( h(\partial_x\Pi^-(x,t)-\partial_x\Pi^+(x,t)) + \frac{h^2}{2}\left( \partial_{xx}\Pi^+(x,t) + \partial_{xx}\Pi^-(x,t) + \bigo{h^3}\right)\right).
    \end{split}
    \label{eq:ca_almost_done}
\end{align}
Using the probability definitions from \eqref{eq:hop_prob_def} we obtain
\begin{align*}
    \partial_x\Pi^-(x,t) - \partial_x\Pi^+(x,t) = 2h\beta(\partial_x\Phi(1-r-b)) + 2h\alpha(\partial_{xx}r+\partial_{xx}b)+\bigo{h^2},\\
    \partial_{xx}\Pi^+(x,t) + \partial_{xx}\Pi^-(x,t) = -2\alpha(\partial_{xx}r+\partial_{xx}b+\bigo{h},\\
    \Pi^-(x+h,t) - \Pi^+(x-h,t) = 2h\beta\partial_{xx}\Phi(1-r-b)+\bigo{h^2},\\
    \Pi^-(x+h,t) - \Pi^+(x-h,t) = 2\alpha(1-r-b) + \bigo{h},
\end{align*}
and substituting this in ~\eqref{eq:ca_almost_done} yields
\begin{equation*}
    r(x,t+\Delta t)-r(x,t)=2h^2\beta\partial_x(r(1-r-b)\partial_x\Phi(x,t))+h^2\alpha\partial_x\left((1-r-b)\partial_xr + r(\partial_xr+\partial_xb) \right).
\end{equation*}
Applying the scalings $\frac{2h^2}{\Delta t} = 1$, $\frac{\alpha}{2}=D$, $\frac{2\beta}{\alpha}=\mu$ and letting $\Delta t \to 0$ the system results in:
\begin{equation}
    \begin{split}
    \partial_t r = D_r\partial_x \left( (1-b)\partial_x r + r\partial_xb+\mu_1r(1-r-b)\partial_x\Phi_r \right),\\
    \partial_t b = D_b\partial_x \left( (1-r)\partial_x b + b\partial_xr+\mu_2b(1-r-b)\partial_x\Phi_b \right).
    \end{split}
    \label{eq:macro_equations}
\end{equation}

\eqref{eq:macro_equations} represents a coupled system of advection-diffusion equations, including cross-diffusion.
Further examining these equations, we see the first term represents a cross-diffusion coefficient, the second term represents the flux due to the concentration of blue particles, while the third term represent the flux due to the domain potential.
\section{Particle models}
Perhaps the most popular way of modelling crowds is by prescribing the behaviour for each of the pedestrians individually by viewing them as particles, and let the global behaviour emerge from motion and interaction imposed at the individual level.
Both particle models and CA's are defined at a microscopic level, but particle models differ from CA's in their discretisation. 
Where CA's are defined on a lattice, particle models are completely continuous in space. 
Often, pedestrians are modelled as fixed-radius particles in a two-dimensional space representing the scene.
Each time step, new pedestrian positions are computed from either a (net) force field or a velocity field. These fields follow from a set of rules and/or a governing equation modelling pedestrian interaction and route planning. An illustration is provided in Figure~\ref{fig:particle_model}.
\begin{figure}[h!]
    \centering
    \includegraphics[width=0.3\textwidth]{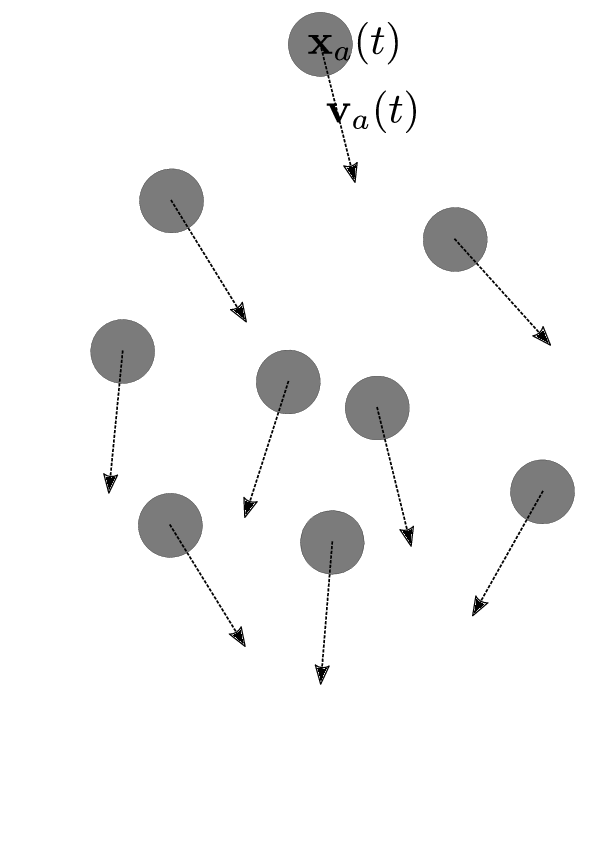}
    \caption{Example of generic particle model. In this instance, positions are updated according to a time- and pedestrian-dependent velocity field $\vec{v}(t)$.}
    \label{fig:particle_model}
\end{figure}

While the idea sounds fairly intuitive, researchers have struggled with reproducing realistic and crowd-like motions for a long time.  One of the first successful attempts can be attributed to Reynolds in \cite{reynolds87}, published in the late eighties. 
In his attempt to model flocks of animals (specifically birds), he establishes rules which apply for many coherent groups of agents, including pedestrians. He proposes the Boid model (acronym for bird-oids, agents with birdlike flock behaviour) based on three principles: 
\begin{enumerate}
    \item \textbf{Collision avoidance}\\
		Any bird must attempt to keep a safe distance from its fellow birds.
    \item \textbf{Velocity matching}\\
		Any bird will adapt to the velocity of its direct neighbours.
    \item \textbf{Flock centring}\\
		Any bird will attempt to stay as close to its neighbours without causing collisions.
\end{enumerate}
Reynolds main contribution is modelling a flock as a group of conforming agents, as individuals aware of their neighbours. 
Both researchers and designers successfully used these principles as a basis for more realistic crowd motion.
However, in these models it is difficult to express individual characteristics.
For this reason, many models since have taken a more mechanical approach, modelling specific preferences with different velocities or potential functions.
\subsection{Social force model}
\label{sec:social_force}
An influential example is the social force model, described in \cite{helbing95}: a particle model that expresses social features in a physical manner.
These pedestrians have a pre-set goal, and try to reach that goal while maintaining their desired speed. While approaching the goal, they experience attraction and repulsion forces from other pedestrians and objects, depending on their personal preference. This way, both social behaviour is simulated as well as collision avoidance. 

This model can be expressed as a coupled system of adapted Langevin equations:
Let $\vec{x}_\alpha(t)\in \mathbb{R}^2$ be the position of pedestrian $\alpha$ at time $t$. Then the system governing his motion is given by

\begin{align*}
    \frac{d\vec{v}_\alpha(t)}{dt} = \vec{F}_\alpha(t) + B_\alpha(t),\\
    \frac{d\vec{x}_\alpha(t)}{dt} = g(\vec{v}_\alpha(t)).
\end{align*}
Here $B_\alpha(t)$ corresponds to a stochastic fluctuation, for instance a Wiener process. 
$\vec{F}_\alpha$ corresponds to the sum of all attraction and repulsion forces for pedestrian $\alpha$ and $g$ limits the velocity such that
\begin{equation*}
    g(\vec{x}) = \frac{\vec{x}}{||\vec{x}||}\min(||\vec{x}||,v_{\max}).
\end{equation*}
Implementations of this model show typical crowd behaviour, like congestions around bottlenecks and lane formation. 
Compared to the CA model in \cite{blue00}, these phenomena emerge more naturally from the model, considering one only prescribes a set of attractors and repellers.

The social force model and many of its derivatives tend to be computationally expensive. While moving the pedestrians can be done in linear time, computing the influence of (long-range) attraction and repulsion forces from other pedestrians takes quadratic time. This shows an important limitation in simulation of large crowds.

Recently, a new type of model has emerged; the so-called velocity-based model.
Velocity-based models prescribe interactions between pedestrians not only based on positions, but on velocities as well. 
In theory, including velocity in pedestrian interaction yields smoother dodging manoeuvres and allows a sharper definition of the neighbourhood of interaction.
An example is the Paris model described in \cite{pettre14}, in which in each time step a pedestrian's admissible and inadmissible velocity domain is computed. 
Inadmissible velocities might lead to collisions with other pedestrians or obstacles, or exceed a maximum speed. 
The pedestrians picks the admissible velocity closest to its desired velocity and advances to its next position. Because of the extra degrees of freedom, velocity-based models are in general even more computationally expensive than social force models.
\section{PDE-like models}
\label{sec:macro_literature}
When crowds become large, one often prioritises the motion and behaviour of the collective over the motion of any individual. This gives rise to a macroscopic representation of a crowd; not as a collection of individuals, but rather as a body in motion.
Crowd behaviour in this case is not expressed by individual trajectories, but by changes in densities and velocity fields.

Hughes describes a rather complete macroscopic theory of pedestrian flows in \cite{hughes02}.
His theory is based on the continuity equation complemented with a potential function for modelling pedestrian discomfort at high crowd densities.
We elaborate on his contribution. It compares nicely to the advection-diffusion system obtained in \cite{burger10} and is a straightforward way to derive a macroscopic equation describing crowd flow from a balance principle.

\subsection{Derivation of pedestrian transport equations}
Pedestrian flows are characterised fully by density $\rho(x,y,t)$ and velocity $v(x,y,t) = \begin{pmatrix} v_x\\v_y \end{pmatrix}$ in location $(x,y)$ at time $t$. Using these quantities, we express the continuity equation:
\begin{equation}
    \frac{\partial\rho}{\partial t} + \div{\rho v}=0.
	\label{eq:cont_eq}
\end{equation}
More assumptions are made specific to pedestrians flow:
\begin{enumerate}
	\item Pedestrian speed depends only on density.
	\item Pedestrians have a common and complete sense of their surroundings and locations in the form of a potential.
	\item Pedestrians try to minimise their travel time, but subvert this behaviour to avoid high density regions.
\end{enumerate}
Assumption 1 is formulated by introducing a speed function $f(\rho)$. Using direction cosines $\hat{\phi}_x$ and $\hat{\phi}_y$, the velocity is expressed as
\begin{equation}
	v_x = f(\rho)\hat{\phi}_x\quad	v_y = f(\rho)\hat{\phi}_y.
	\label{eq:hyp_1}
\end{equation}
Assumption 2 is included by defining a domain potential $\phi$ such that the motion of the pedestrians is opposite to the gradient of this potential. Therefore, the direction cosines are formulated as
\begin{equation}
	\hat{\phi}_x = \frac{-\frac{\partial\phi}{\partial x}}{\sqrt{\left( \frac{\partial\phi}{\partial x} \right)^2+\left( \frac{\partial\phi}{\partial y} \right)^2}},
	\quad
	\hat{\phi}_y = \frac{-\frac{\partial\phi}{\partial y}}{\sqrt{\left( \frac{\partial\phi}{\partial x} \right)^2+\left( \frac{\partial\phi}{\partial y} \right)^2}}.
	\label{eq:hyp_2}
\end{equation}
Assumption 3 is modelled by assuming the avoidance behaviour as a function $g(\rho)$. This function is unity for all $\rho$ below some threshold value $\bar{\rho}$ but becomes large for $\rho > \bar{\rho}$. The following relation is postulated:
\begin{equation}
	\frac{1}{\sqrt{\left( \frac{\partial\phi}{\partial x} \right)^2+\left( \frac{\partial\phi}{\partial y} \right)^2}}
	= g(\rho)\sqrt{v_x^2+v_y^2}.
	\label{eq:hyp_3}
\end{equation}
Substituting expressions \eqref{eq:hyp_1}-\eqref{eq:hyp_3}, the following governing equations are obtained:
\begin{equation}
	\frac{\partial\rho}{\partial t}=\frac{\partial}{\partial x}\left( \rho g(\rho)f^2(\rho)\frac{\partial\phi}{\partial x} \right)+
	\frac{\partial}{\partial y}\left( \rho g(\rho)f^2(\rho)\frac{\partial\phi}{\partial y} \right),
	\label{eq:gov_pde a}
\end{equation}
with
\begin{equation}
	g(\rho)f(\rho) = \frac{1}{\sqrt{\left( \frac{\partial\phi}{\partial x} \right)^2+\left( \frac{\partial\phi}{\partial x} \right)^2}}.
	\label{eq:gov:pde_b}
\end{equation}
It is stated that boundary conditions of this system depend on the geometry. Walls, entrances and exits are modelled by prescribing at those locations respectively fluxes, source terms, or sink terms.

Furthermore, Hughes analyses properties of solutions to this system and performs a perturbation analysis. 
He introduces multiple pedestrian types, characterised by their goal (and therefore their potential field). 
This model is applied to the Jamarat Bridge, a religious attraction near Mecca and compared to empirical data.

\section{A macro-micro discussion}
From the discussion above, we would like to emphasise the following observations:
\begin{itemize}
    \item A lattice model is a convenient way to roughly describe the behaviour of a crowd. 
    While its inherent discrete nature often results in rigid and coarse simulations, implementation and mathematical evaluation is relatively painless.
    \item More detailed crowd behaviour is reproduced with particle models. In principal, modelling possibilities are endless. On the other hand, with complexity comes computation time, increasing the difficulty of simulating large crowds at interactive rates.
    \item Performance of macroscopic models does not decline for large numbers of pedestrians, although it is no longer possible to input or extract individual pedestrian characteristics. 
    On top of that, it is difficult to correctly represent regions of low density.
\end{itemize}

Several of these drawbacks are alleviated using a combination of micro- and macroscale modelling. 
As we noted before, it is not the propagation of the particles that is computationally expensive, but the interaction and route planning.
The approach covered in Chapter~\ref{sec:micro_macro} allows for various microscale modelling benefits, like implementing different pedestrian traits in one crowd. But contrary to microscale models, pedestrian interaction is evaluated on a global scale, thereby bypassing the quadratic time evaluation constraint.
This modelling approach has been applied in crowd dynamics as well and we shall base our approach on some recent contributions.

\section{Implementation of a multiscale model: Part I}
\label{sec:crowds1}
We pose a general system of governing equation based on the discussion on domain potentials in Section~\ref{sec:macro:domain_pots}.

\begin{equation*}
    \frac{\partial\rho}{\partial t} = \div{ \rho \nabla \Phi },
\end{equation*}
where we let $\varphi$ be defined by the solution of the equation
\begin{equation}
    ||\nabla\varphi(\vec{x})||:= u(\vec{x}),
    \label{eq:potential_from_unit_cost}
\end{equation}
for some $u(\vec{x})$ representing the marginal walking cost at location $\vec{x}$.

$l(\vec{x})$ is a limiting function to represent the finite pedestrian speed $v_{\max}$:
\begin{equation*}
    l(\vec{x}) = \frac{\vec{x}}{||\vec{x}||}\max\left( v_{\max},||\vec{x}|| \right).
\end{equation*}
\subsection{Domain potential-based transport}
\label{sec:potential_planner}
In \cite{treuille06}, a potential function is proposed modelling both domain aspects as well as interaction aspects.
It is based on the principle of least effort: each time step it computes the direction minimizing the walking cost, found by computing the steepest descent of a potential function.

Let $\Omega \subset \mathbb{R}^2$ be the simulation domain and $G \subset \Omega$ be a pedestrian goal. We assume position dependent maximum speed $f(x,\theta)$ with position $x\in \Omega$ and angle $\theta \in [0,2\pi]$. 
Without any impediments, pedestrians attain maximum walking speed.
To model spatial preferences a discomfort field is introduced. This field represents the \worddef{walkability} of the area. For instance, the preference of walking on side-walk instead of the main road could be modelled by increasing the discomfort field values on the main road.
This field can be coupled to the density field to model aversion to locations of high density.

To account for the delay experienced when a pedestrian walks through a group of people with another direction, the maximum speed field depends on the observed density.
Let $\vec{n}_\theta$ be the unit vector in direction $\theta$. 
Let $\rho(x)$ be the observed density at location $x$, in interval $[\rho_{\min},\rho_{\max}]$, and let $\bar{f}^+$ and $\bar{f}^-$ be respectively the maximum and minimum pedestrian speed for pedestrian.\\

To scale and cut off the density values between their threshold values, we define piece-wise linear function $L$ as
\begin{equation}
    L_{a}^{b}(t) = \begin{cases}
        0&\mbox{ if }t<a\\
        \frac{t-a}{b-a}&\mbox{ if }a\leq t \leq b\\
        1&\mbox{ if }t>b
    \end{cases}.
    \label{eq:cutoff}
\end{equation}
Then the anisotropic (directionally dependent) maximum speed field $f(\vec{x},\theta)$ is defined as
\begin{equation*}
    f(\vec{x},\theta) = \bar{f}^+ + L_{\rho_{\min}}^{\rho^{\max}}\left( \rho(\vec{x}+r\vec{n}_\theta) \right)(\bar{f}^- - \bar{f}^+).
\end{equation*}
The density is not observed at pedestrian location $\vec{x}$ but rather slightly $r$ ahead, at $\vec{x}+r\vec{n}_\theta$ to account for pedestrian vision and anticipation.
\subsection{Computing the walking cost}
The walking cost depends on the maximum speed $f$ and discomfort $g$. 
We use $g$ to account for pedestrian avoidance of high density regions and locations close to obstacles.
In our simulations, we choose
\begin{equation}
    g(\vec{x}) = g_{\textrm{obs}}(\vec{x}) + g_{\textrm{dens}}(\vec{x}).
    \label{}
\end{equation}
$g_{\textrm{obs}}(\vec{x})$ represents the discomfort from moving too close to any of the $n$ obstacles $M_1,\dots,M_n$. It is defined such that 
\begin{equation}
    g_{\textrm{obs}}(\vec{x}) = 
    \begin{cases}
        1&\mbox{ if } d(\vec{x},M_i) < r \mbox{ for } i=1,\dots,n\\
        0&\mbox{ else }
    \end{cases}.
    \label{}
\end{equation}
Here $r$ can be chosen arbitrarily small, depending on the nature of the pedestrian simulations. In practice, we choose $r$ the size of one cell.

$g_{\textrm{dens}}(\vec{x})$ represents the discomfort a pedestrian experiences when moving through high densities. It is defined as 
\begin{equation}
    g_{\textrm{dens}}(\vec{x}) = L_{\rho_{\min}}^{\rho^{\max}}\left( \rho(\vec{x}+r\vec{n}_\theta) \right).
    \label{}
\end{equation}

Under these parameters, a pedestrian at location $\vec{x}$ picks the path $P$ (a curve in $\Omega$) that minimizes a total walking cost:
\begin{equation}
    P = \argmin_P \underbrace{\left(\alpha\int_P1d\vec{s}\right.}_{\text{path length}} + \underbrace{\beta\int_P \frac{1}{f}d\vec{s}}_{\text{time}} + \underbrace{\left.\gamma\int_P\frac{g}{f}d\vec{s}\right)}_{\text{discomfort}},
    \label{eq:walking_cost}
\end{equation}
where $\alpha,\beta,\gamma \in \mathbb{R}$ are weight constants.
In evaluating discomfort, the discomfort field $g$ is divided by the speed field $f$ to account for the time of discomfort spent, instead of the distance.
This is a modelling choice and is based on the idea that pedestrians prefer to minimise the time spent in discomfort, rather than the distance.

Defining $u$ as 
\begin{equation}
    u(\vec{x},\theta) := 
    \begin{cases}
        \frac{\alpha f(\vec{x},\theta) + \beta + \gamma g(\vec{x})}{f(\vec{x},\theta)}&\mbox{ if } \vec{x} \notin M_i \mbox{ for }i=1,..,n\\
        \infty&\mbox{ if } \vec{x} \in M_i \mbox{ for }i=1,..,n
    \end{cases},
    \label{eq:unit_cost}
\end{equation}
an expression for the walking cost field is obtained. From this, a domain potential $\varphi$ is computed satisfying
\begin{equation}
    || \nabla \varphi(\vec{x}) || = \min_\theta u(\vec{x},\theta).
    \label{eq:Eikonal_equation}
\end{equation}
This equation is called the Eikonal equation. It has no known analytical solution for general $u(x)$. \\
Fast algorithms exist for grid based approximations, most notably the fast marching algorithm described in \cite{tsitsiklis95}, commonly used for computing distance transforms in images. 
\subsection{Discretisation schemes}
We assume the discretisation discussed in Section~\ref{sec:discretisation}.\\
The discrete representative of discomfort field $g(x)$ and potential field $\varphi(x)$ is given by
\begin{align*}
    \vec{g}_{i,j} &=\frac{1}{\Delta x\Delta y}\int_{c_{i,j}} g(x)dx,\\
    \gvec{\varphi}_{i,j} &=\frac{1}{\Delta x\Delta y}\int_{c_{i,j}} \varphi(x)dx.
\end{align*}
We reduce the set of possible angles to $\left\{0,\frac{1}{2}\pi,\pi,1\frac{1}{2}\pi\right\}$, corresponding with the normal vectors of the cell edges.\\
We use fractional indexing to indicate cell edges.
Let $(\bar{i},\bar{j})_\theta := (i+\frac{\vec{n}_{\theta,x}}{2},j+\frac{\vec{n}_{\theta,y}}{2})$ where $\vec{n}_\theta$ corresponds with the unit vector in direction $\theta$ (normal to the cell edge).\\
Speed field $f(\vec{x},\theta)$ and unit cost field $u(\vec{x},\theta)$ are discretised as follows.
\begin{align*}
    \vec{f}_{(\bar{i},\bar{j})_\theta} &= f((i-\vec{n}_{\theta,x}/2)\Delta x,(j-\vec{n}_{\theta,y}/2)\Delta y,\theta),\\
    \vec{u}_{(\bar{i},\bar{j})_\theta} &= u((i-\vec{n}_{\theta,x}/2)\Delta x,(j-\vec{n}_{\theta,y}/2)\Delta y,\theta).
\end{align*}
The relation between cell edges and index fractions is illustrated in Figure~\ref{fig:fractional_indexing}.
\begin{figure}
    \centering
	\begin{tikzpicture}
		\draw[step=2cm,black,thin] (2,0) grid (4,6);
		\draw[step=2cm,black,thin] (0,2) grid (6,4);
        \filldraw (3,1) circle (1pt) node[anchor=north] {$c_{i,j-1}$};
        \filldraw (1,3) circle (1pt) node[anchor=north] {$c_{i-1,j}$};
        \filldraw (3,3) circle (1pt) node[anchor=north] {$c_{i,j}$};
        \filldraw (2,3) circle (1pt) node[anchor=south] {$c_{i-\frac{1}{2},j}$};
        \filldraw (3,2) circle (1pt) node[anchor=north] {$c_{i,j-\frac{1}{2}}$};
	\end{tikzpicture}
    \captionof{figure}[Caption]{Staggered grid discretisation with fractional indices.}
   \label{fig:fractional_indexing}
\end{figure}
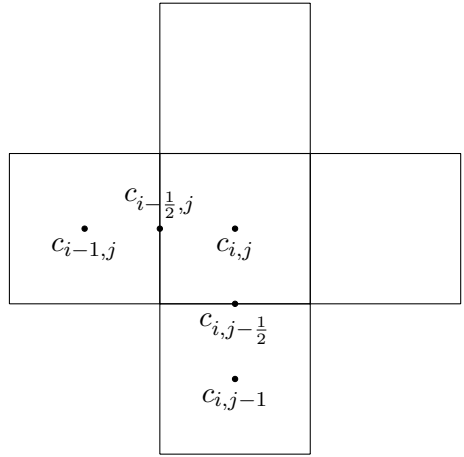
\subsection{Solving the Eikonal equation}
This algorithm uses an adapted first order upwind discretisation scheme to approximate potential $\varphi$ from \eqref{eq:potential_from_unit_cost} in each grid cell:
\begin{equation}
    \left( \frac{\gvec{\varphi}_{i,j}-\gvec{\varphi}_{i+m_x,j}}{\vec{u}_{\Theta}(i-\frac{m_x}{2},j)}\right)^2 + \left( \frac{\gvec{\varphi}_{i,j}-\gvec{\varphi}_{i,j+m_y}}{\vec{u}_{\Theta}(i,j-\frac{m_y}{2})}\right)^2=1,
    \label{eq:upwind_pot}
\end{equation}
where $(m_x,m_y)$ indicates the direction of the upwind discretisation.
This scheme is described in Algorithm~\ref{alg:pot_upwind_scheme}.
\begin{algorithm}
    \caption{Potential computation for one grid cell.}
    \label{alg:pot_upwind_scheme}
    \begin{algorithmic}[1] 
        \Procedure{computeNewPotential}{$i,j,\gvec{\varphi},\vec{u}$}\Comment{Row $i$, column $j$, unit cost field $\vec{u}$ and (intermediate) potential field $\gvec{\varphi}$.}
        \State $\mathrm{hor\_pot},\mathrm{ver\_pot} \gets 0$
        \State $\mathrm{hor\_cost},\mathrm{ver\_cost} \gets \infty$
        \ForAll{$\vec{n} \in \left\{ \vec{n}_x,-\vec{n}_x\right\}$}\Comment{Find horizontal upwind direction}
        \State $\theta = \mathrm{angle}(n)$
        \State nb\_edge $= (i,j)+\vec{n}/2$ \Comment{The unit cost is defined on cell edges}
        \State nb\_pot $=\gvec{\varphi}[(i,j)+\vec{n}]$ \Comment{Potential in neighbour cell}
            \State nb\_cost $=\vec{u}_{\theta}[\mathrm{-nb\_edge}]$\Comment{Cost of moving from $(i,j)$ to $(i,j)+n$}
            \If{$\mathrm{nb\_pot}+\mathrm{nb\_cost} < \mathrm{hor\_pot}$} \Comment{Find the lowest potential}
            \State $\mathrm{hor\_pot} = \mathrm{nb\_pot}$
            \State $\mathrm{hor\_cost} = \mathrm{nb\_cost}$
        \EndIf
        \EndFor
        \ForAll{$\vec{n} \in \left\{ \vec{n}_y,-\vec{n}_y\right\}$}\Comment{Find vertical upwind direction}
        \State \emph{Repeat last loop for $y$-direction}
        \EndFor
        \State $a\gets 1/\mathrm{hor\_cost}^2+1/\mathrm{ver\_cost}^2$ \Comment{Solve equation \eqref{eq:upwind_pot}}
        \State $b\gets -2*(\mathrm{hor\_pot}/\mathrm{hor\_cost}^2+\mathrm{ver\_pot}/\mathrm{ver\_cost}^2)$
        \State $c\gets (\mathrm{hor\_pot}/\mathrm{hor\_cost})^2 +  (\mathrm{ver\_pot}/\mathrm{ver\_cost})^2$ - 1
        \State $r \gets (2*c)/(-b-\sqrt{b^2-4ac})$ \Comment{Remains numerically stable for $a<<c$}
        \State \Return $r$\Comment{Return the largest root}
        \EndProcedure
    \end{algorithmic}
\end{algorithm}
The fast marching algorithm propagates using three data structures: known, unknown and candidate cells. At each step, all cells are assigned to one of these structures.
In the initial stage, the known cells consist of cells in the exit, and cells representing obstacles. All the other cells are stored as unknown. The fast marching algorithm assigns zero potential to the known cells representing the exit.
The standard fast marching method as described in \cite{tsitsiklis95} can be summarised as follows.

Each step, we compute potential values from unknown cells adjacent to known cells, and add these cells to the candidate structure. We find the cell in the candidate structure with the smallest potential and add it to the known cells.
These operations are repeated until the unknown and candidate structures are empty. 
The anisotropic nature of the unit cost field requires a different solution algorithm than a standard fast marching method. The implementation for a $N_x\times N_y$ grid is described in Algorithm~\ref{alg:pot_computation}.
Using the right data structure for the candidate cells (a heap), this algorithm runs in $\bigo{n\log n}$ time, with $n=N_xN_y$ the number of grid cells.
\begin{algorithm}
    \caption{Altered fast marching method.}
    \label{alg:pot_computation}
    \begin{algorithmic}[1] 
        \Procedure{computePotentialField}{$N_x,N_y$}
        \State $\vec{\varphi} = \mathrm{array}(N_x,N_y)$
        \State $D = \mathrm{flags}(N_x,N_y)$\Comment{Flags: KNOWN,UNKNOWN,CANDIDATE}
        \State neighbours = $\left\{ \vec{n_x},\vec{n_y},-\vec{n_x},-\vec{n_y} \right\}$\Comment{Vectors into neigbouring directions}
        \ForAll{$(i,j) \in \left\{ 1,\dots,N_x \right\}\times\left\{ 1,\dots,N_y \right\}$}
            \If{$c_{i,j} \subset G$}
            \State $\gvec{\varphi}_{ij}\gets 0$
            \State $D_{i,j} \gets \mathrm{KNOWN}$
            \ElsIf{$c_{i,j} \subset M_n$}
            \State $\gvec{\varphi}_{ij}\gets \infty$
            \State $D_{i,j} \gets \mathrm{KNOWN}$
            \Else
            \State $\gvec{\varphi}_{ij}\gets \infty$
            \State $D_{i,j} \gets \mathrm{UNKNOWN}$
            \EndIf
        \EndFor
        \State list new\_cells = $\left\{(i,j) | (D_{i,j} = \mathrm{KNOWN}\right\}$
        \State heap candidates = \Call{get\_new\_candidate\_cells}{new\_cells}\Comment{Only (re)compute potential field for cells which need the update}

        \While{$\mathrm{size}(\mathrm{candidates})>0$}
        \ForAll{$(i,j)\in \mathrm{candidates}$}
        \State $\gvec{\varphi}_{i,j} = \min\{\gvec{\varphi}_{ij},\Call{compute\_new\_potential}{i,j,\gvec{\varphi},\vec{u}}$
        \State $D_{i,j} = \mathrm{CANDIDATE}$
        \EndFor
        \State $\mathrm{new\_cell} = \min\left\{ (i,j) | D_{i,j}=\mathrm{CANDIDATE}\right\}$
        \State $D_{i,j} = \mathrm{KNOWN}$
        \State candidates += \Call{get\_new\_candidate\_cells}{new\_cells}
        \EndWhile
        \State \Return $\gvec{\varphi}$
        \EndProcedure

        \Procedure{get\_new\_candidate\_cells}{new\_known\_cells}
        \State list new\_candidate\_cells = \\
        $\left\{ (i,j) + n | (i,j) \in \mathrm{new\_known\_cells}, n \in \mathrm{neighbours},D_{(i,j)+n}=\mathrm{UNKNOWN}\right\}$
        \State \Return new\_candidate\_cells
        \EndProcedure

    \end{algorithmic}
\end{algorithm}

The velocity of a pedestrian at location $\vec{x}$ with direction $\theta$ is equal to
\begin{equation}
    \dot{\vec{x}} = -f(\vec{x},\theta)\frac{\nabla\varphi(\vec{x})}{||\nabla\varphi(\vec{x})||}.
    \label{eq:dyn_velo}
\end{equation}
After obtaining the potential field approximation, the potential field value at the pedestrians location is estimated using bilinear interpolation in the same manner as described in Chapter \ref{sec:micro_macro}.

We equip this model with a mathematically sound discretisation scheme and the ability to model obstacles. While theoretically it is possible to model inhomogeneous domains with a discomfort field, using it to make certain areas of the domain inaccessible does not always yield satisfying results.
\subsection{Approximation of state variables}
In\cite{treuille06}, a density approximation is obtained by computing a density contribution of each pedestrian to his four surrounding grid cells. The density contribution depends on the distance to the grid cell centre.

Pedestrian $a_k$ with position $\vec{x}_{a_k}$ contributes to cell $c_{(i,j)}$ with cell centre $\vec{c}_{(i,j)}$ if 
\begin{equation*}
    \left(  \vec{c}_{(i,j)}  - \vec{x}_{a_k}\right)_x < \Delta x \mbox{ or } \left(  \vec{c}_{(i,j)}  - \vec{x}_{a_k}\right)_y < \Delta y.
\end{equation*}
Figure~\ref{fig:disc_treuille} provides an illustration.

The density contribution for cell $c_{(i,j)}$ becomes
\begin{equation}
    \rho_{a_k}(i,j) = \max \left( \min\left(1-\frac{(\vec{c}_{i,j}  - \vec{x}_{a_k})_x}{\Delta x},1- \frac{(\vec{c}_{i,j}  - \vec{x}_{a_k})_y}{\Delta y}\right)^\lambda,0 \right),
    \label{eq:disc_treuille}
\end{equation}
where $\lambda$ is a decay parameter, coupled to the density threshold $\rho_{\min}$. It is chosen in such a way that $\left(\frac{1}{2}\right)^\lambda < \rho_{\min}$, to impose a maximum density contributions for neighbouring grid cells.

This discretisation has three disadvantages.
\begin{itemize}
    \item The resulting density is grid size dependent. Increasing the grid resolution changes the behaviour of the system inherently.
    \item The density contribution of a pedestrian does not depend on the Euclidean distance to the cell centre, but on an approximation of that distance using the maximum norm.
    \item The density interpolation only satisfies the density threshold requirements and lacks other useful properties like mass conservation.
\end{itemize}

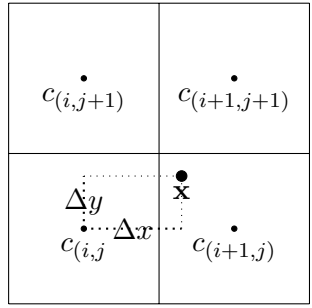
\begin{figure}
    \centering
	\begin{tikzpicture}
        \edef\dx{2.3}
        \edef\dy{1.7}
		\draw[step=2cm,black,thin] (0,0) grid (4,4);
        \filldraw (1,1) circle (1pt) node[anchor=north] {$c_{(i,j}$};
        \filldraw (3,1) circle (1pt) node[anchor=north] {$c_{(i+1,j)}$};
        \filldraw (1,3) circle (1pt) node[anchor=north] {$c_{(i,j+1)}$};
        \filldraw (3,3) circle (1pt) node[anchor=north] {$c_{(i+1,j+1)}$};
        \filldraw (\dx,\dy) circle (2pt) node[anchor=north] {$\vec{x}$};
        \draw[dotted,thick] (1,1) -- node {$\Delta x$}(\dx,1);
        \draw[dotted,thick] (1,1) -- node {$\Delta y$} (1,\dy);
        \draw[dotted] (\dx,1) -- (\dx,\dy);
        \draw[dotted] (1,\dy) -- (\dx,\dy);
	\end{tikzpicture}
    \captionof{figure}[Caption]{Schematic representation of pedestrian density contribution corresponding to \eqref{eq:disc_treuille} in \cite{treuille06}.}
   \label{fig:disc_treuille}
\end{figure}
We use the density interpolation from Chapter \ref{sec:micro_macro} to alleviate these issues.
By adjusting the smoothing length $h$, we satisfy the density requirement while imposing no relation between the grid size and the density contribution radius per pedestrian. For cell size $\Delta x$ we would pick $h$ satisfying
\begin{equation*}
    \psi\left( \frac{1}{2}\sqrt{2}\Delta x,h \right) = \frac{1}{2\pi h^2}e^{-\frac{\Delta x^2}{4h^2}}< \rho_{\min},
\end{equation*}
where $\psi$ is our interpolation kernel.
\subsection{Modelling the obstacles}
Mathematically, it is not so clear how obstacles can be modelled simultaneously for particles and densities. 
We choose an algorithmic approach, which is easy to implement and can be justified for discrete objects.

We model obstacles in the preprocessing step by modifying the domain discretisation. After partitioning the domain into a grid with cell size $\Delta x$ we snap the obstacle boundaries to cell edges, to ensure each cell is either fully accessible or fully covered by an obstacle.
In the fully covered cells, the potential field is set to infinity. As a result, these cells will be ignored in the fast marching algorithm.\\
At runtime, we apply a homogeneous Neumann boundary condition to the potential gradient. The boundary is identified by the transition from finite to infinite potential field values.
Finally, we increase the discomfort field in the cells around the obstacles to account for the obstacle clearance.
\section{Simulation results: Part I}
\label{sec:results}
We test the model discussed in Section~\ref{sec:crowds1} and our implementation by creating two different scenarios and observing pedestrian time spent in the scene, points of congestion and paths taken.
We present two cases. 
\begin{itemize}
    \item Case A: we handle the evacuation of a large crowd in an open space.
    \item Case B: we handle a building evacuation.
\end{itemize}
\subsection{Case A: Evacuation through a narrowing corridor}
The first scenario is depicted in Figure~\ref{fig:narrowd0}. It models an outdoor situation where a large crowd has to be evacuated via one escape route. Three more snapshots of the simulation are provides in Figure~\ref{fig:narrowd1} to Figure~\ref{fig:narrowd3}.
The scenario represents an evacuation of a dense crowd through an increasingly narrow corridor. We pick this case to investigate how the geometry of a scene impacts the density and how this affects the pedestrian interaction.
In addition, this scenario tests the models capabilities for dealing with high densities and large numbers of people.
\subsection{Choice of parameters}
To show our implementation is capable of handling dense and large crowds, we simulate 10000 pedestrians in an area of $320\times320$\meter\squared.
Assume each pedestrian has mass $m=1$.
The obstacles occupy a fraction of 0.18 of the scene, resulting in an initial average density of 0.119\meter\rpsquared.
The scene is partitioned into 6400 cells of $4\times4$\meter\squared. Each time step ($\Delta t$=0.05\second) the scene is updated.

The maximum speed of the pedestrians follows a normal distribution with mean $1.44\meter\per\second$ and standard deviation 0.15\meter\per\second. These values are representative for measured European pedestrian walking speed according to \cite{daamen07}. 
The exit has a width of 200 \meter. We impose no outflow condition on the exit, given that the pedestrians respect both a maximum speed and a minimum distance (of 0.25\meter) to other pedestrians.

These parameters are chosen to observe critical behaviour in the pedestrian interaction; ensuring the densities do not cause insurmountable bottlenecks, but do cause visible interaction.
\begin{figure}[h]
\centering
\begin{minipage}{.45\textwidth}
	\centering
	\includegraphics[width=0.5\textwidth]{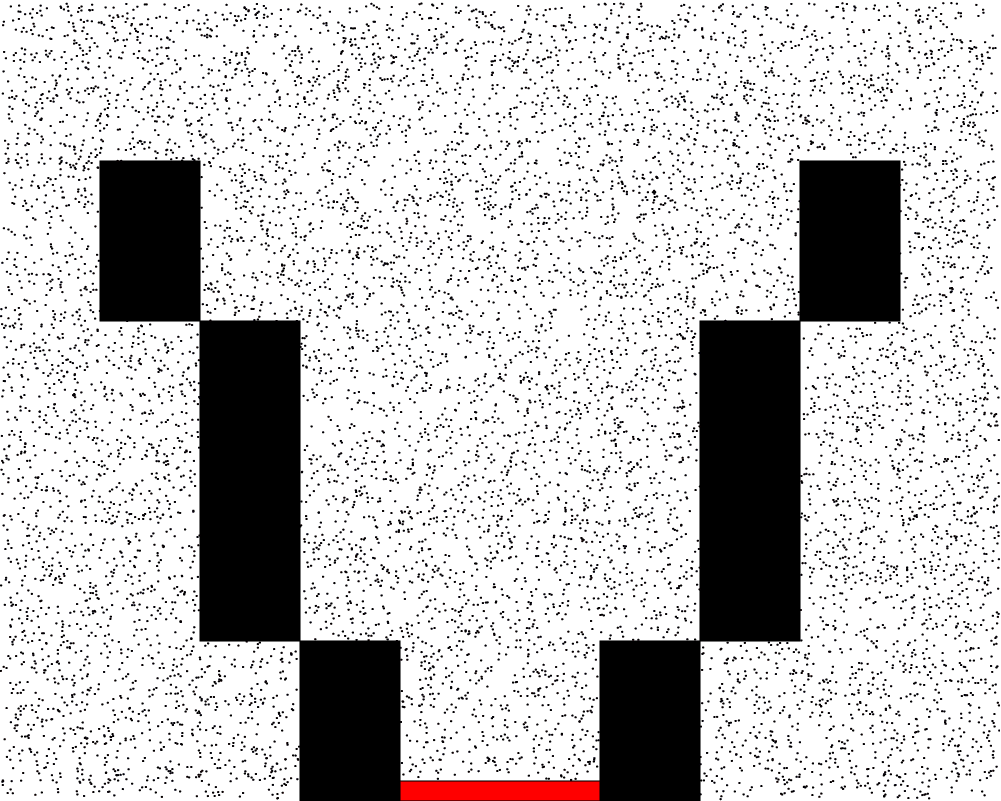}
	\caption{Initial state of Case A. The evacuation exit is at the bottom of the domain.}
	\label{fig:narrowd0}
\end{minipage}%
\hfill
\begin{minipage}{.45\textwidth}
	\centering
	\includegraphics[width=0.5\textwidth]{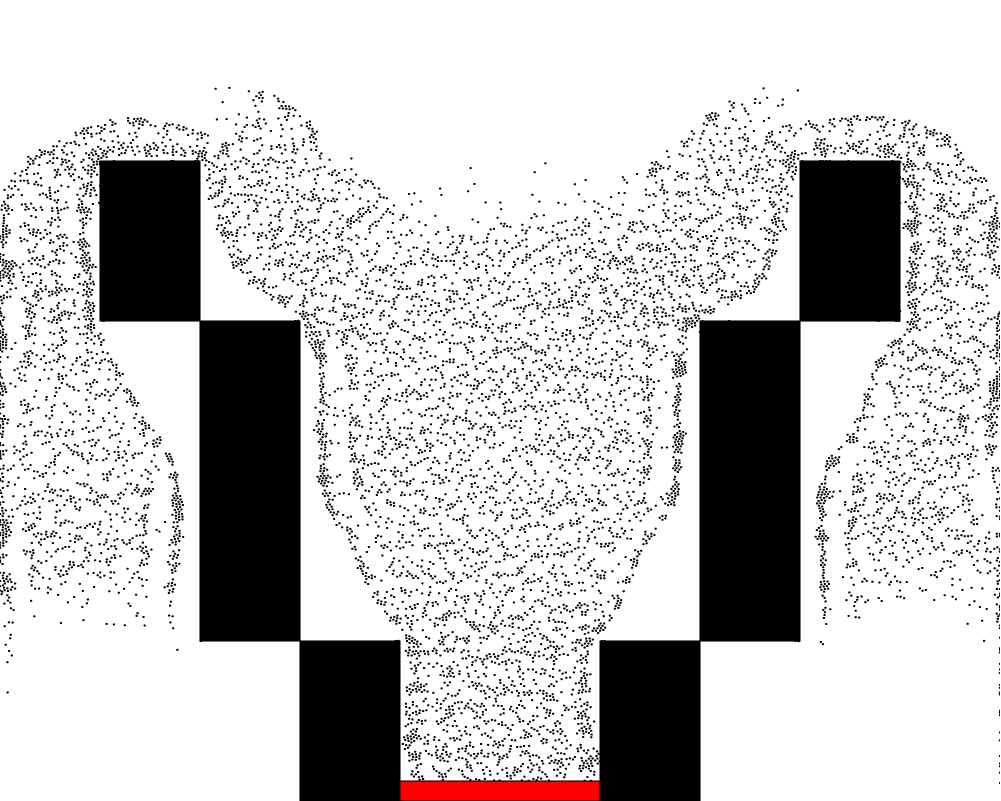}
	\caption{Scenario of Case A after 58 seconds.}
	\label{fig:narrowd1}
\end{minipage}
\begin{minipage}{.45\textwidth}
	\centering
	\includegraphics[width=0.5\textwidth]{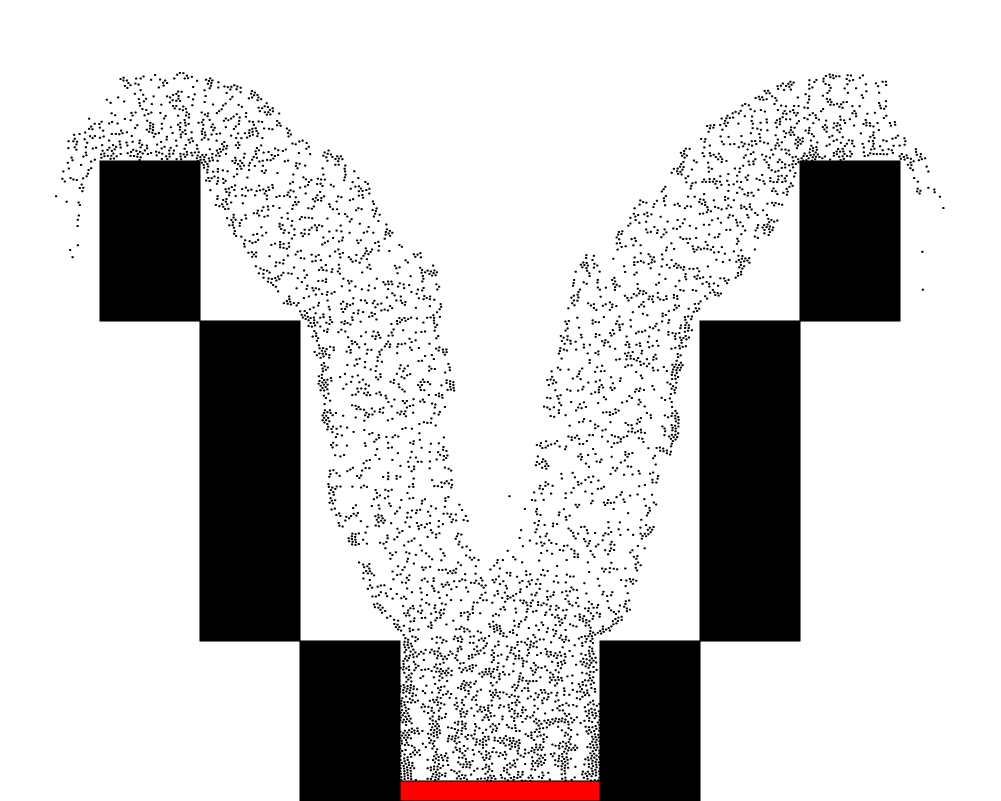}
	\caption{Scenario of Case A after 257 seconds.}
	\label{fig:narrowd2}
\end{minipage}%
\hfill
\begin{minipage}{.45\textwidth}
	\centering
	\includegraphics[width=0.5\textwidth]{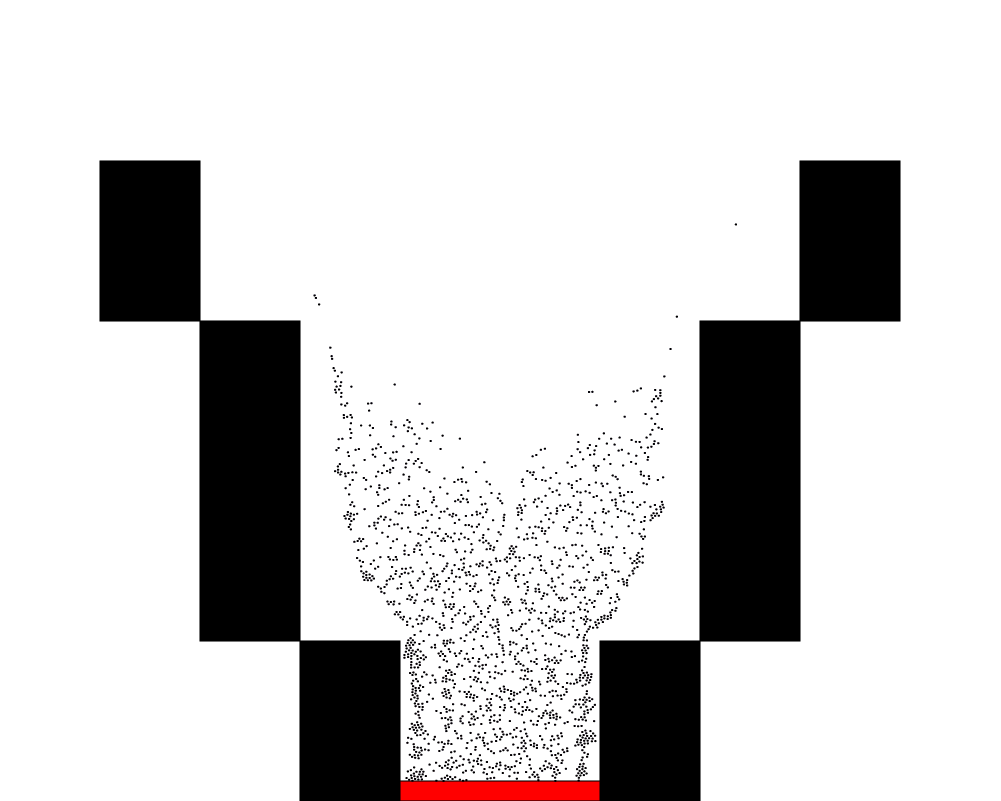}
	\caption{Scenario of Case A after 392 seconds.}
	\label{fig:narrowd3}
\end{minipage}
\end{figure}
\begin{figure}[h]
    \centering
    \includegraphics[width=\textwidth]{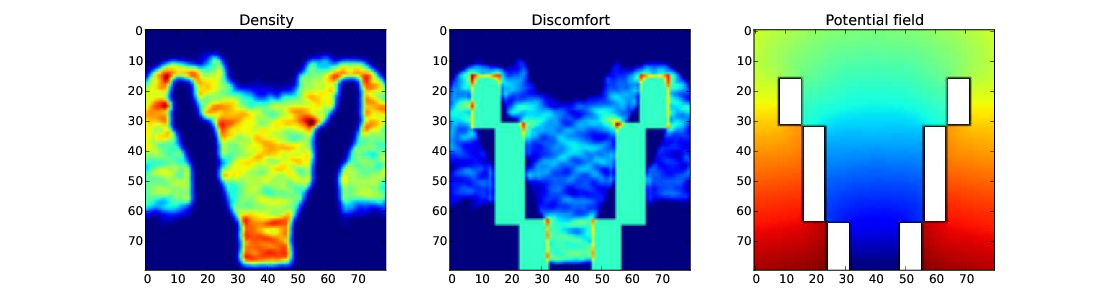}
    \caption{Fields corresponding to Figure~\ref{fig:narrowd1} ($t=58$).}
    \label{fig:narrowdf1}
    \includegraphics[width=\textwidth]{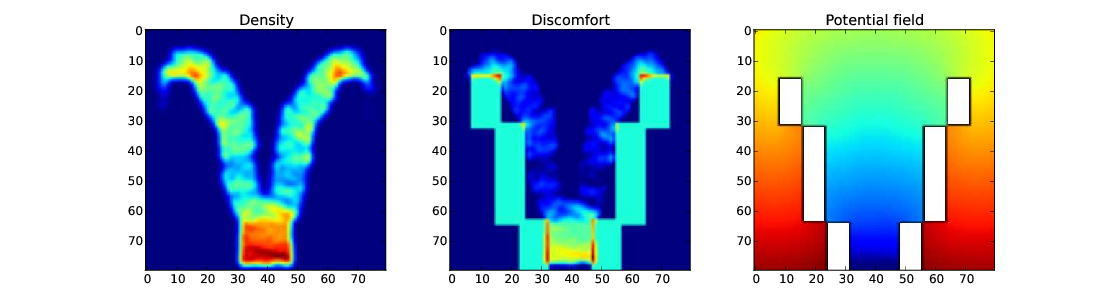}
    \caption{Fields corresponding to Figure~\ref{fig:narrowd2} ($t=257$).}
    \label{fig:narrowdf2}
    \includegraphics[width=\textwidth]{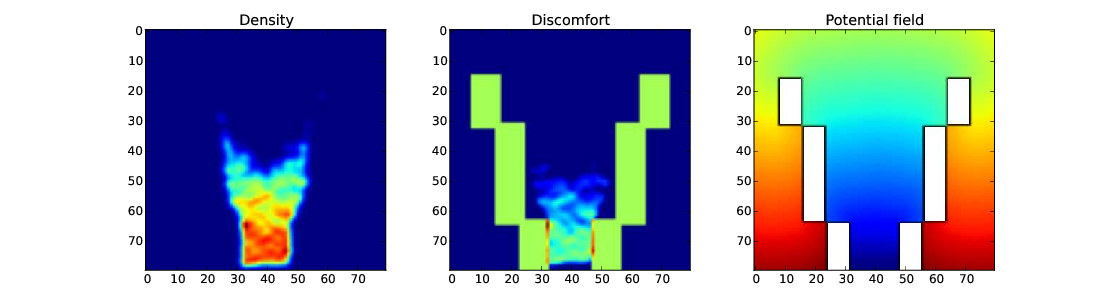}
    \caption{Fields corresponding to Figure~\ref{fig:narrowd3} ($t=392$).}
    \label{fig:narrowdf3}
\end{figure}

\subsection{Quantitative results}
The complete evacuation takes 703 seconds.
Figure~\ref{fig:narrowdf1} to Figure~\ref{fig:narrowdf3} depict the density, potential field and discomfort field at the three simulation snapshots.
This illustrates how the funnel-like obstacles in the scene influence the pedestrian paths.
Figure~\ref{fig:narrow_dyn_time} shows the data from the histogram as a coloured scatter plot of the time to exit. This serves as a measure of (observed) distance to the exit and corresponds with the depicted potential fields.
Figure~\ref{fig:narrow_dyn_delay} shows a scatter plot of the relative delay as a function of initial location.
Relative delay is defined as $1-\frac{\textrm{maximum speed}}{\textrm{average speed}}$ and represents the increase in time a pedestrian remains in the scene due to interaction with other pedestrians.

\begin{figure}[h]
\centering
\begin{minipage}{.45\textwidth}
	\centering
	\includegraphics[width=\textwidth]{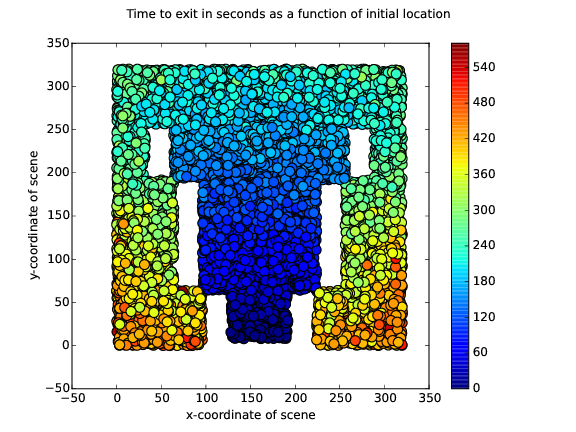}
	\captionof{figure}{Walking time to exit as a function of initial location.}
	\label{fig:narrow_dyn_time}
\end{minipage}%
\hfill
\begin{minipage}{.45\textwidth}
	\centering
	\includegraphics[width=\textwidth]{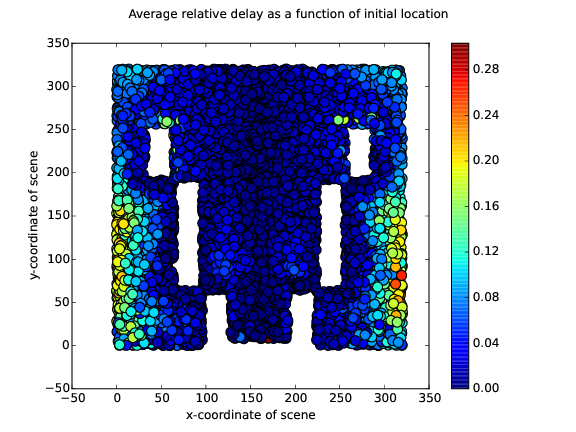}
    \captionof{figure}{Experienced delay as a function of initial location.}
    \label{fig:narrow_dyn_delay}
\end{minipage}
\end{figure}
\subsection{Discussion}
We see from the density fields that in spite of the cell-aligned obstacles, the shape of the crowd is smooth on locations with low density. 
As expected, densities increase when the width of the corridor decreases. 
In locations with high density, we see the crowd fills up nearly all the space, and the shape of the crowd adapts itself to the geometry. Yet all pedestrians have smooth and realistic paths, as is visible in Figure~\ref{fig:narrow_dyn_paths}.

\begin{figure}
\centering
\includegraphics[width=0.7\linewidth]{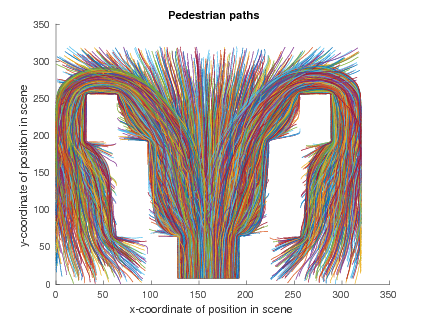}
\caption{Observed paths in the evacuation scene.}
\label{fig:narrow_dyn_paths}
\end{figure}



Combining the information from the density plots in Figure~\ref{fig:narrowd3} and the delay plots in Figure~\ref{fig:narrow_dyn_delay}, we see that while congestions occur at bottlenecks (places where the corridor becomes smaller), delays occur at the spaces before that. 
This is crucial information in planning and executing staged evacuations, and determining the location of escape routes.

To investigate the relation between density and mean speed, we rerun the simulation with a different number of pedestrians, while keeping all other parameters intact. 
The mean speed is computed and plotted against the initial density in Figure~\ref{fig:narrow_speed_dens}
While speed declines when density increases, the rate of decrease is quite small when compared to experimental data. In \cite{weidmann93}, measurements are done on the relation between mean speed and density. They observe the mean speed declines linearly with respect to the density, but with a higher rate of change. \\
More experimental data is found in \cite{seyfried10}, but since the experiments done in \cite{weidmann93} correspond to many of the other measurements, we use it as our reference.
\begin{figure}
	\centering
	\includegraphics[width=0.5\linewidth]{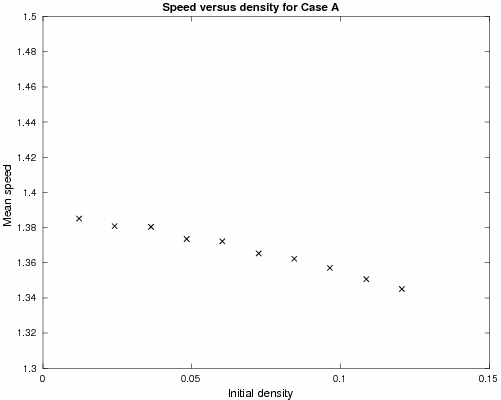}
	\caption{Density and speed measured for 10 simulations, keeping all other parameters equal.}
	\label{fig:narrow_speed_dens}
\end{figure}

Because no potential gradient is defined at obstacle boundaries, in the rare event a pedestrian collides with an obstacle, his motion is arrested.
In this scenario with the given parameters, this happened for $0.7\%$ of the pedestrians. This has been remedied in subsequent runs by finding the stationary pedestrians and move them one step into a random unobstructed direction.
\clearpage
\subsection{Case B: Evacuating a building}
In Case B we investigate the capabilities of this model in indoor environments, dominated by walls and corridors. 
The main focus is observing how the model works in case of more complicated scenarios.
The scene is depicted in Figure~\ref{fig:compd0} and models a (small) crowd leaving a building. We decrease dimensions of the scene, as well as the number of pedestrians. We decrease the minimum distance to $0.15\meter$ to account for indoor walking and observe the effects on the paths of the pedestrians.

\begin{figure}[h]
\centering
\begin{minipage}{.45\textwidth}
	\centering
	\includegraphics[width=0.5\textwidth]{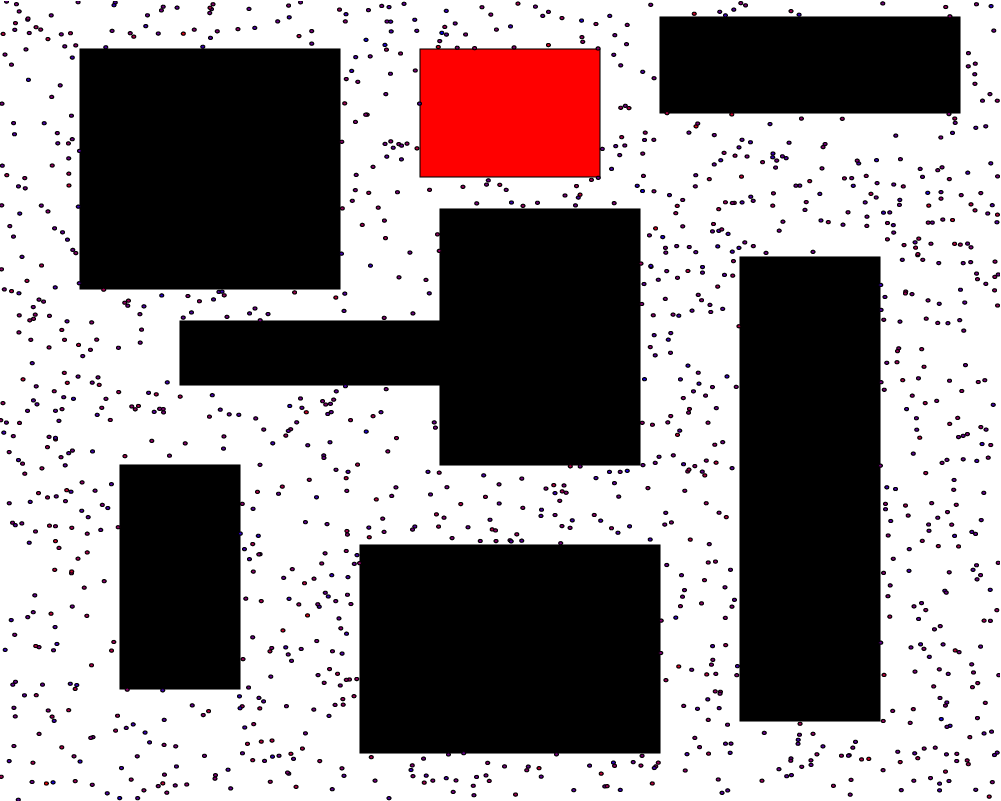}
	\caption{Initial state of the scenario. The red rectangle at the top represents the exit.}
	\label{fig:compd0}
\end{minipage}%
\hfill
\begin{minipage}{.45\textwidth}
	\centering
	\includegraphics[width=0.5\textwidth]{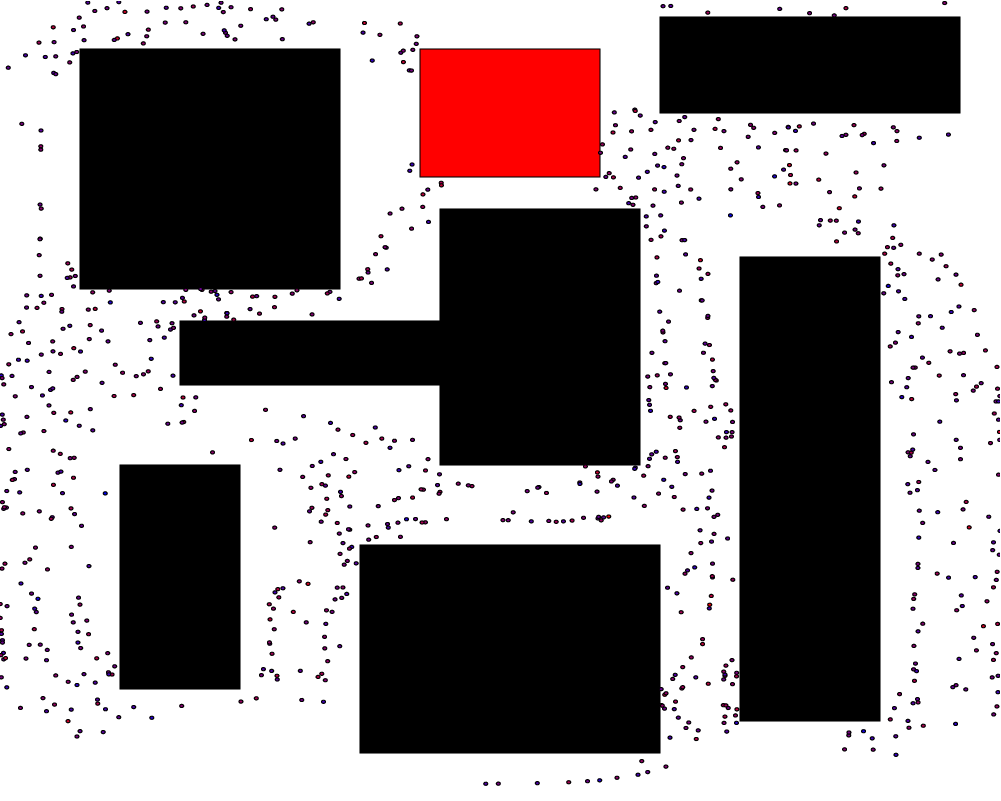}
	\caption{State of the scenario after 13 seconds.}
	\label{fig:compd1}
\end{minipage}
\begin{minipage}{.45\textwidth}
	\centering
	\includegraphics[width=0.5\textwidth]{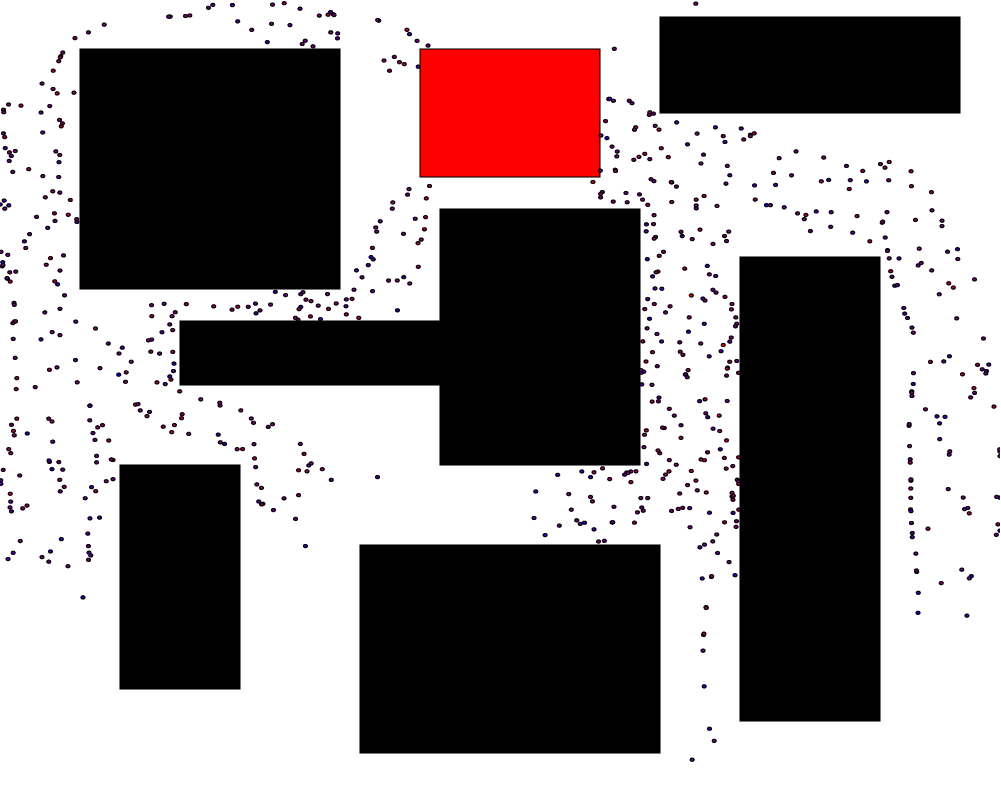}
	\caption{State of the scenario after 38 seconds.}
	\label{fig:compd2}
\end{minipage}%
\hfill
\begin{minipage}{.45\textwidth}
	\centering
	\includegraphics[width=0.5\textwidth]{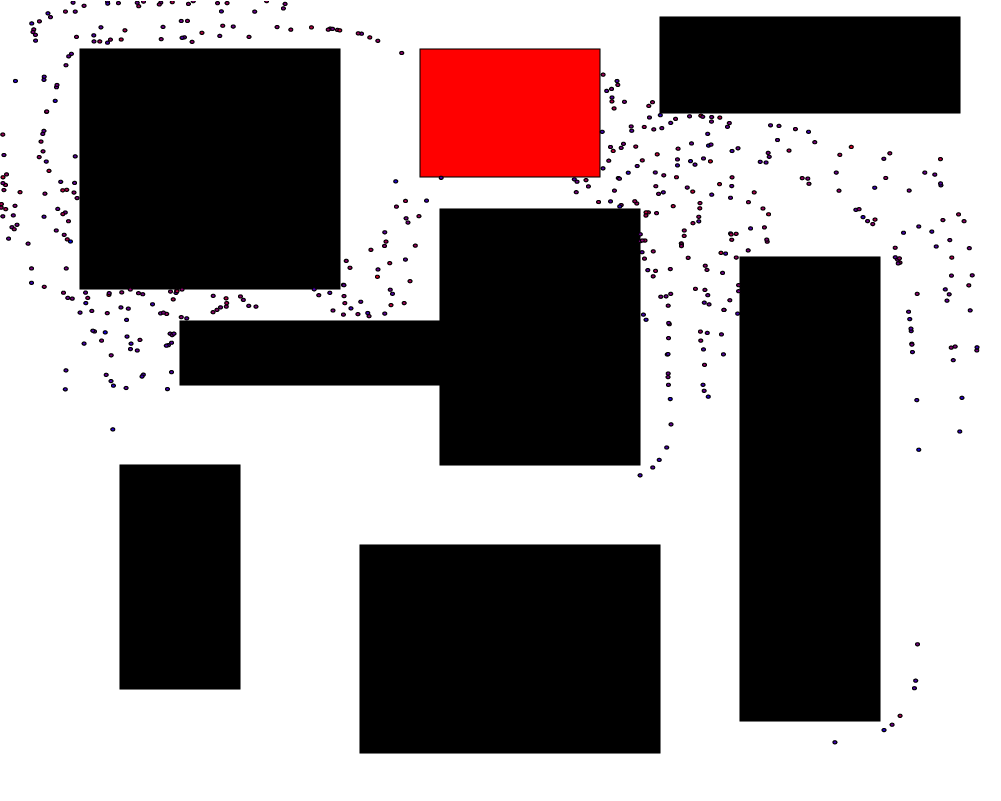}
	\caption{State of the scenario after 71 seconds.}
	\label{fig:compd3}
\end{minipage}
\end{figure}
\begin{figure}[h]
    \centering
    \includegraphics[width=\textwidth]{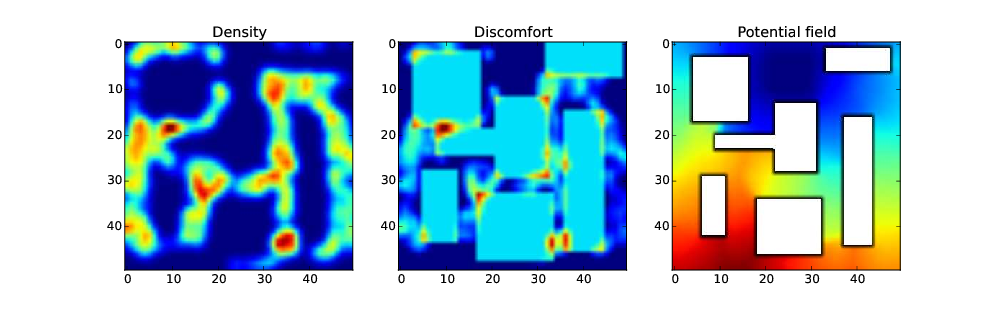}
    \caption{Fields corresponding to Figure \ref{fig:compd1} ($t=13$).}
    \label{fig:compdf1}
    \includegraphics[width=\textwidth]{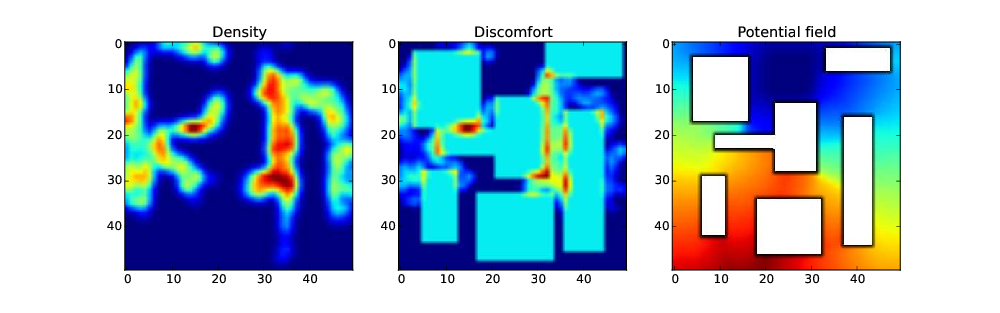}
    \caption{Fields corresponding to Figure \ref{fig:compd2} ($t=38$).}
    \label{fig:compdf2}
    \includegraphics[width=\textwidth]{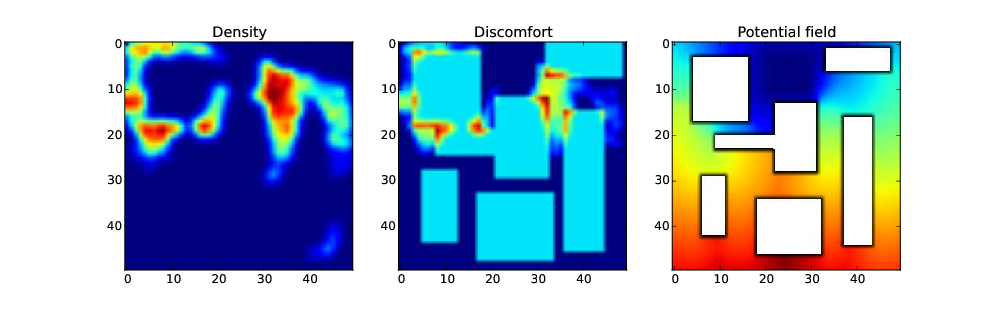}
    \caption{Fields corresponding to Figure \ref{fig:compd3} ($t=71$).}
    \label{fig:compdf3}
\end{figure}

\subsection{Choice of parameters}
We simulate 1200 pedestrians in a scene with size $100\times100\meter\squared$. The obstacles cover a fraction of 0.36 of the space in the scene, so that the average initial density is equal to $0.188\meter\rpsquared$.
We maintain a time step of $\Delta t=0.05\second$ and a grid of $50\times50$. 
\subsection{Quantitative results}
The building clears in 126 seconds. As with the last case, we include the relevant fields at several moments in the simulation, visible in Figure~\ref{fig:compdf1} to Figure~\ref{fig:compdf3}. 
We see that the potential field is highly influenced by the density through the discomfort field. 
We also include the time and delay scatter plots in Figure~\ref{fig:comp_dyn_time} and Figure~\ref{fig:comp_dyn_delay}.
In Figure~\ref{fig:comp_speed_dens} we plotted the mean speed against the initial density for various simulations.

\begin{figure}[h]
	\centering
	\begin{minipage}{.45\textwidth}
		\centering
		\includegraphics[width=\textwidth]{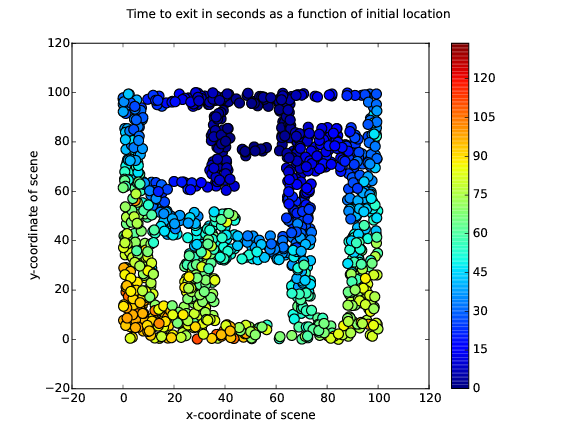}
		\captionof{figure}{Walking time to exit as a function of initial location.}
		\label{fig:comp_dyn_time}
	\end{minipage}%
	\hfill
	\begin{minipage}{.45\textwidth}
		\centering
		\includegraphics[width=\textwidth]{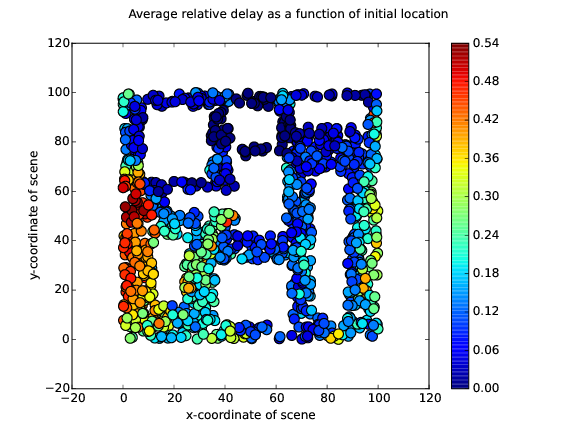}
		\captionof{figure}{Experienced delay as a function of initial location.}
		\label{fig:comp_dyn_delay}
	\end{minipage}
\end{figure}

\begin{figure}
	\centering
	\includegraphics[width=0.7\linewidth]{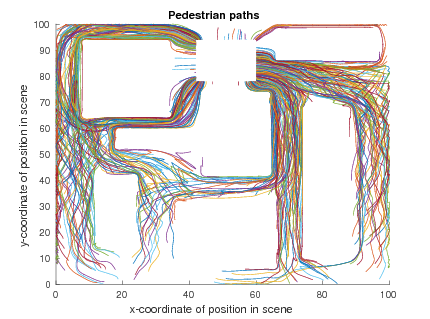}
	\caption{Observed paths from the indoor scene. Upon close inspection, some paths are seen to oscillate.}
	\label{fig:comp_paths}
\end{figure}

\subsection{Discussion}
This scenario shows that the model is capable of handling complex scenarios. If the grid resolution is high enough, then the potential field is able to account for both obstacles and high densities.
However, we see some delay for pedestrians with more than one possibility to reach the exit. 
Since the walking cost evolves with the motion of the pedestrians, the path with the lowest walking cost is also subject to change. This is observed when pedestrians are choosing between two paths with similar costs. When, due to changes in densities, a chosen path becomes more costly than the other one, pedestrians switch directions, as can be observed in Figure~\ref{fig:comp_paths}. This can happen a number of times, depending on the volatility of the crowd. This causes a minor delay for pedestrians at these locations.

This reflects to the knowledge and decision base of the crowd. It implies that a pedestrian exhibits an exhaustive knowledge of the scene and the locations of all other pedestrians in the crowd. 
While the former is to be expected in a familiar environment, the latter is highly unlikely when vision is obstructed by walls and other people.\\
It also implies a lack of anticipation: while some delay might be experienced when choosing the longer of two paths, it will most likely not exceed the delay experienced by continuously switching between the two.


Figure~\ref{fig:comp_speed_dens} shows another effect of lower maximum densities: a decreasing relation between density and average speed, significantly stronger than in Case A. 
We observe the same kind of relation as measured in \cite{weidmann93}, a linear decline in mean speed when density increases. Because of the lower maximum density, the decline in Figure~\ref{fig:narrow_speed_dens} is steeper than in these measurements.

\begin{figure}[!h]
	\centering
	\includegraphics[width=0.4\linewidth]{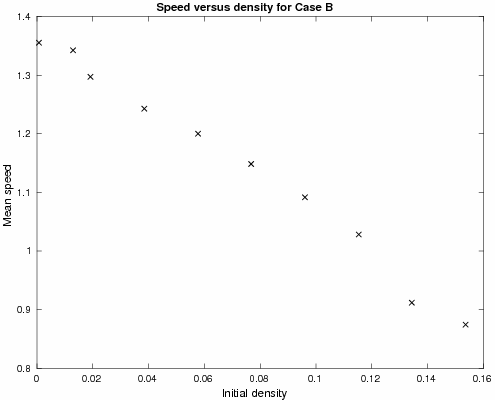}
	\caption{Density and speed measured for 10 simulations, keeping all other parameters equal. Initial density has substantially more influence on mean speed than in Case A.}
	\label{fig:comp_speed_dens}
\end{figure}
\clearpage
\subsection{Application: Evaluating evacuation scenarios at music festival Lowlands}
This implementation was used in \cite{richardson15} for analysing evacuation scenarios on the Dutch music festival Lowlands, commissioned by their event management company LOC7000. 
It provided decent results for simulating large crowds of people between several festival terrains and identified bottlenecks in evacuation routes, congestion points on the festival terrain and estimates for evacuation times.
However, these results cannot be shown due to confidentiality constraints.
\clearpage

\section{Implementation of a multiscale model: Part II}
\label{sec:crowds2}
While the model in \ref{sec:potential_planner} proves to work quite well in steering large crowds, it suffers from two drawbacks. 
First, it is not robust with respect to high densities; If we want to model large-scale and dense crowds, the simulations become unstable and the particles no longer represent realistic pedestrian behaviour.
Second, the many parameters have no physical interpretation, thus making it difficult to couple to other implementations or real-life crowd data.

\subsection{Interaction potential-based potential}
To retake some control over individual pedestrian motion, we drop the domain potential and create a route planner to explicitly steer the pedestrian towards their goals.
We leave the interaction potential intact, so that we are still able to take advantage of the multiscale modelling structure.

We base the interaction potential function on Darcy's law, mentioned in Section~\ref{sec:macro}. 
We discuss some principles of route planning before proposing our implementation and discussing the pedestrian interaction.

\subsection{Route planning}
\label{sec:path_planning}
Microscopic models need to equip their pedestrians with a path leading to their goal. 
While this might seem obvious, recreating convincing pedestrian routes is far from trivial. 
One has to account for distance to the goal, avoiding of obstacles, avoiding of other (moving) pedestrians, incorporate limited vision angles, all the while maintaining computational efficiency. 
This discipline is called \worddef{path planning} or \worddef{route planning}.
Path planning is a popular topic in robotics and game development. 
We recognise two different ways of planning paths: \worddef{static} and \worddef{dynamic} planning. 

Static planning creates a path from the pedestrians location to his goal, respecting the geometry of the scene and (motionless) obstacles. Other pedestrians and moving obstacles are ignored. Usually, these paths only have to be planned once, for instance at the start of a simulation. 

Dynamic planning is a continuous process. Each time step the direction (or a whole path) of the pedestrian is recomputed. Although this allows for more flexibility, it often is a challenge to incorporate long-range effects and create smooth paths.

Of course, these two methods can be combined to complement each other.
\subsection{Route planning choice}
\label{sec:exp_planner}
A straightforward way to deal with static path planning in domains with obstacles is by computing the \worddef{visibility graph} of a geometry. An elaborate example is described in \cite{kallmann14}.
We base the path planner in our implementation on the visibility graph of the scene, and adapt the resulting paths to become more robust to pedestrian interaction and smooth as to resemble natural paths.
First we elaborate on how to construct and use a visibility graph.
\subsection{Constructing a visibility graph}
Let $M_i \subset \Omega$ for $i=1,\dots,n$ represent the $n$ obstacles present in the domain. Furthermore, to simplify the discussion, assume $\Omega,M_1,M_2,\dots$ to be polygons. Let $G\subset \Omega$ be the pedestrians goal and $x$ be the location of the pedestrian under consideration.

Define $H = (V,A)$ as an undirected weighted graph, the visibility graph. 
Let $V_1$ be a vertex set consisting of vertices of the obstacle polygons $M_1,\dots,M_n$.
Let $V_2$ be a vertex set consisting of pedestrian locations $\vec{x}_{a_1},\dots,\vec{x}_{a_N}$.
We define $V$ as
\begin{equation*}
    V:=V_1 \cup V_2 \cup G,
\end{equation*}
and $A$ as 
\begin{equation*}
    A:=\left\{ (u,v) | u \in V_1 \cup G,v\in V, f(u,v)=0  \right\},
\end{equation*}
where $f(u,v)$ denotes the number of obstacles on the line segment between $u$ and $v$.

This means all obstacle vertices are mutually connected, and all pedestrian positions are connected to the obstacles.
The weight of edge $(u,v)\in A$ is equal to its Euclidean distance $d(u,v)$.\\

Figure~\ref{fig:example_path} and Figure~\ref{fig:example_graph} illustrate a scene and the corresponding visibility graph.
Now, to compute an unobstructed path for a pedestrian at position $x$, use a graph search algorithm (like the $A^*$ search algorithm) to compute the shortest path from $x$ to a vertex from $G$. 
All paths are determined and stored in the preprocessing phase. 
\begin{figure}[h!]
\centering
\begin{minipage}{.45\textwidth}
	\centering
	\includegraphics[width=\textwidth]{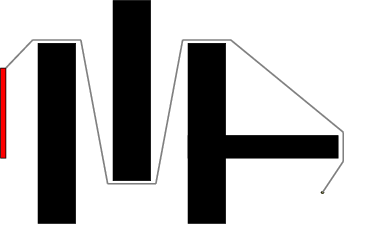}
	\caption{A scene with one pedestrian and his indicative path to the goal.}
	\label{fig:example_path}
\end{minipage}%
\hfill
\begin{minipage}{.45\textwidth}
	\centering
	\includegraphics[width=\textwidth]{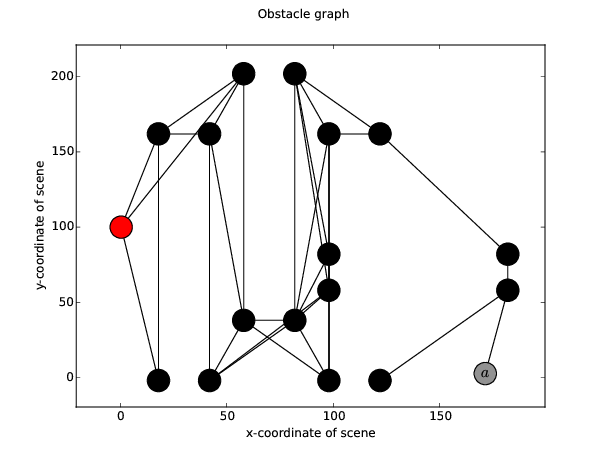}
	\caption{The weighted graph corresponding to the scene in Figure \ref{fig:example_path}. The black nodes correspond to the corners of the obstacles.}
	\label{fig:example_graph}
\end{minipage}
\end{figure}\\
At runtime, each time step the velocity of the next pedestrian is directed towards the next vertex.
\subsection{Intersections with obstacles}
\label{sec:crosses_obstacle}
To construct a visibility graph, we require a way of determining whether two edges in the graph are visible to each other. In other words, we need to check if the line segment between to vertices does not intersect with any obstacles.
This is a common operation in game development, so several method have been developed to solve this problem.
In this section we propose our implementation, derived from a mathematical perspective. 
Using some straightforward linear algebra we discuss theory, implementation and performance.

First, we need to give some definitions:
\begin{newdef}
	Let $\Omega \in \mathbb{R}^2$ be a scene.
	Let $l\subset \Omega$ be a line segment from $\vec{p}$ to $\vec{q}$ with $\vec{p},\vec{q} \in \mathbb{R}^2$. Then the corresponding line $L$ is defined as the set of all $\vec{x}\in\mathbb{R}^2$ satisfying
	\begin{equation}
		\langle\vec{a},\vec{x}\rangle=b,
	\end{equation}
	where $\langle\cdot,\cdot\rangle$ denotes the dot product and $\vec{a}$ and $b$ equal 
	\begin{align}
		\vec{a} &:= \left( -(p_y-q_y),(p_x-q_x)\right),\\
		b   &:=  p_y(p_x - q_x) - p_x(p_y - q_y).
	\end{align}
    \label{def:line}
\end{newdef}
A line separates the set $\Omega$ into two sets $\Omega_{L^+}$ and $\Omega_{L^-}$ such that
\begin{align}
\Omega = \Omega_{L^+} \cup \Omega_{L^-} \cup L,\\
\Omega_{L^+}\cap\Omega_{L^+}=\emptyset.
\end{align}
To determine whether a point $\vec{x}$ belongs to either one of the subsets (or the line itself), the only required operation is computing the dot product with $\vec{a}$.

\begin{newdef}
	Let $\vec{x}$ be a point in $\mathbb{R}^2$. Let $L$ be a hyperplane separating $\Omega$ into subsets $\Omega_{L^+}$ and $\Omega_{L^-}$. These sets are defined as follows:
	\begin{equation}
		\begin{cases}
			\vec{x} \in L&\mbox{ if }(a,x) = b,\\
			\vec{x}  \in \Omega_{L^-} &\mbox{ if }(a,x) < b,\\
			\vec{x}  \in \Omega_{L^+} &\mbox{ if }(a,x) > b.\\
		\end{cases}
		\label{eq:separation}
	\end{equation}
    \label{def:sep}
\end{newdef}
\begin{newdef}
	Let $l$ be a line segment from $\vec{p}$ to $\vec{q}$. The circumscribed rectangle $P_l$ is defined as the smallest rectangle containing $l$:
	\begin{equation}
		P_l := \left\{ w\in\mathbb{R}^2| \min\{p_x,q_x\} \leq w_x \leq \max\{p_x,q_x\}, \min\{p_y,q_y\} \leq w_y \leq \max\{p_y,q_y\}  \right\}.
		\label{eq:bounding_rect}
	\end{equation}
\end{newdef}
An example of a circumscribed rectangle is illustrated in Figure \ref{fig:bounding_rect_ex}.
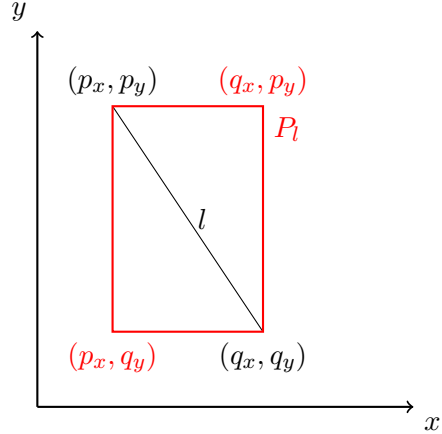
\begin{figure}
	\centering
	\begin{tikzpicture}
		\draw[step=2cm,black,thin] (1,4) node[anchor=south] {$(p_x,p_y)$} -- node[anchor=west]{$l$} (3,1) node[anchor=north] {$(q_x,q_y)$};
		\draw[thick,red] (1,1) rectangle (3,4) node[anchor=north west]{$P_l$};
		\draw[step=2cm,black,thin] (1,1) node[anchor=north,red] {$(p_x,q_y)$};
		\draw[step=2cm,black,thin] (3,4) node[anchor=south,red] {$(q_x,p_y)$};
		\draw[thick,->] (0,0) -- (5,0) node[anchor=north west] {$x$};
		\draw[thick,->] (0,0) -- (0,5) node[anchor=south east] {$y$};
	\end{tikzpicture}
	\captionof{figure}[Caption]{Line segment $l$ and the corresponding circumscribed rectangle $P_l$ with shape $[(p_x,q_y),(q_x,p_y)]$.}
	\label{fig:bounding_rect_ex}
\end{figure}

\begin{newlemma}
	Let $A$ and $B$ be two rectangles with shapes $[\vec{s}_A,\vec{t}_A]$ and $[\vec{s}_B,\vec{t}_B]$. Then the intersection $A\cap B$ is non-empty if and only if
	\begin{equation}
		\vec{t}_{Ai} < \vec{s}_{Bi} \mbox{ or }	\vec{t}_{Bi} < \vec{s}_{Ai}\,, \ i=1,2.
		\label{eq:intersection}
	\end{equation}
	\label{lem:intersect}	
\end{newlemma}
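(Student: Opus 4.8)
The plan is to reduce the planar question to two one‑dimensional ones via a separating‑axis argument. Note first that the inequalities displayed in \eqref{eq:intersection}, read one axis at a time, are exactly the conditions under which the coordinate projections of $A$ and $B$ onto that axis are \emph{separated}; so the natural route is to prove the equivalent disjointness criterion — $A\cap B=\emptyset$ iff $\vec{t}_{Ai}<\vec{s}_{Bi}$ or $\vec{t}_{Bi}<\vec{s}_{Ai}$ for some $i\in\{1,2\}$ — and read off the lemma by complementation. Before anything else I would normalise the corner labels: the shape $[\vec{s},\vec{t}]$ records two opposite corners, not necessarily the lower‑left/upper‑right pair (witness the shape $[(p_x,q_y),(q_x,p_y)]$ of $P_l$ in \eqref{eq:bounding_rect}), so I would relabel coordinates so that $\vec{s}_A\le\vec{t}_A$ and $\vec{s}_B\le\vec{t}_B$ componentwise; this alters neither rectangle.

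With that ordering in place, the first step is to factor each rectangle as a product of its coordinate projections, $A=[\vec{s}_{A1},\vec{t}_{A1}]\times[\vec{s}_{A2},\vec{t}_{A2}]$ and likewise for $B$. A point $(w_1,w_2)$ lies in $A\cap B$ iff $w_i$ lies in the $i$th projection of \emph{both} rectangles for $i=1$ and $i=2$ simultaneously; hence $A\cap B\neq\emptyset$ exactly when the $i$th projections of $A$ and $B$ overlap for every $i$, i.e.\ $A\cap B=\emptyset$ iff those projections are disjoint on at least one axis.

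The second step is the one‑variable fact that for $a\le b$ and $c\le d$ the closed intervals $[a,b]$ and $[c,d]$ are disjoint iff $b<c$ or $d<a$: the ``if'' direction is immediate, and for ``only if'' observe that if $b\ge c$ and $d\ge a$ then $\max\{a,c\}\le\min\{b,d\}$, which exhibits a common point. Specialising to $(a,b,c,d)=(\vec{s}_{Ai},\vec{t}_{Ai},\vec{s}_{Bi},\vec{t}_{Bi})$ and combining with the first step yields the disjointness criterion, and hence the lemma by negation. None of these steps is difficult; the only place I would be careful — rather than over the routine interval inequalities — is the corner‑label normalisation, since the lemma is stated for arbitrary opposite‑corner pairs and the criterion \eqref{eq:intersection} only takes its clean form once the projections are written as intervals with the smaller endpoint first.
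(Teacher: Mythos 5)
The paper itself gives no proof of Lemma~\ref{lem:intersect}: the lemma is stated bare and then invoked inside the proof of Theorem~\ref{thm:sep}, so there is no in-paper argument to compare yours against; your proposal fills that gap and is correct. It also quietly repairs the statement: as printed, condition \eqref{eq:intersection} (some coordinate $i$ with $\vec{t}_{Ai}<\vec{s}_{Bi}$ or $\vec{t}_{Bi}<\vec{s}_{Ai}$) is exactly the separation test, i.e.\ it characterises $A\cap B=\emptyset$ rather than $A\cap B\neq\emptyset$ --- and the way the lemma is later used (``if these rectangles do not intersect, then we are finished'') confirms that the intended content is the disjointness criterion you actually prove, with non-emptiness given by the negation (for each $i=1,2$, both $\vec{t}_{Ai}\ge\vec{s}_{Bi}$ and $\vec{t}_{Bi}\ge\vec{s}_{Ai}$). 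Your two reductions are the right ones: normalising the corner labels so each shape reads $[\vec{s},\vec{t}]$ with $\vec{s}\le\vec{t}$ componentwise (genuinely needed, since the paper records $P_l$ with opposite corners $[(p_x,q_y),(q_x,p_y)]$), and factoring the axis-aligned rectangles into products of coordinate intervals so that everything rests on the elementary fact that $[a,b]$ and $[c,d]$ are disjoint iff $b<c$ or $d<a$, with $\max\{a,c\}\le\min\{b,d\}$ furnishing a common point otherwise. The only caveat worth recording is that the product decomposition uses axis-alignment, which is consistent with the paper's setting where all obstacle faces (and circumscribed rectangles) are horizontal or vertical.
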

\begin{newthm}
	Let $l\subset \Omega$ be a line segment and $M \subset \Omega$ be an obstacle. Then the following statements are equivalent:
		\begin{enumerate}
			\item $l \cap M = \emptyset$.
			\item $P_l \cap M = \emptyset \mbox{ or } M \subset \Omega_{L^+} \mbox{ or } M \subset \Omega_{L^-}$.
		\end{enumerate}
	\label{thm:sep}
\end{newthm}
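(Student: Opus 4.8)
The plan is to prove the biconditional by establishing the two implications (2)$\Rightarrow$(1) and (1)$\Rightarrow$(2) separately. The one geometric fact that drives everything is that the segment is exactly the part of its line lying inside its circumscribed rectangle, i.e.\ $l = L \cap P_l$. I would record this as a short preliminary lemma: by \eqref{eq:bounding_rect} the endpoints $\vec{p},\vec{q}$ of $l$ occupy two \emph{opposite} corners of $P_l$ (its shape is $[(p_x,q_y),(q_x,p_y)]$ up to relabeling), and a point of the line $L$ lying strictly beyond an endpoint of $l$ violates one of the coordinate inequalities defining $P_l$; the degenerate case where $\vec{p}$ and $\vec{q}$ share a coordinate (so $P_l$ collapses to a segment and $L$ to an axis-parallel line) is checked directly. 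Both directions of the theorem use $l = L \cap P_l$.

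The implication (2)$\Rightarrow$(1) is the easy one and I would dispatch it first. If $P_l \cap M = \emptyset$ then $l \cap M = \emptyset$ because $l \subseteq P_l$. If instead $M \subseteq \Omega_{L^+}$ (and symmetrically for $\Omega_{L^-}$), then since $l \subseteq L$ and $L \cap \Omega_{L^+} = \emptyset$ by Definition~\ref{def:sep} we again get $l \cap M = \emptyset$. For (1)$\Rightarrow$(2) I would argue by contraposition: assume $\lnot$(2), i.e.\ $P_l \cap M \neq \emptyset$ together with $M \not\subseteq \Omega_{L^+}$ and $M \not\subseteq \Omega_{L^-}$, and show $l \cap M \neq \emptyset$. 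Because $\Omega = \Omega_{L^+} \cup \Omega_{L^-} \cup L$ and $M$ avoids neither open half-plane, either $M$ already meets $L$, or $M$ contains points $\vec{p}$ with $\langle \vec{a},\vec{p}\rangle > b$ and $\vec{q}$ with $\langle \vec{a},\vec{q}\rangle < b$; in the latter case connectedness of the polygon $M$ and the intermediate value theorem applied to the affine map $\vec{x}\mapsto \langle \vec{a},\vec{x}\rangle - b$ along a path in $M$ joining $\vec{p}$ to $\vec{q}$ give a point of $M$ on $L$. Hence in all cases $M \cap L \neq \emptyset$.

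It then remains to upgrade ``$M$ meets $L$'' and ``$M$ meets $P_l$'' to ``$M$ meets $L \cap P_l = l$'', and this is the main obstacle. The natural move is to pass to $K := M \cap P_l$, which is nonempty by hypothesis and convex provided $M$ is convex (for a general obstacle polygon one would first triangulate), and to rerun the sign-change argument inside $K$: pick $\vec{w} \in K$; if $\langle\vec{a},\vec{w}\rangle = b$ we are done since $\vec{w}\in K \subseteq P_l$ forces $\vec{w}\in l$; otherwise walk from $\vec{w}$ toward a point of $M$ on the opposite side of $L$, track the last exit of that segment from the convex rectangle $P_l$, and locate a zero of $\langle\vec{a},\cdot\rangle - b$ that is still inside $P_l$. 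The hard part is genuinely making this confinement step airtight — one has to use convexity of both $M$ and $P_l$ and argue carefully about how the connecting segment enters and leaves $P_l$, and truly degenerate configurations (for example $M$ meeting $P_l$ only at a single corner that is an endpoint of $l$) must be treated by hand. The bookkeeping for the rectangle overlaps that arise can be delegated to Lemma~\ref{lem:intersect}, and Definition~\ref{def:sep} reduces every ``which side of $L$'' test to one dot product, which is what ultimately makes the criterion cheap to evaluate in the visibility-graph construction.
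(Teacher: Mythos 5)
Your direction (2)\,$\Rightarrow$\,(1) and your preliminary observation $l = L\cap P_l$ coincide with the paper's argument, so that part is fine. The problem is the contrapositive direction, where you first derive $M\cap L\neq\emptyset$ and then defer the real content --- confining a sign change of $\langle\vec{a},\cdot\rangle-b$ to $P_l$ --- to a ``walk from $w\in M\cap P_l$ towards a point of $M$ on the other side of $L$ and track the last exit from $P_l$'' step that you yourself admit is not airtight. That step cannot be repaired at the level of generality you set up (convex $M$, or a triangulated polygon), because the statement you are trying to confine is false there. Take $l$ from $(0,0)$ to $(1,1)$, so $P_l=[0,1]^2$ and $L$ is the line $y=x$, and let $M$ be the triangle with vertices $(0.5,0.9)$, $(1.3,1.35)$, $(1.4,1.2)$. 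Then $M\cap P_l\neq\emptyset$, $M$ has points strictly in both $\Omega_{L^+}$ and $\Omega_{L^-}$, yet $M\cap L$ is the portion of $y=x$ with $x\in[1.1,1.32]$, which misses $l$ entirely: the segment from $w$ to the far-side point crosses $L$ outside $P_l$, which is exactly the escape your ``last exit'' bookkeeping cannot prevent. Consequently the triangulation idea for general polygonal obstacles is also a dead end --- the equivalence itself fails for such $M$.

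The missing ingredient, and the one the paper's proof actually leans on, is the restriction to axis-aligned rectangular obstacles. Then $C:=P_l\cap M$ is again an axis-aligned rectangle, and one chooses \emph{both} test points $\vec{c}^+\in C\cap\Omega_{L^+}$ and $\vec{c}^-\in C\cap\Omega_{L^-}$ inside $C$; the segment joining them stays in $C\subset P_l\cap M$, so its crossing of $L$ lies in $L\cap P_l=l$ and in $M$, finishing the argument. In your set-up only one of the two points is kept inside $P_l$, which is precisely why the crossing can escape. To be fair to you, the paper itself asserts that $C^+$ and $C^-$ are non-empty ``by construction'' without justification, and proving that claim is where the rectangular structure (the line $L$ passes through two opposite corners of $P_l$, and $C$ is an axis-aligned sub-rectangle of $P_l$) is genuinely used --- the triangle above shows convexity alone does not suffice. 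So to complete your proof you would need to (i) adopt the paper's rectangular-obstacle hypothesis explicitly, and (ii) show that $M\cap P_l$ meets $L$ whenever $M$ meets $P_l$ and straddles $L$, for instance by comparing the coordinate projections of $M\cap P_l$ with those of $P_l$; with that lemma in hand, your sign-change argument inside $C$ closes as in the paper.
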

\begin{proof}
	(2) $\implies$ (1):\\
	Assume $P_l \cap M = \emptyset$. We know $l \subset P_l$ and therefore $l \cap M = \emptyset$.\\
	Otherwise, without loss of generality, assume $M \subset \Omega_{L^+}$. We know $\Omega_{L^+} \cap l = \emptyset$ so $M \cap l = \emptyset$.

	\ \\
	$\neg (2) \implies \neg (1)$:\\
	Assume $P_l \cap M \neq \emptyset,\, M \not\subset \Omega_{L^+} \mbox{ and } M \not\subset \Omega_{L^-}$.\\
	Let $C := P_l \cap M$. $C$ is the intersection of two rectangular sets and therefore rectangular itself.\\
	Let $C^+:=P_l \cap M \cap \Omega^{L^+}$ and $C^-:= P_l \cap M \cap \Omega^{L^-}$. By construction, we know $C^+$ and $C^-$ are non-empty sets.
	
	\ \\
	We pick $\vec{c}^+ \in C^+$ and $\vec{c}^- \in C^-$ and construct line segment $c$ from $\vec{c}^+$ to $\vec{c}^-$. Since $C^+,C^- \subset C$ and $C$ is a convex set, we know $c \subset C$. But since $\vec{c}^+ \in \Omega_{L^+}$ and $\vec{c}^- \in \Omega_{L^-}$ we know $c$ intersects $l$, so $l \cap M \neq \emptyset$.
\end{proof}
Based on Theorem~\ref{thm:sep}, we obtain a computationally efficient way to find an intersection between some obstacle $M$ and a line segment $l$.

We first compare the circumscribed rectangle of the line segment with the obstacle using Lemma~\ref{lem:intersect}. If these rectangles do not intersect, then we are finished. 
Otherwise, we compute the coordinates of the line using Definition~\ref{def:line}. We determine the location of the obstacle vertices with respect to the line using Definition~\ref{def:sep}.

In the implementation, we restrict ourselves to rectangular obstacles. We allow for aggregation of these rectangles to create more general shapes, as is visible in Figure~\ref{fig:example_path}. 
This is motivated by the fact that many situations can be reproduced by this obstacle modelling technique.
The final part of the discussion will assume (aggregated) rectangular objects.

Figure \ref{fig:crosses_obstacle} depicts an example of an obstacle and a line segment which do not intersect.
%
Noting that any vertex $v$ can be connected to no more than 3 vertices from rectangular obstacle $M_i$ and obstacles can overlap each other, we find an upper bound on the number of edges and vertices in the graph.
With obstacles $M_1,\dots,M_m$ and pedestrians $a_1,\dots,a_n$, graph $(V,A)$ must satisfy 
\begin{align*}
    |V| &\leq n+4m,\\
    |A| &\leq \frac{3m^2}{2} + 3mn,
\end{align*}
which provides us with an upper bound for the number of intersections we have to check.\\
For a $k$-sided polygon, the circumscribed rectangle is computed in $\bigo{k}$ steps. A line segment is a two-sided polygon, and since we model our obstacles as rectangles, no extra computation is needed.
Comparing the boxes takes at most $4$ operations, and identifying the locations of the obstacle vertices with respect to the lines takes another $k$ operations.
These recipes lend themselves for vectorisation and therefore perform well in languages like MATLAB and Python.\\
Using $G$ as a source node, we run Dijkstra's shortest path algorithm once and finish in $\bigo{|A|+|V|\log|V|}$ time. Expressed in $m$ (number of obstacles) and $n$ (number of pedestrians), we obtain
\begin{equation}
    \bigo{\frac{3m^2}{2} + 3mn + (n+4m)\log(n+4m)} = \bigo{m^2+mn+n\log n}.
\end{equation}
We focus on situations where $n\gg m$.

\begin{figure}[ht]
	\centering
	\begin{tikzpicture}
		\draw[step=2cm,black,thin] (1,4) node[anchor=south] {$(p_x,p_y)$} -- node[anchor=west]{$l$} (3,1) node[anchor=north] {$(q_x,q_y)$};
		\draw[step=2cm,black,thin,dashed] (0,5.5) -- (3.67,0) node[anchor=west]{$L$};
		\draw[thick,red] (1,1) rectangle (3,4) node[anchor=north west]{$P_l$};
		\draw[thick,blue] (0.1,0.5) rectangle (2,2) node[anchor=north west]{$M$};
		\draw[thick,->] (0,0) -- (5,0) node[anchor=north west] {$x$};
		\draw[thick,->] (0,0) -- (0,5) node[anchor=south east] {$y$};
	\end{tikzpicture}
	\captionof{figure}[Caption]{Line segment $l$ and obstacle $M$. There is a non-empty intersection between $P_l$ and $M$, but no intersection between $l$ and $M$ because $M$ lies entirely on one side of the hyperplane $L$.}
	\label{fig:crosses_obstacle}
\end{figure}
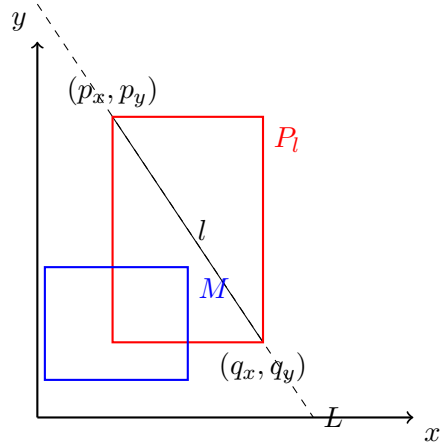

Yet these paths do not have any clearance with respect to obstacles. In addition, if two pedestrians are close, then it is very likely that eventually their paths will be identical.\\
Therefore, we propose an improvement.
\subsubsection*{Exponential waypoints}
When constructing the visibility graph, we increase the size of the obstacles with a fixed margin $m$. This way the paths created have a clearance with respect to the obstacles. In addition, we can enforce pedestrians to avoid extremely narrow paths. \\
At runtime, we do not determine pedestrian directions based on the graph vertices, but by sampling points on the graph edges. The result is visible in Figure~\ref{fig:exp_planner}.
\begin{figure}[h!]
    \centering
    \includegraphics[width=0.6\textwidth]{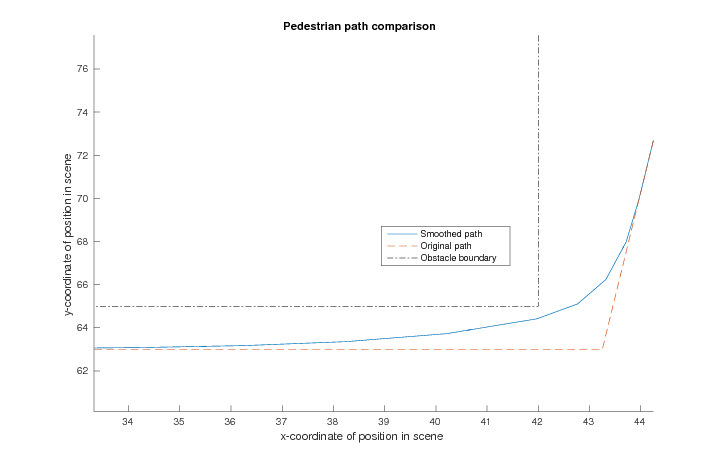}
    \caption{Comparison of original visibility graph paths and the novel paths.}
    \label{fig:exp_planner}
\end{figure}
Instead of choosing the direction of the next vertex, we take a weighted average of a few points ahead on the path.
This produces smooth paths, instead of the piecewise linear paths produced by the visibility graph method alone. 
It also makes the paths more robust to random deviations and other pedestrians: should a segment path be occupied, the pedestrian remains able to move towards the goal.
\begin{figure}[ht]
    \begin{minipage}{.45\textwidth}
	\centering
	\begin{tikzpicture}
        \draw[<->] (1,0) -- node[anchor=west]{$m$}  (1,1);
        \draw[thin,dashed,->](4,0) -- (0,0);
        \draw[thick,->](5,2) -- (4,0);
        \draw[thin,dashed](6,4) -- (4,0);
        \draw (4,0) circle (2pt);
        \draw[thin,dashed](6,4) -- (4,0);
        \filldraw (5,2) circle (2pt)node[anchor=south east]{$\vec{x}(t)$};
        \draw[thick] (3,4) -- (3,1);
        \draw[thick] (3,1) -- (0,1);
	\end{tikzpicture}
    \captionof{figure}[Caption]{Visibility graph with clearance $m$. Pedestrian at location $\vec{x}(t)$ moves towards obstacle vertex.}
	\label{fig:old_planner}
\end{minipage}%
\hfill
\begin{minipage}{.45\textwidth}
	\centering
	\begin{tikzpicture}
        \def\cx{3.6}
        \def\cy{0.6}
        \draw[thin,dashed,->](4,0) -- (0,0);
        \draw[<->] (1,0) -- node[anchor=west]{$m$}  (1,1);
        \draw[dotted,->] (4.5,1) -- (\cx,\cy);
        \draw[dotted,->] (4,0) -- (\cx,\cy);
        \draw[dotted,->] (3,0) -- (\cx,\cy);
        \draw[dotted,->] (2,0) -- (\cx,\cy);
        \draw (4.5,1) circle (2pt)node[anchor=north west]{$p_{i}$};
        \draw (4,0) circle (2pt)node[anchor=north]{$p_{i+1}$};
        \draw (3,0) circle (2pt)node[anchor=north]{$p_{i+2}$};
        \draw (2,0) circle (2pt)node[anchor=north]{$p_{i+3}$};
        \draw[thick,->](5,2) -- (\cx,\cy);
        \filldraw (\cx,\cy) circle (2pt)node[anchor=south east]{$h_i$};
        \draw[thin,dashed](6,4) -- (4,0);
        \filldraw (5,2) circle (2pt)node[anchor=south east]{$\vec{x}(t)$};
        \draw[thick] (3,4) -- (3,1);
        \draw[thick] (3,1) -- (0,1);
	\end{tikzpicture}
    \captionof{figure}[Caption]{Direction $\vec{h}_i - \vec{x}(t)$ using 4 points. Pedestrian at $\vec{x}(t)$ moves towards a weighted average of positions.}
	\label{fig:new_planner}
\end{minipage}
\end{figure}\\
To determine which points on the path to sample, we use some linear algebra.
Assume path $P$ starts at position $\vec{x}(0)$ and ends in exit $E$. Points on $P$ are sampled with distance $dr$, resulting in $n_P = \ceil{\frac{|P|}{dr}}$ samples. Let $p_i$ denote these sample points.
It holds that
\begin{align}
    \begin{split}
        p_1 &= \vec{x}(0),\\
        p_n &\in E,\\
        d(p_{i}, p_{i-1})&=dr\mbox{ for }i=2,\dots,n_p-1.
    \end{split}
\end{align}
We compute a new direction using the next $n$ points with weights $w=(w_1,\dots,w_n)$. These weights may be chosen in any way as long as $w_k \geq 0$ and $\sum_kw_k=1$.
The direction vector $\vec{h}_i$ is computed by
\begin{equation*}
    \vec{h}_i = \vec{x}(t) - \sum_{k=0}^n w_kb_{i+k+1}.
\end{equation*}
At the start of the simulation, we set $i=1$. We increase $i$ when $p_i$ and $p_{i+n}$ provide conflicting directions; that is when dot product
\begin{equation*}
    \langle \vec{x}(t) - p_i,\vec{x}(t) - p_{i+n}\rangle<0.
\end{equation*}

\subsection{Global interaction}
\label{sec:pressure_interaction}
A hybrid simulation approach similar to the multiscale models we discussed in Section~\ref{sec:macro} is presented in \cite{narain09}. This model only models the interaction between pedestrians, from the hypothesis this is the most expensive and perhaps the most difficult part to simulate.
Again, the crowd is converted to a continuum quantity to interpolate a density and velocity field.
By solving a modified continuity equation, a pressure is introduced in locations where the density exceeds the maximum allowed density. This pressure works on the pedestrians until the density satisfies the allowed density again.

The model uses the continuity equation for mass transport with a unilateral incompressibility constraint based on Darcy's law.
Two aspects of pedestrian interaction are modelled:
\begin{itemize}
\item Swarm behaviour
\item Incompressibility
\end{itemize}

\paragraph{Swarm behaviour}
When a pedestrian observes a higher density in his neighbourhood, his individual velocity converges to the velocity of the crowd.\\
Recall the linear scaling function $L_a^b$ from \eqref{eq:cutoff}.
The actual velocity $\vec{v}_{a_i}$ of pedestrian $a_i$ (located at $(x_{a_i},y_{a_i})$) becomes the average of his desired velocity $\bar{\vec{v}}_{a_i}$ and the crowd's velocity $v(x_{a_i},y_{a_i},\cdot)$, weighted to the local density:

\begin{equation}
    v_{a_i} = v(x_{a_i},y_{a_i},\cdot) + L_0^{\rho_{\max}}\left( \rho(x_{a_i},y_{a_i},\cdot) \right)\left( \bar{v}_{a_i} - v(x_{a_i},y_{a_i},\cdot) \right).
	\label{eq:dens_velo}
\end{equation}

\paragraph{Incompressibility Constraint}
The incompressibility constraint imposes some restrictions on the density and the velocity. We assume a maximum density $\rho_{\max}$. Wherever the density $\rho(x)$ reaches $\rho_{\max}$, the scene is locally saturated, and the pedestrians have to divert from their path. 
This is modelled by a pressure field $p$.

In \cite{narain09} it is stated that using variational calculus the requirement for optimal $v$ respecting $v_{\max}$ arises: 
\begin{equation}
    v = v_{\max}\frac{v-\nabla p}{||v-\nabla p||},
    \label{}
\end{equation}
under the constraints that 
\begin{align}
\label{eq:comp}
    p=0 \implies \rho\leq\rho_{\max},\\
    p>0 \implies \div{v}=0.
\label{eq:dens_compl_pres}
\end{align}
We re-use this idea in our derivation.
Combining this with the original continuity equation in \eqref{eq:cont_equation} we obtain the following system
\begin{equation}\label{eq:pde}
    \frac{\partial \rho}{\partial t} =- \div{\rho(v-\nabla p)}.
\end{equation}
Figure~\ref{fig:example_pressure} shows an example of this system. 
From a density $\rho$ (displayed top left) and initial velocity field $v$ (not shown), accounting for swarm behaviour and the unilateral incompressibility constraint we obtain the pressure $p$ (displayed bottom left), gradient $\nabla p$ (displayed bottom right) and corrected velocity field $v-\nabla p$ (displayed top right).
\begin{figure}[h!]
    \centering
    \includegraphics[width=\textwidth]{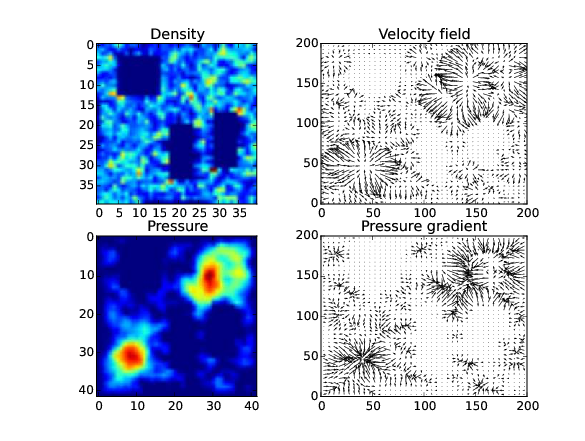}
    \caption{Example of the state variables in the unilateral incompressible quantity.}
    \label{fig:example_pressure}
\end{figure}
\subsection{Boundary conditions}
The boundary conditions in this model should conform with the physical nature of the system. 
Microscopically, we model obstacle boundaries to be impermeable. The motion of particles moving into obstacles is stopped at the boundary.

We also impose boundary conditions on a macroscopic level.
\paragraph{Obstacles}
We prescribe a Dirichlet boundary condition for the pressure. This is motivated by the fact that we want to repel pedestrians from obstacles.

In combination with \eqref{eq:dens_compl_pres} this implies the density at the boundary vanishes. Yet in the actual simulation this manifests itself in the velocity rather than the density.

Let $\partial\Omega_1 \subset \Omega$ denote the boundaries between the scene and the obstacles.

Then the pressure at this boundary is given by
\begin{equation}
    p(\vec{x}) = p_0 \mbox{ for } \vec{x} \in \partial\Omega_1,
    \label{eq:bc_obs}
\end{equation}
with $p_0>0$.

\subsection{Numerical scheme for the continuity equation}
\label{sec:scheme}
To discretise the partial differential equation we use a second order central difference scheme. 
We choose finite difference as the discretisation method since we highly value computational speed over accuracy. Alternatives like finite elements would be under consideration if the computational domain was more complex.
For scalar field $u$ we have the following gradient approximation:
\begin{align*}
\nabla u=\begin{pmatrix}\dfrac{\partial u}{\partial x}\\\dfrac{\partial u}{\partial y}\end{pmatrix}
=\begin{pmatrix}\dfrac{\vec{u}_{(i+1,j)}-\vec{u}_{(i-1,j)}}{2h_x}\\\dfrac{\vec{u}_{(i,j+1)}-\vec{u}_{(i,j-1)}}{2h_y}\end{pmatrix}+\bigo{h^2},
\end{align*}
and for vector field $w$ we define the divergence approximation accordingly:
\begin{equation}
    \div{w}=\dfrac{\partial w_x}{\partial x}+\dfrac{\partial w_y}{\partial y}=\dfrac{\vec{w}_{(i+1,j)}-\vec{w}_{(i-1,j)}}{2h_x}+\dfrac{\vec{w}_{(i,j+1)}-\vec{w}_{(i,j-1)}}{2h_y}+\bigo{h^2}.
    \label{eq:div_approx}
\end{equation}
If we were to discretise $\div{\nabla u}$ using \eqref{eq:div_approx}, then computing the value at cell $(i,j)$ would require cell values from non-adjacent neighbour cells. Therefore, when computing the divergence of the gradient, we use the more compact approximation
\begin{equation*}
    \div{\nabla u} = \dfrac{\vec{u}_{(i+1,j)}-2\vec{u}_{(i,j)}+\vec{u}_{(i-1,j)}}{h_x^2}+\dfrac{\vec{u}_{(i,j+1)}-2\vec{u}_{(i,j)}+\vec{u}_{(i,j-1)}}{h_y^2}+\bigo{h^2}.
\end{equation*}
This way, we have ensured all terms of \eqref{eq:pde} can be computed for every cell not on the boundary.

To solve in time, we use a simple explicit Euler scheme. 
We experimented with a four-stage Runge-Kutta scheme, but the improvement was not noticeable. Because of \eqref{eq:comp}, the system corrects itself in case of densities exceeding the maximum. 
For time integration schemes more elaborate than the Euler scheme, this correction occurs in the first stage, deeming the other stages superfluous.

\eqref{eq:pde} can be rewritten to 
\begin{equation}
    \frac{\partial \rho}{\partial t} = -\div{pv} + \div{\rho \nabla p} = -\div{pv} + \rho \Delta p + \nabla \rho \nabla p
\end{equation}
From that, we obtain numerical scheme:
\begin{equation}
	\begin{split}
		\dfrac{\gvec{\rho}^{n+1}_{(i,j)} - \gvec{\rho}^{n}_{(i,j)}}{\Delta t} =
		&-\dfrac{\gvec{\rho}^{n}_{(i+1,j)}\vec{v}^{n}_{x(i+1,j)}-\gvec{\rho}^{n}_{(i-1,j)}\vec{v}^{n}_{x(i-1,j)}}{2h_x} \\
		&-\dfrac{\gvec{\rho}^{n}_{(i,j+1)}\vec{v}^{n}_{y(i,j+1)}-\gvec{\rho}^{n}_{(i,j-1)}\vec{v}^{n}_{y(i,j-1)}}{2h_y}\\
		&+\frac{1}{h^2_x}\left(\frac{1}{4}(\gvec{\rho}^{n}_{(i+1,j)}-\gvec{\rho}^{n}_{(i-1,j)})(\vec{p}^{n}_{(i+1,j)}-\vec{p}^{n}_{(i-1,j)})+ \gvec{\rho}^{n}_{(i,j)}(\vec{p}^{n}_{(i+1,j)}-2\vec{p}^{n}_{(i,j)}+\vec{p}^{n}_{(i-1,j)})\right)\\
		&+\frac{1}{h^2_y}\left(\frac{1}{4}(\gvec{\rho}^{n}_{(i,j+1)}-\gvec{\rho}^{n}_{(i,j-1)})(\vec{p}^{n}_{(i,j+1)}-\vec{p}^{n}_{(i,j-1)})+ \gvec{\rho}^{n}_{(i,j)}(\vec{p}^{n}_{(i,j+1)}-2\vec{p}^{n}_{(i,j)}+\vec{p}^{n}_{(i,j-1)})\right).
	\end{split}
	\label{eq:scheme}
\end{equation}
It is convenient to express this scheme in terms of matrices and vectors. Not only does this provide us with an efficient way to implement the scheme, it also sets us up for an efficient way of solving the PDE (as explained in Section \ref{sec:lcp}).

Before reformulating the scheme, we introduce \worddef{Kronecker} products and vector gradients.
\subsubsection{Kronecker product}
Let $A \in \mathbb{R}^{m\times n}$ and $B \in \mathbb{R}^{p \times q}$. The Kronecker product $A\otimes B\in \mathbb{R}^{mp \times nq}$ is defined as
\begin{equation*}
	A\otimes B := \begin{pmatrix}
		a_{11}B &\cdots & a_{1n}B\\
		\vdots & \ddots & \vdots\\
		a_{m1}B & \cdots & a_{mn}B
	\end{pmatrix}.
\end{equation*}
The Kronecker product will prove valuable in notation and computation of the discretisation matrices.
\subsubsection{Vector difference operator}
In Section \ref{sec:scheme} we defined the discretisation of the gradient. We would like to compute a gradient for every cell in the scene, even for the boundary. We introduce a directional \worddef{difference operator} that aids us in computing the gradient approximation. First we surround the scene with a extra layer of cells. These virtual cells are meant to ensure the presence of 8 adjacent cells for all the cells in the scene. We fix the density and velocity in these virtual cells to 0. 
For a discrete field $\vec{u}\in \mathbb{R}^{N_xN_y}$ let the auxiliary extended field be denoted by $\vec{\tilde{u}} \in \mathbb{R}^{(N_x+2)(N_y+2)}$ defined such that:
\begin{equation}
	\vec{\tilde{u}}_{(i,j)} := \begin{cases}
		\vec{u}_{(i,j)}&\mbox{if } i \in \left\{ 1,\dots,N_x \right\},j \in \left\{ 1,\dots,N_y \right\} \\
		0&\mbox{otherwise}
	\end{cases}.
	\label{def:gradient}
\end{equation}
We define difference operators $\D_x, \D_y:\mathbb{R}^{N_xN_y}\to\mathbb{R}^{N_xN_y}$ as follows:
\begin{align*}
	\left( \D_x\vec{u} \right)_{(i,j)} &= \vec{\tilde{u}}_{(i+1,j)}-\vec{\tilde{u}}_{(i-1,j)},\\
	\left( \D_y\vec{u} \right)_{(i,j)}  &= \vec{\tilde{u}}_{(i,j+1)}-\vec{\tilde{u}}_{(i,j-1)}.
\end{align*}
\subsection{Matrix composition}
With the definitions from Section~\ref{sec:scheme}, we can succinctly denote the discretisation scheme.
We first define two tridiagonal matrices $P_{m},Q_{m}\in \mathbb{R}^{m\times m}$:
\begin{align*}
	P_{m} &:= \begin{pmatrix}
		0 & 1 &  & &  \\
		-1 & 0 & 1 &   &  \\
		  & -1 & \ddots & \ddots &  \\
		  &  & \ddots & \ddots &1 \\
		 & &  & -1 & 0
	\end{pmatrix},\\
	Q_{m} &:= \begin{pmatrix}
		-2 & 1 &  & &  \\
		1 & -2 & 1 &   &  \\
		  & 1 & \ddots & \ddots &  \\
		  &  & \ddots & \ddots &1 \\
		 & &  & 1 & -2
	\end{pmatrix}.
\end{align*}
Let $I_m$ be the identity matrix of rank $m$. Let $\diag :\mathbb{R}^n\to\mathbb{R}^{n\times n}$ be the operator that converts a vector $\vec{p}$ to a diagonal matrix:
\begin{equation}
	\diag(\vec{p}) = \begin{pmatrix}
		\vec{p}_1\\
		&\vec{p}_2\\
		&&\ddots\\
		&&&\vec{p}_n
	\end{pmatrix}.
	\label{def:diag}
\end{equation}
Finally, we define the scheme with two matrices for each divergence term in \eqref{eq:scheme}.
\begin{align*}
	A_x &:= \frac{1}{4h_x^2}\diag(\D_x\gvec{\rho}^n)(P_{N_x}\otimes I_{N_y}),\\
	A_y &:= \frac{1}{4h_y^2}\diag(\D_y\gvec{\rho}^n)(I_{N_x}\otimes P_{N_y}),\\
	B_x &:= \frac{1}{h_x^2}\diag(\gvec{\rho})(Q_{N_x}\otimes I_{N_y}),\\
	B_y &:= \frac{1}{h_y^2}\diag(\gvec{\rho})(I_{N_x}\otimes Q_{N_y}).
\end{align*}
The final matrix $C$ becomes 
\begin{equation}
	C := A_x + A_y + B_x + B_y.
	\label{eq:total_matrix}
\end{equation}
We define vector $\vec{b}$ to express the flux term:
\begin{equation}
	\vec{b}: = -\left( \frac{1}{2h_x}\D_x(\vec{v}^n_{x(i,j)}\gvec{\rho}^n_{(i,j)}) + \frac{1}{2h_y}\D_y(\vec{v}^n_{y(i,j)}\gvec{\rho}^n_{(i,j)})\right).
	\label{def:vector}
\end{equation}
Combining \eqref{eq:total_matrix} and \eqref{def:vector} we obtain the following matrix-vector system
\begin{equation}
	\gvec{\rho}^{n+1}=\gvec{\rho}^{n}+(C\vec{p}^n + \vec{b})\Delta t.
	\label{eq:matr_vec_scheme}
\end{equation}
\subsection{Reformulation of numerical scheme to linear complementary problem}
\label{sec:lcp}
\eqref{eq:pde} is linear in both $\vec{p}$ and $\gvec{\rho}$. Combining this with the complementarity conditions \eqref{eq:dens_compl_pres} and \eqref{eq:dens_compl_pres} we solve this system by reformulating it to fit a linear complementarity problem (LCP).
\begin{newdef}
For matrix $M \in \mathbb{R}^{n\times n}$ and vector $\vec{q}\in \mathbb{R}^n$, a LCP has the following general form. Find $\vec{z}\in \mathbb{R}^n$ such that
\begin{equation}
    \begin{split}
	\vec{w}&=M\vec{z}+\vec{q},\\
	\vec{w}&\geq\vec{0},\vec{z}\geq\vec{0},\\
        w_iz_i&=0\textrm{ for all }i\in\left\{1,\dots,n  \right\}.
    \end{split}
    \label{eq:def_lcp}
\end{equation}
\end{newdef}
Positive semi-definiteness (PSD) of $M$ is a sufficient condition to solve this problem, regardless of $q$.
We choose the following expressions:
\begin{equation}
\begin{split}
	\vec{w} &= \rho_{\max}-\gvec{\rho}^{n+1},\\
	\vec{q} &= \rho_{\max}-\gvec{\rho}^n+\left(D_x(\gvec{\rho}^n\vec{v}^n)+D_y(\gvec{\rho}^n\vec{v}^n)\right)\Delta t,\\
	\vec{z} &= \vec{p}^n,\\
	M &= -C\Delta t.
	\end{split}
	\label{eq:lcp}
\end{equation}
Using the expressions in \eqref{eq:lcp}, we have reformulated the discretisation of the PDE system to an LCP.
\subsection{Existence of a solution to the LCP}
\label{sec:test_lcp}
It is difficult to formulate the conditions for $\gvec{\rho},\vec{v},\vec{p}$ under which \eqref{eq:def_lcp} has a solution. Since the fact that $M$ is PSD is enough to guarantee a solution, we perform a numerical analysis on $M$ in several distinct cases.
We recall from basic matrix theory that a symmetric matrix is PSD if and only if all of its eigenvalues are non-negative.
As long as the ratio $\frac{\gvec{\rho}}{\rho_{\max}}$ was below 1.4 (and the area of maximum density violation was small), $M$ was close enough to these conditions to find an approximate solution to the LCP.
Our experiments suggest that the existence of a solution depends on this ratio, rather than on the absolute difference between observed density and maximum density.

Below follows a closer examination of the LCP in two cases: one where a scene has a low crowd density, uniformly distributed over the scene ($M_{\textrm{I}}$), and one where the density locally exceeds the maximum density with approximately a factor 1.4 while elsewhere the density is negligible ($M_{\textrm{II}}$). In both cases, the matrices have dimensions $48^2\times48^2$.

Figure~\ref{fig:eig_plot} shows a log-plot of the real part of the spectrum of $M_{\textrm{I}}$ and $M_{\textrm{II}}$ (the set of eigenvalues, denoted by $\sigma(M_{\textrm{I}})$ and $\sigma(M_{\textrm{II}}$). Other details on the LCP system are found in Table~\ref{tab:lcp}.
We observe that aside from small perturbations, the matrices have all non-negative real eigenvalues.
\begin{figure}[h]
    \centering
    \includegraphics[width=0.55\textwidth]{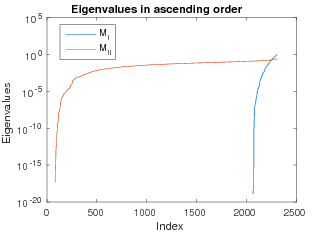}
    \caption{Log-plot of the spectrum of $M_{\textrm{I}}$ and $M_{\textrm{II}}$. The latter only has around 300 positive eigenvalues.}
    \label{fig:eig_plot}
\end{figure}

$M_{\textrm{I}}$ and $M_{\textrm{II}}$ are almost (but not entirely) symmetric. To establish a measure of `symmetricness' of matrix $M$ we compute the antisymmetric part $M_S$ with 
\begin{equation*}
    M_S  = \frac{1}{2}\left( M-M^T \right),
\end{equation*}
and compare the induced two-norm of $M_S$ with $M$.
\begin{table}
    \centering
    \begin{tabular}{|c|c|c|}
        Property &$M_{\textrm{I}}$&$M_{\textrm{II}}$ \\
        $\max\left( \frac{\gvec{\rho}}{\rho_{\max}} \right)$& 0.384 & 1.411\\
        $max(|\operatorname{Im}(\sigma(M_{\textrm{I}})))$& 4.017e-03   &6.138e-03 \\
        min(re(eigenvalues))&                   -2.2108e-08 & 1.94e-17\\
        $\frac{||M_{S}||_2}{||M||_2}$&          0.0791      &0.096 \\
        $||f_{{\textrm{FB}}}(\gvec{\rho},\vec{p})||_2$ &  3.339e-11&1.338  \\ 
    \end{tabular}
    \caption{Additional information on matrices $M_{\textrm{I}}$ and $M_{\textrm{II}}$ and their spectra.}
    \label{tab:lcp}
\end{table}

Finally, we measure the correctness of the found LCP solutions using the Fischer-Burmeister function. This function was proposed in \cite{fischer92} and is a common tool to measure complementarity. It is defined as 
\begin{equation}
    f_{\textrm{FB}}(x,y) = x+y-\sqrt{x^2+y^2}.
    \label{}
\end{equation}

\begin{figure}[!h]
    \centering
    \includegraphics[width=\textwidth]{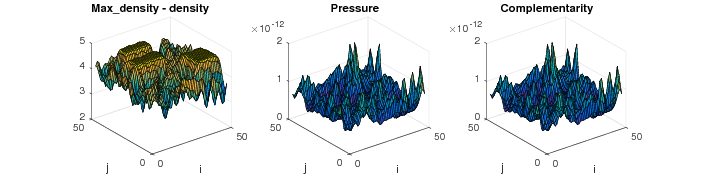}
    \caption{Density gap, pressure, and their complementarity for $M_{\textrm{I}}$. The complementarity is sufficiently satisfied.}
    \label{fig:fb_vals1}
\end{figure}
\begin{figure}[!h]
    \centering
    \includegraphics[width=\textwidth]{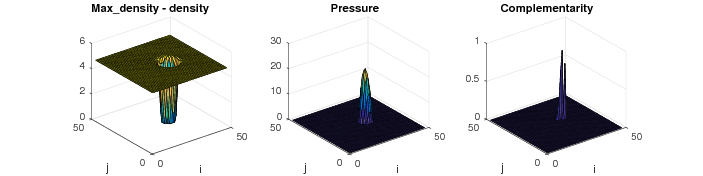}
    \caption{Density gap, pressure, and their complementarity for $M_{\textrm{II}}$. The approximation struggles with satisfying the complementarity where the crowd is densest.}
    \label{fig:fb_vals2}
\end{figure}
Figure~\ref{fig:fb_vals1} and Figure~\ref{fig:fb_vals2} show the density, pressure and Fisher-Burmeister values for $M_{\textrm{I}}$ and $M_{\textrm{II}}$.
We conclude that in the general occurring cases the LCP is solvable. Since the pressure acts as a negative feedback, for simulations where the initial density satisfies the maximum density, the solver should have no problem in finding solutions.
\subsubsection{Quadratic Program Solver}
There are several ways to solve linear complementary problems. If matrix $M$ is positive definite, then LCP's can be solved by a quadratic program (QP) solver, of which there exist many. 

From the assumption that $M$ is positive definite, this LCP can be rewritten to a quadratic program.

Any LCP of the form \eqref{eq:def_lcp} can be rewritten to a QP as follows: Minimise $f(\vec{z})$ where
\begin{equation}
	f(\vec{z}) = \vec{z}^T\left( M\vec{z}+\vec{q} \right),
\end{equation}
with the constraints:
\begin{align}
	M\vec{z} + \vec{q}\geq\vec{0},\\
	\vec{z}\geq\vec{0}.
\end{align}
$\vec{z}$ is a solution to the LCP if and only if $f(\vec{z})=0$. 
These constraints assert $f(\vec{z}) \geq 0$. 
We solve it using the Python library \texttt{cxvopt}.
While this provides accurate results, finding a solution takes a relative long time (approximately 5.2 seconds for $M_{\textrm{I}}$ in the tests of Section~\ref{sec:test_lcp}). This can greatly be improved by using a different solver.
\subsubsection{Projected Gauss-Seidel}
We implement a solver using the projected Gauss-Seidel method to compare to the quadratic program solver. 
We first provide a short summary on the Gauss-Seidel method and afterwards comment on its results in comparison to the QP solver.

The projected Gauss-Seidel (PGS) method is an iterative solution method used in solving LCP's. Each iteration $k$ a new solution approximation $\vec{z}^k$ is computed using the approximations from iteration $k-1$.
We provide a deduction along the lines of \cite{erleben13}.

Observe the system posed in \eqref{eq:def_lcp}. When applying an iterative method, we are looking for $\vec{z}^k$ such that 
\begin{equation}
    \begin{split}
        M\vec{z}^k+\vec{q}&\geq\vec{0},\\
	   \vec{z}^k&\geq\vec{0},\\
       \left(\vec{z}^k\right)^T\left(M\vec{z}^k+\vec{q}\right)&=0.
    \end{split}
    \label{eq:def_itlcp}
\end{equation}
We split matrix $M$ in lower triangular (and diagonal) part $L^*$ and upper triangular part $U$. Furthermore, let $\vec{c}^k:=U\vec{z}^k+\vec{q}$.
Then from \eqref{eq:def_itlcp} we obtain the fixed point formulation
\begin{align}
        L^*\vec{z}^{k+1}+\vec{c}^{k}&\geq\vec{0}\label{eq:pgs_1},\\
        \vec{z}^{k+1}&\geq\vec{0}\label{eq:pgs_2},\\
        \left(\vec{z}^{k+1}\right)^T\left(L^*\vec{z}^{k+1}+\vec{c}^k\right)&=0.\label{eq:def_itpgs}
\end{align}
This reformulation allows us to take advantage of forward substitution in computing the solution approximations.

Assuming both \eqref{eq:pgs_1} and \eqref{eq:pgs_2}, the complementarity condition in \eqref{eq:def_itpgs} is equivalent to
\begin{equation*}
    \min\left(\vec{z}^{k+1},L^*\vec{z}^{k+1}+\vec{c}^k\right)=\vec{0},
\end{equation*}
and by reducing both sides with $\vec{z}^{k+1}$ and multiplying with $-1$ we obtain
\begin{equation*}
    \max\left( \vec{0}, -L^*\vec{z}^{k+1}-\vec{c}^k + \vec{z}^{k+1} \right)=\vec{z}^{k+1},
\end{equation*}
Now looking at the $i$th entry of vector $\vec{z}^{k+1}$ we have two possibilities
\begin{equation*}
    \vec{z}^{k+1}_i=0\textrm{ or }L^*\vec{z}^{k+1}_i=-\vec{c}^k_i.
\end{equation*}
Re-substituting $\vec{c}^k$ and inverting $L^*$ we then obtain $\vec{z}^{k+1}_i$ has to satisfy

\begin{equation*}
    \vec{z}^{k+1} = \max\left( \vec{0},\left( L^{*-1}\left(-U\vec{z}^k-\vec{q} \right) \right) \right).
\end{equation*}

The algorithm is presented below. When applying forward substitution, all the required values from the current and the previous iteration step can be stored in the same vector.
\begin{algorithm}
    \caption{Solving the LCP with projected Gauss-Seidel method.}
    \label{alg:pgs}
    \begin{algorithmic}[1]
        \Procedure{ProjectedGaussSeidel}{$M,\vec{q},\vec{x}_0$}
        \State set stopping parameters $\eps,N_{\max}$
        \State $\vec{x} \gets \vec{x}_0$
        \State $n\gets \mathrm{length}(\vec{q})$
        \State $\vec{w}\gets M\vec{x} + \vec{q}$
        \While{$(\vec{w} < - \eps \boolor |(\vec{w},\vec{x})| > \eps) \booland k < N_{\max}$}\Comment{$\vec{x} \geq \vec{0}$ by construction}
            \State $k=k+1$
            \For{$i=1,\dots,n$}
                \State $r \gets -\vec{q}_i - (M_{i,\cdot},\vec{x})+M_{i,i}z_i$ 
                \State $\vec{x}_i \gets \max\left(0,\frac{r}{M_{i,i}}\right)$
            \EndFor
            \State $\vec{w}\gets M\vec{x} + \vec{q}$
        \EndWhile
        \State \Return $\vec{x}$
        \EndProcedure
    \end{algorithmic}
\end{algorithm}

We implement Algorithm~\ref{alg:pgs} in FORTRAN such that it can be called from the Python framework. With respect to the quadratic program solver, this increases computational speed with 10-14 times. This comparison is made with both solvers employing dense matrix structure. (It must be noted that \texttt{cvxopt} is a third party C implementation with a broader scope, and ours is a tailored FORTRAN implementation).
On top of that, the algorithm is warm-started with pressure values of previous iterations, which are usually close to the solution of the current iteration. This benefits the computational speed even more.

The cost is paid with accuracy and a smaller range of solvable LCP problems. Numerical experiments we executed tell us the LCP solver requires all positive eigenvalues in matrix $M$ to solve the system.
We can enforce this by increasing the density with a small value (we used 0.01). This ensures $M$ only has strictly positive eigenvalues.

After implementing the algorithms with sparse matrices we are able to approximate the solution to the LCP for $\vec{x},\vec{w}$ having $10000$ entries in 0.012 seconds.\\
We use a Compressed Sparse Row format to represent our sparse matrices. Of the various available sparse matrix structures, this one performs very well for our implementation, given that the PGS algorithm frequently extracts matrix rows for dot product calculations.
\subsection{Pressure impact on velocity}
After solving the LCP posed in Section~\ref{sec:lcp} for $\vec{z}$ we obtain pressure $\vec{p}^n = \vec{z}$.
We compute the gradient approximation $\nabla\vec{p}^n$ and subtract it from the velocity to enforce Darcy's law. After that, we normalise the velocity to $v_{\max}$.
The result is a final grid velocity $\tilde{v}^n$ satisfying
\begin{equation}
    \tilde{v}^n = v_{\max} \frac{\vec{v}^n-\nabla\vec{p}^n}{||\vec{v}^n-\nabla\vec{p}^n||}.
	\label{eq:finvelocity}
\end{equation}
We use $\tilde{v}^n$ to steer the velocity of the pedestrians. First we interpolate the final crowd velocity $\tilde{v}^n_{a_i}$ at the pedestrians location $(x_{a_i},y_{a_i})$ by applying bilinear interpolation from the 4 surrounding cell values. 
Recall the elaboration provided in Section~\ref{sec:micro_macro}.
We determine the individual velocity $\vec{v}^{n+1}_{a_i}$ by computing the weighted average as described in \eqref{eq:dens_velo}.

\section{Simulation results: Part II}
\label{sec:results2}
This section discusses the results obtained with the model discussed in this chapter: we use the planner from Section~\ref{sec:exp_planner} augmented with the interaction potential discussed in Section~\ref{sec:pressure_interaction}.
As with the previous model, we create two test scenarios to evaluate the implementation. This time, we use the same domain for both scenarios, but vary the initial and boundary conditions.
We present two cases:
\begin{itemize}
    \item Case C: we simulate an evacuation of an outside scenario with several obstacles and exits.
    \item Case D: we simulate a traffic scenario in which pedestrians traverse through the scene.
\end{itemize}
\subsection{Case C: Evacuation of a plaza}
In Case C we investigate the capacities of the planner and the interaction potential to deal with scenes with multiple exits and obstacles.
The scenario is displayed in Figure~\ref{fig:case_c:0} to Figure~\ref{fig:case_c:3} for various moments in time.
\subsection{Choice of parameters}
The scenario has an area of $200\times200\meter\squared$ and is initiated with $5000$ pedestrians.
We assume a minimum distance of 0.7\meter.
The LCP is solved on a grid with $100\times 100$ cells and the time step is fixed to 0.05\second.
\subsection{Quantitative results}
Although 95\% of the pedestrians reached the exit after 182 seconds, it takes 1556 seconds to clear the scene.
As with the previous simulation results, we show plots of the state variables at time of the snapshots in Figure~\ref{fig:case_c_field:1} to Figure~\ref{fig:case_c_field:3}.
The measured delay is plotted in Figure~\ref{fig:case_c:delay} and the time to exit is plotted in Figure~\ref{fig:case_c:time}.
We plot the paths in Figure~\ref{fig:case_c:paths}.

To obtain a measure of which areas pedestrians prefer, we plot a density heatmap $R:\Omega\to \mathbb{R}$ in Figure~\ref{fig:case_c:logdens}, defined as
\begin{equation}
    R(\vec{x}) = \int_0^T\log(1+\rho(\vec{x},t))dt.
    \label{eq:density_heatmap}
\end{equation}
This quantity can be interpreted as a macroscopic representation of the paths in Figure~\ref{fig:case_c:paths}.

\begin{figure}[h]
\centering
\begin{minipage}{.45\textwidth}
	\centering
	\includegraphics[width=0.7\textwidth]{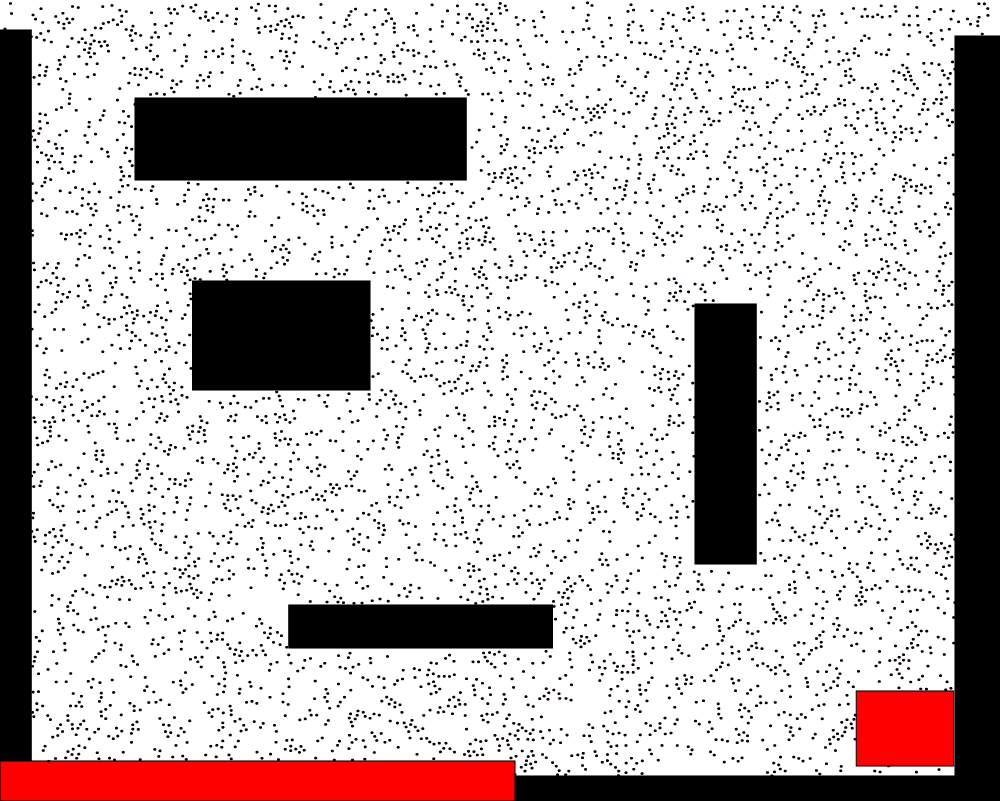}
    \caption{Initial state of the scenario of Case C.}
	\label{fig:case_c:0}
\end{minipage}%
\hfill
\begin{minipage}{.45\textwidth}
	\centering
	\includegraphics[width=0.7\textwidth]{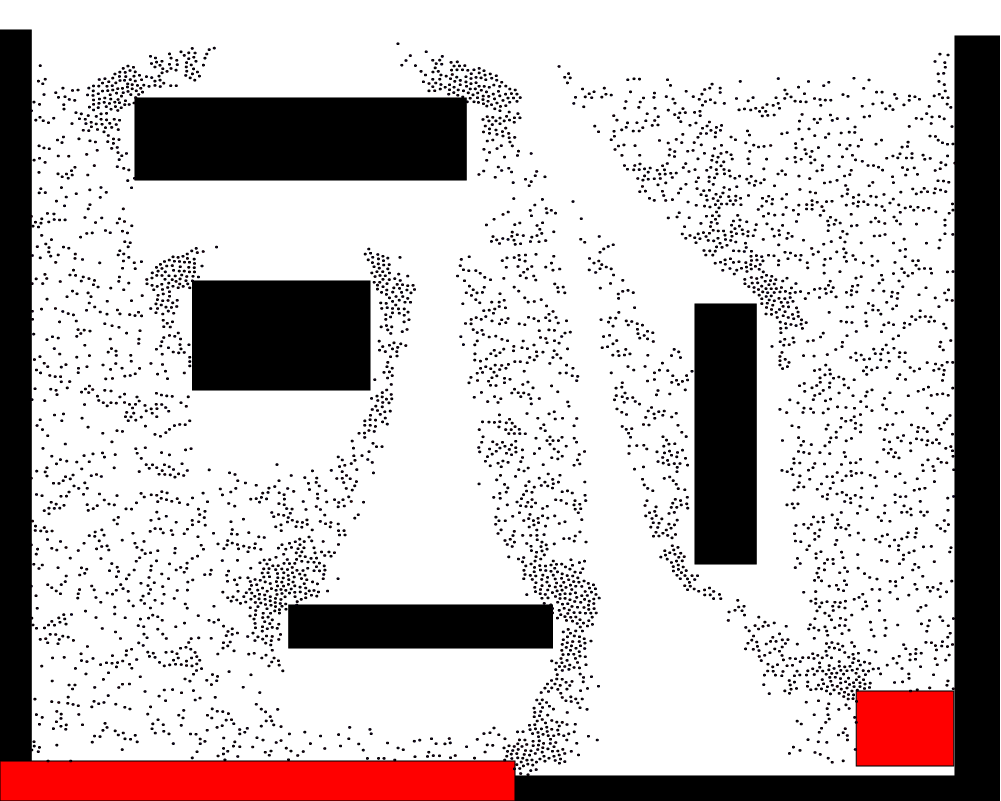}
	\caption{Scenario of Case C after 14 seconds.}
	\label{fig:case_c:1}
\end{minipage}
\begin{minipage}{.45\textwidth}
	\centering
	\includegraphics[width=0.7\textwidth]{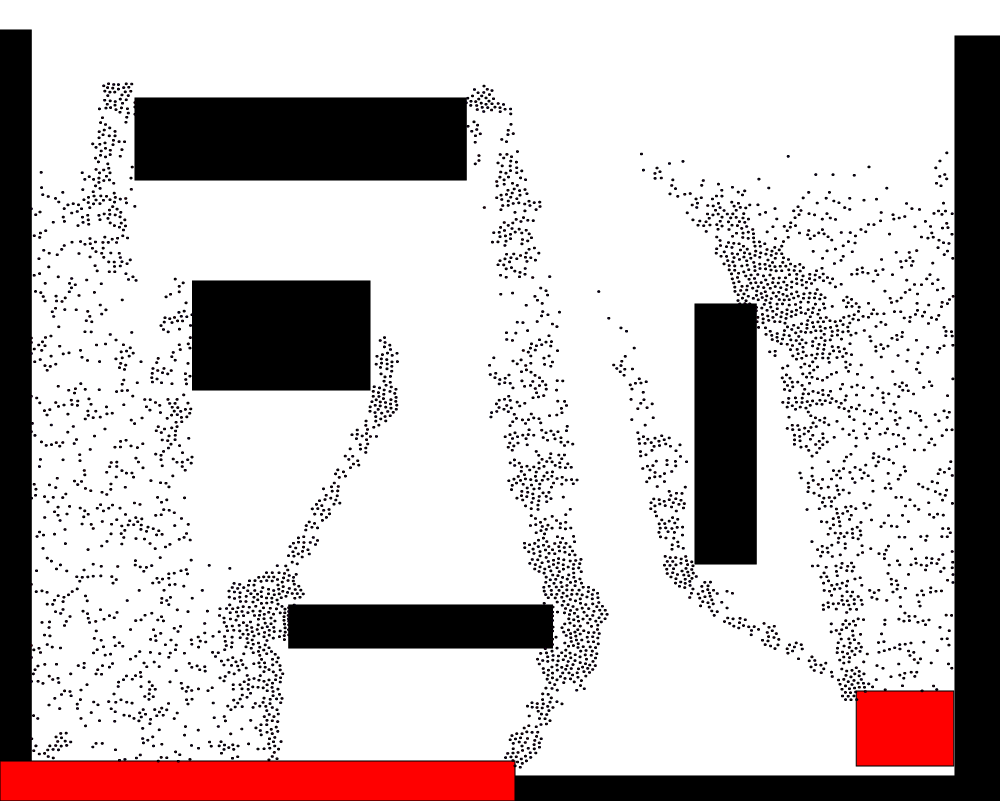}
	\caption{Scenario of Case C after 36 seconds.}
	\label{fig:case_c:2}
\end{minipage}%
\hfill
\begin{minipage}{.45\textwidth}
	\centering
	\includegraphics[width=0.7\textwidth]{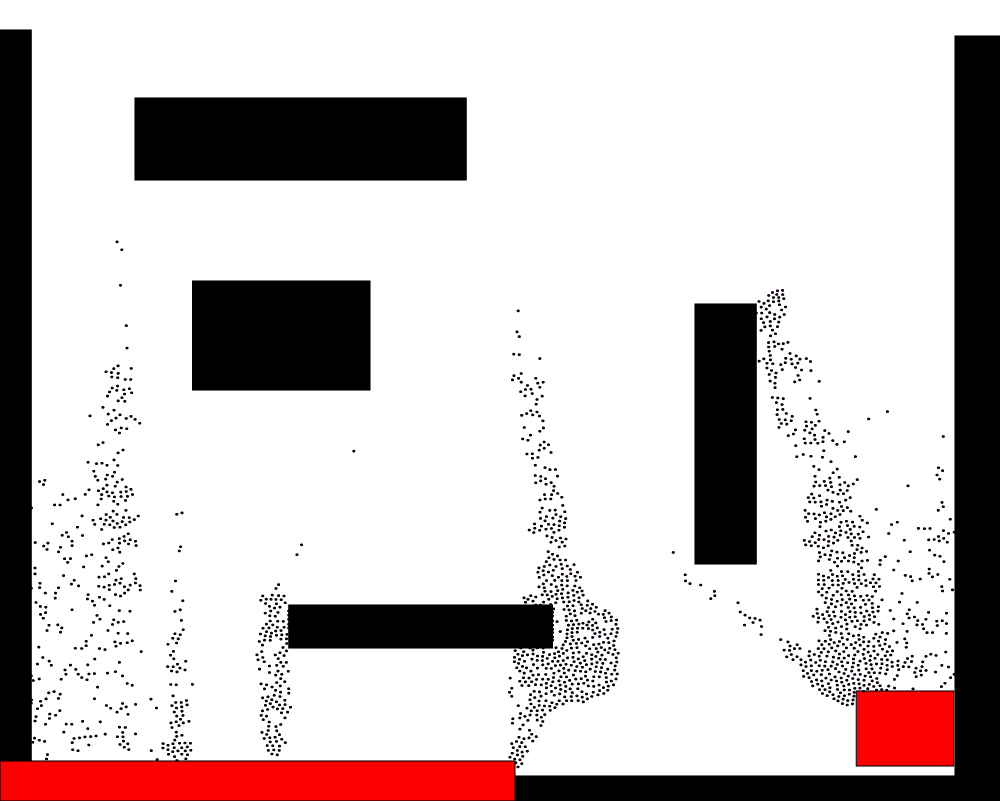}
	\caption{Scenario of Case C after 90 seconds.}
	\label{fig:case_c:3}
\end{minipage}
\end{figure}

\begin{figure}[h]
\centering
\begin{minipage}{.45\textwidth}
	\centering
	\includegraphics[width=\textwidth]{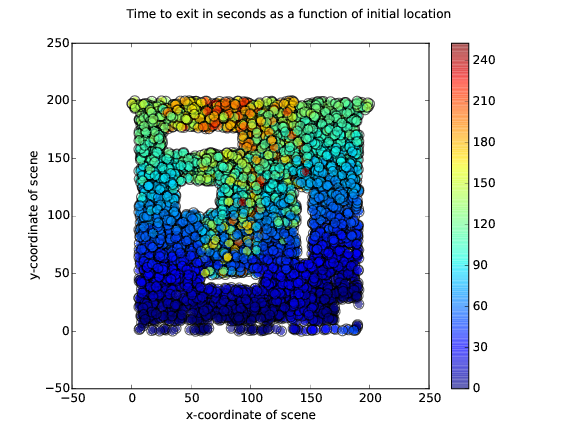}
	\captionof{figure}{Walking time to exit for Case C. Walking times are cut off at 240 seconds, to prevent the outliers from dominating the plot.}
	\label{fig:case_c:time}
\end{minipage}%
\hfill
\begin{minipage}{.45\textwidth}
	\centering
	\includegraphics[width=\textwidth]{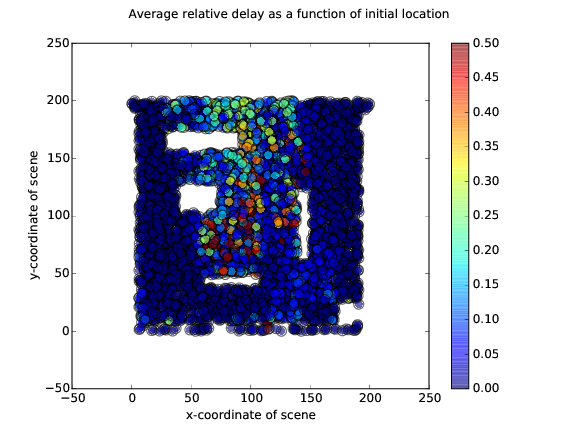}
    \captionof{figure}{Experienced delay for Case C. Delay is cut of at 50\%.}
    \label{fig:case_c:delay}
\end{minipage}
\end{figure}

\begin{figure}[!h]
\centering
\begin{minipage}{.45\textwidth}
	\centering
    \includegraphics[width=\linewidth]{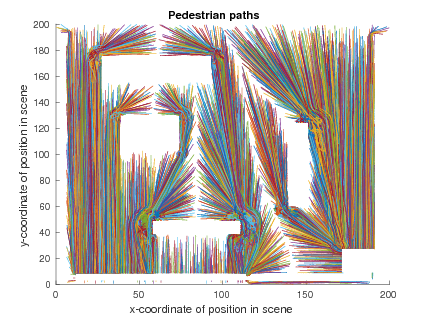}
    \caption{Observed paths for Case C.}
    \label{fig:case_c:paths}
\end{minipage}%
\hfill
\begin{minipage}{0.45\textwidth}
    \centering
    \includegraphics[width=\textwidth]{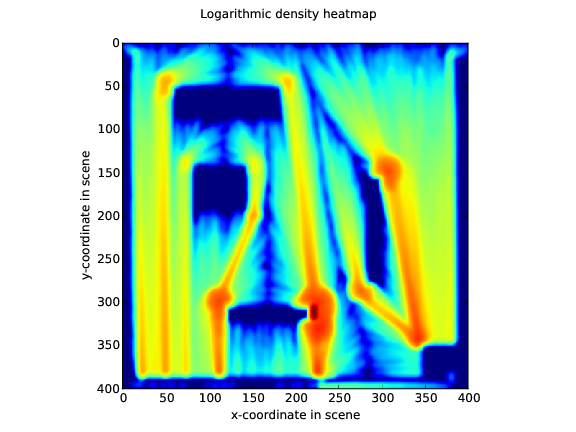}
    \caption{Density heatmap for case C.}
    \label{fig:case_c:logdens}
\end{minipage}
\end{figure}

\begin{figure}[h]
    \centering
    \includegraphics[width=0.7\textwidth]{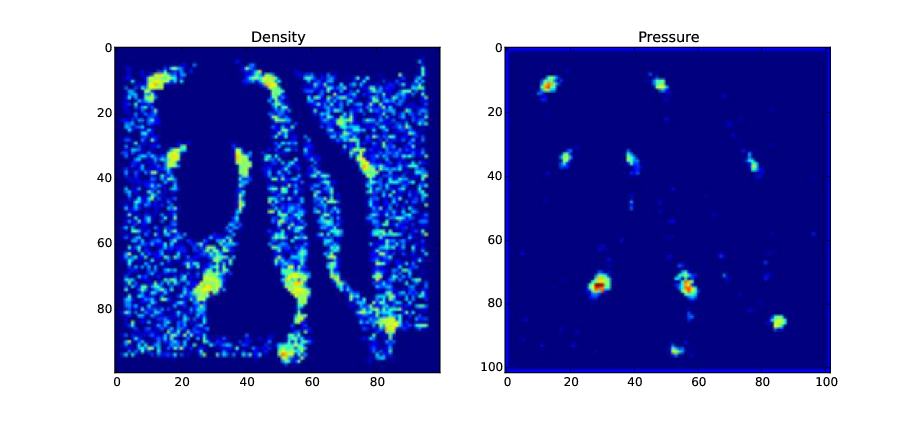}
    \caption{Density (left) and pressure (right) corresponding to Figure~\ref{fig:case_c:1} ($t=14$).}
    \label{fig:case_c_field:1}
    \includegraphics[width=0.7\textwidth]{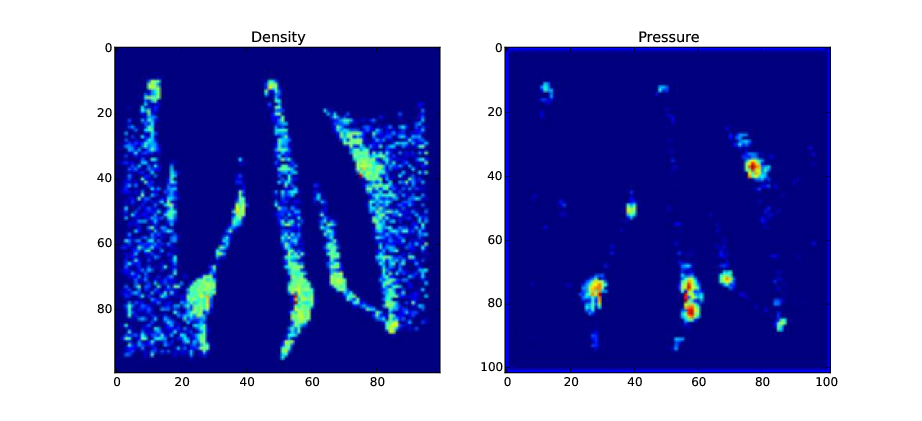}
    \caption{Density (left) and pressure (right) corresponding to Figure~\ref{fig:case_c:2} ($t=36$).}
    \label{fig:case_c_field:2}
    \includegraphics[width=0.7\textwidth]{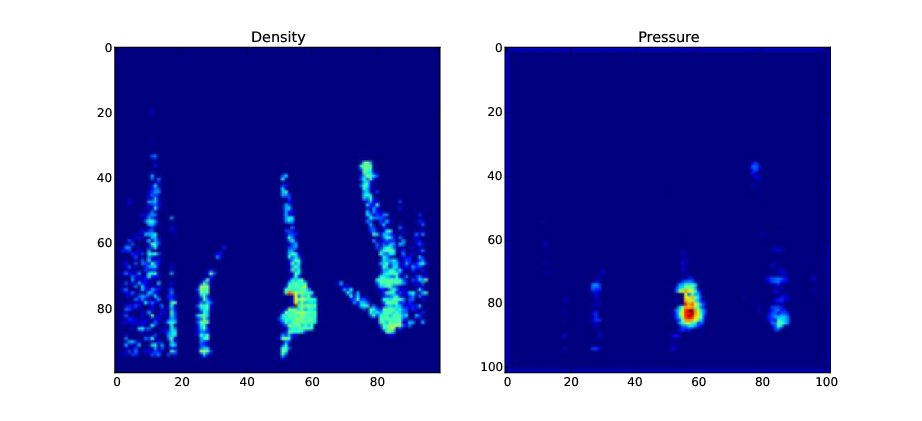}
    \caption{Density (left) and pressure (right) corresponding to Figure~\ref{fig:case_c:3} ($t=90$).}
    \label{fig:case_c_field:3}
\end{figure}
\subsection{Discussion}
It takes quite a long time to completely clear all pedestrians from the scene. 
This is caused by conflicts between the path planner and the interaction potential. 
Figure~\ref{fig:case_c:3} shows an example of such a conflict. 
Since the planner works with waypoints, many pedestrians try to reach the vicinity of the same locations in the scene.
But when the density in certain locations exceeds its maximum, the pressure exerts a repulsive effect on those pedestrians.
While this validates that the interaction potential works well, it causes bottlenecks around waypoints if the number of passing pedestrians is high.
In other locations, the pressure performs well in maintaining acceptable densities. The path planner ensures that the oscillating paths seen in Case B are no longer present; as is seen in Figure~\ref{fig:case_c:paths}, paths are mostly straight, except in region of high densities. 
Another advantage with respect to the simulation of Case A and B is that this simulation remains stable for high densities.

This is paid with a cost: in this implementation, the bottleneck is the computation of the individual pedestrian paths that enter the scene.
This was not necessary when using the domain planner, causing this simulation to be slower.
Simulation of one time step takes between 0.06 and 0.45 seconds (depending on how often the maximum density is exceeded).
Using the engine of the previous simulation, one time step takes a steady 0.2 seconds.

\subsection{Case D: One-way traffic simulation}
In Case D, we investigate the same domain as in Case C, but this time no people are present at the start of the scenario. Instead they enter the scene through one of the entrances. The exits remain in place. The scenario is visible in Figure~\ref{fig:case_d:0}.
We are interested in the effect of inflow conditions on the state and results of the simulation. We also want to check how the minimum distance is respected.

This case is essentially different than its predecessors. The inflow conditions ensure the simulation does not finish; instead, it reaches an equilibrium in which the inflow matches the outflow and pedestrians follow roughly the same paths.

\subsection{Choice of parameters}
The dimensions of the scenario remain unchanged. Two of the obstacles in Figure~\ref{fig:case_c:0} are now entrances.
Both entrances have an inflow according to \eqref{eq:inflow_prob} of rate $\lambda=10$ pedestrians per second.
In order for the pressure to kick in, the maximum density has to be chosen much lower than the density resulting from the minimum distance.
The minimum distance in this simulation is set to $2.0\meter$ which result in a minimum distance of $1.4\meter$ enforced by the interaction potential.

\subsection{Quantitative results}
The simulation ran for 366 seconds. Over the course of the simulation, 2965 pedestrians were simulated, of which 514 were present when the simulation ended.
Figure~\ref{fig:case_d:0} shows the equilibrium plot of these simulations.
Figure~\ref{fig:case_d_fields} shows the density and pressure of this equilibrium.

\begin{figure}[h]
\centering
\begin{minipage}{.45\textwidth}
	\centering
	\includegraphics[width=\textwidth]{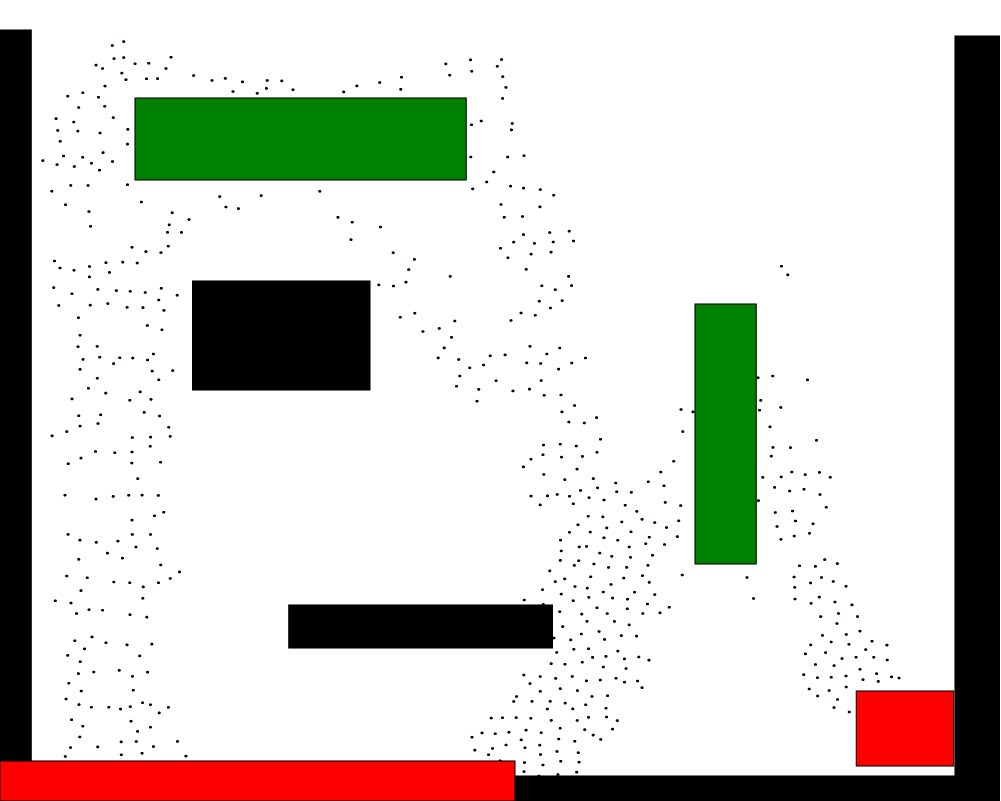}
	\captionof{figure}{Visualisation of Case D after 366 seconds.}
    \label{fig:case_d:0}
\end{minipage}%
\hfill
\begin{minipage}{.45\textwidth}
	\centering
	\includegraphics[width=\textwidth]{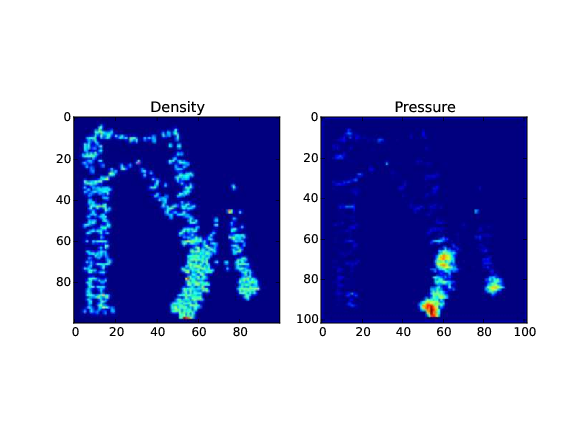}
    \captionof{figure}{Density and pressure for $t=366$.}
    \label{fig:case_d_fields}
\end{minipage}
\end{figure}

To check how the interaction potential influences the minimum distance, we plot the number of particles closer than some distance $r$ at the end of the simulation in Figure~\ref{fig:mde_vio}. The figure also shows the same graph for a \worddef{control simulation}, a simulation with exactly the same parameters, but where the interaction potential is suppressed so particles ignore each other.

\begin{figure}[h]
	\centering
    \includegraphics[width=0.5\textwidth]{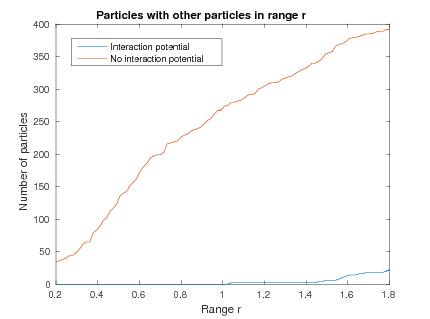}
    \captionof{figure}{Number of particles with neighbours within a certain range for the equilibrium situation at $t=366$. The interaction potential respects the interpedestrian distance.}
    \label{fig:mde_vio}
\end{figure}

Over average, the simulation violates the minimum distance for 0.50\% of the particles, while in the control simulation this occurs for 43\% of the particles.

The density heatmap is displayed in Figure~\ref{fig:case_d:logdens:1}, with the density heatmap for the control simulation in Figure~\ref{fig:case_d:logdens:1}.

\begin{figure}[h]
\centering
\begin{minipage}{.45\textwidth}
	\centering
	\includegraphics[width=\textwidth]{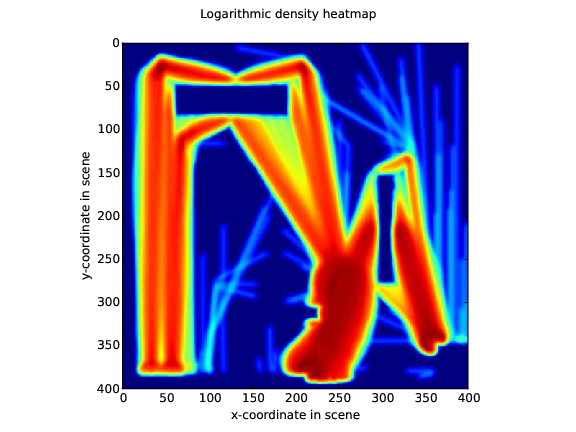}
    \captionof{figure}{Density heatmap for Case D with interaction potential.}
    \label{fig:case_d:logdens:1}
\end{minipage}%
\hfill
\begin{minipage}{.45\textwidth}
	\centering
	\includegraphics[width=\textwidth]{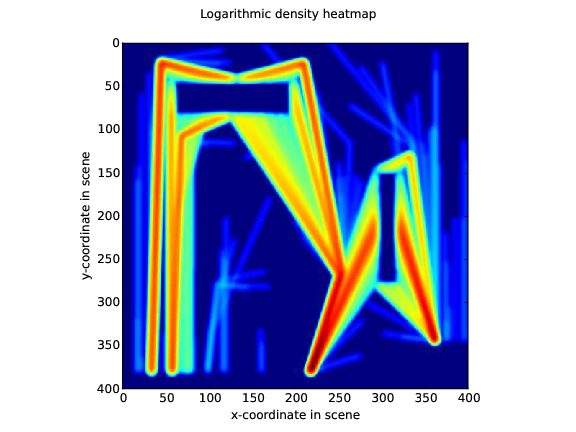}
    \captionof{figure}{Density heatmap for Case D without interaction potential.}
    \label{fig:case_d:logdens:2}
\end{minipage}
\end{figure}

\subsection{Discussion}
This system attains an equilibrium situation based on the inflow of the entrances.
While we don't prescribe an exact outflow, this does not mean the simulation is unconditionally stable. 
The interaction potential limits the total throughput, as can be observed from the heatmap in Figure~\ref{fig:case_d:logdens:1}.
With higher inflow values, the system would overflow.

While the interaction potential does not adhere exactly to the relation between minimum distance and maximum density posed in \eqref{eq:max_dens_result}, it still does a very good job in maintaining distances between pedestrians.
The heatmap shows us that the interaction potential succeeds in modelling repulsion between pedestrians by spreading their densities over larger parts of the scene.

However, the exact relation between the minimum distance and maximum density should be explored further, since this system does not satisfy the requirements of a closest packing structure.

\chapter{Validation and comparison of simulation results}
\label{chap:validation}
In this chapter we discuss simulation results from \emph{Mercurial} in specific crowd configurations. 
We validate the results by comparing observed pedestrian behaviour to phenomena described in the literature of pedestrian dynamics.
After that, we compare the two simulations from Section~\ref{sec:results} and Section~\ref{sec:results2} and disuss their strengths and weaknesses.

\section{Case E: Narrowing corridor}
\label{sec:case_e}
The first scenario is depicted in Figure~\ref{fig:narrow_0}. We initialise all pedestrians in the top section of the scene, The only exit is in the bottom section, so the pedestrians have to follow the funnel-like corridor.
The goal of this scenario is to investigate how well the simulation deals with many aggregated obstacles and slowly increasing densities.

\subsection{Choice of parameters}
We spawn 1000 pedestrians in the upper side of the scenario. Each pedestrian has a maximum speed of $2.0\meter\per\second$. The scene has a size of $70 \times 70\meter\squared$ while the exit has a width of 14\meter. 
The distance from the start of the funnel to the exit is 56\meter.
We suppress the influence of the global fluid solver to investigate the quality of the path planner and the collision avoidance. 
We maintain a minimal distance between pedestrians of 0.3\meter. The simulation is run with a time step of 0.05\second.
\subsection{Quantitative results}
\begin{figure}[h]
\centering
\begin{minipage}{.45\textwidth}
	\centering
	\includegraphics[width=0.5\textwidth]{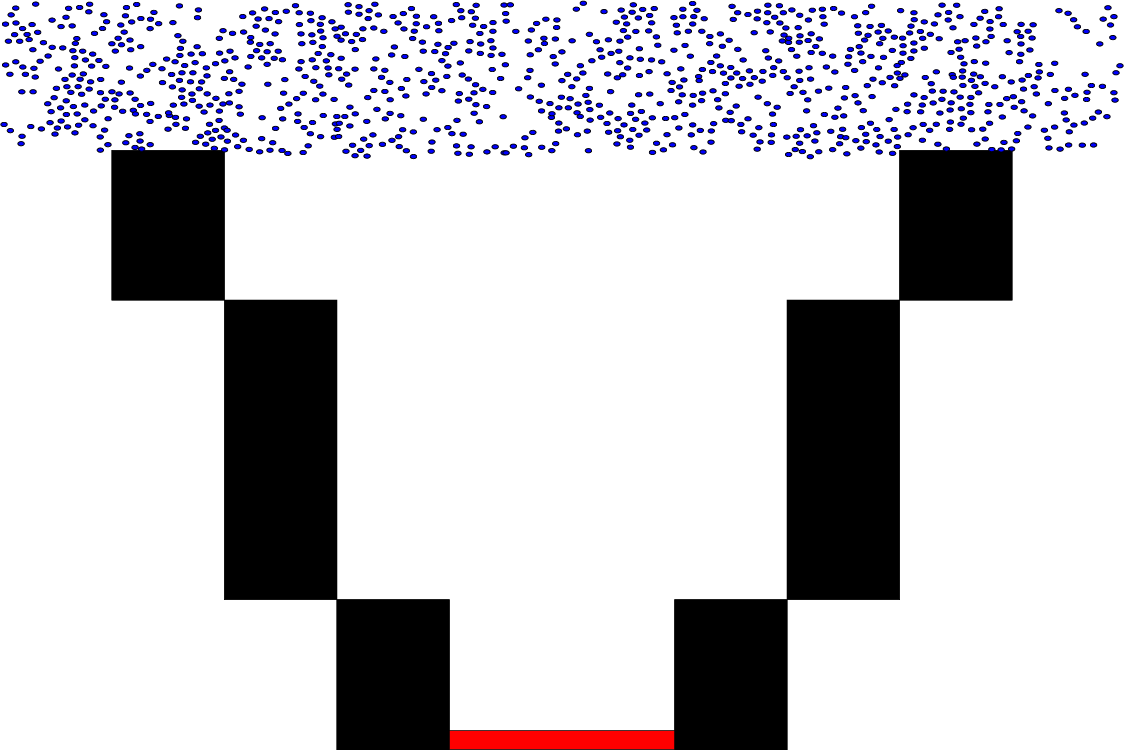}
	\caption{Initial state of the scenario. All pedestrians have to exit through the red rectangle South.}
	\label{fig:narrow_0}
\end{minipage}%
\hfill
\begin{minipage}{.45\textwidth}
	\centering
	\includegraphics[width=0.5\textwidth]{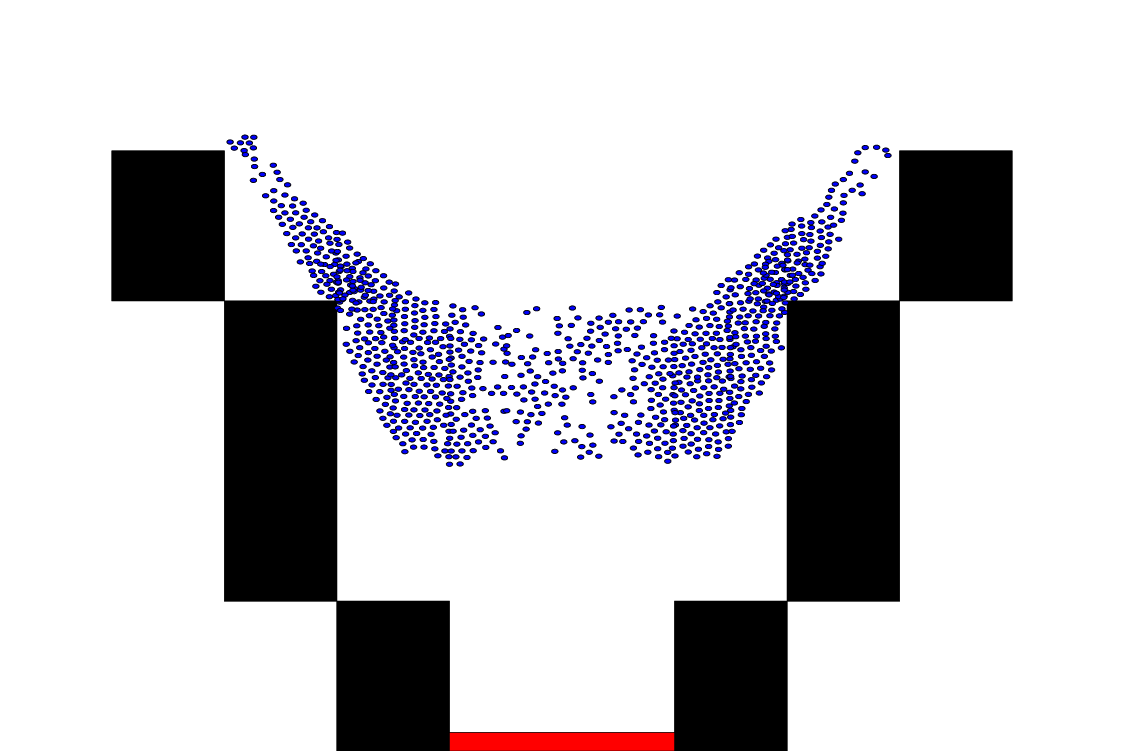}
	\caption{State of the scenario after 14 seconds.}
	\label{fig:narrow_14}
\end{minipage}
\begin{minipage}{.45\textwidth}
	\centering
	\includegraphics[width=0.5\textwidth]{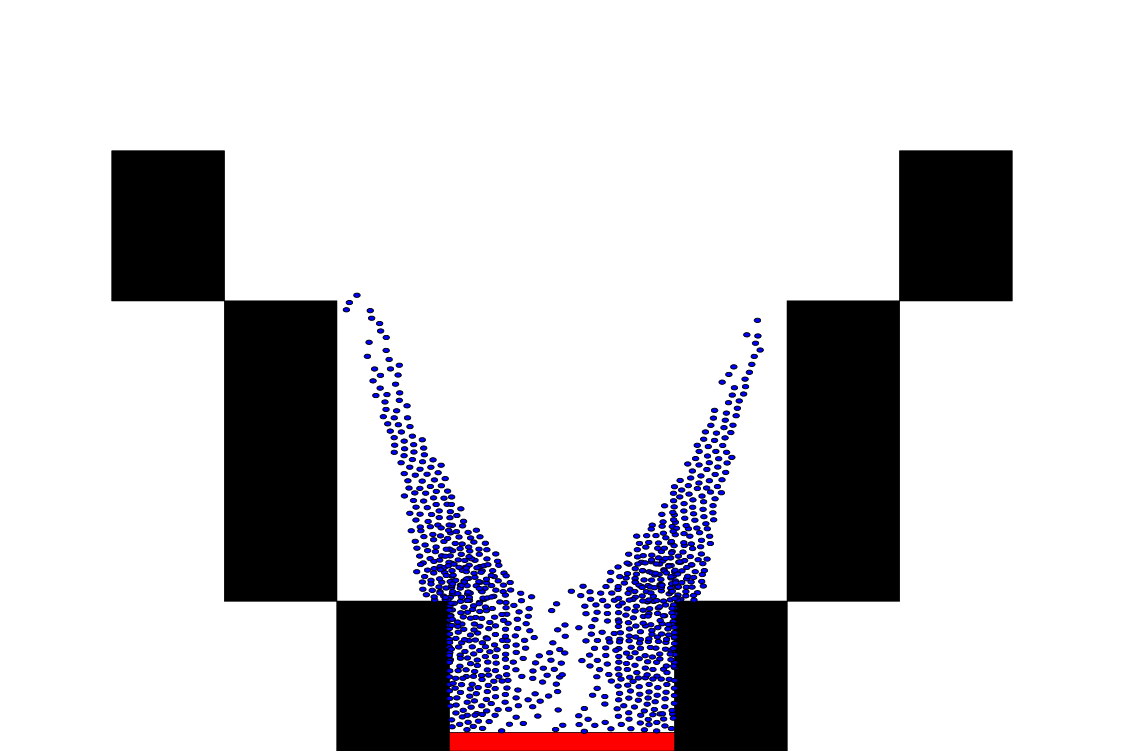}
	\caption{State of the scenario after 28 seconds.}
	\label{fig:narrow_28}
\end{minipage}
\end{figure}
Figure~\ref{fig:narrow_14} and Figure~\ref{fig:narrow_28} show the scene after 14 respectively 28 seconds. It is difficult to capture the dynamic nature of a simulation in a single image; a movie is always preferable. Still, we try to give an impression of the characteristic motions of the pedestrians by showing the state of the crowd in several states of evacuation.
Notice the crowd congestion takes place close to the obstacles, while in the centre density has barely increased. Left and right we observe trails of pedestrians pushed back for exceeding the maximum density. The density is highest near the corners of the obstacles. This is visible in the density plot in Figure~\ref{fig:narrow_14_field}.
\begin{figure}[h]
    \centering
    \includegraphics[width=0.7\textwidth]{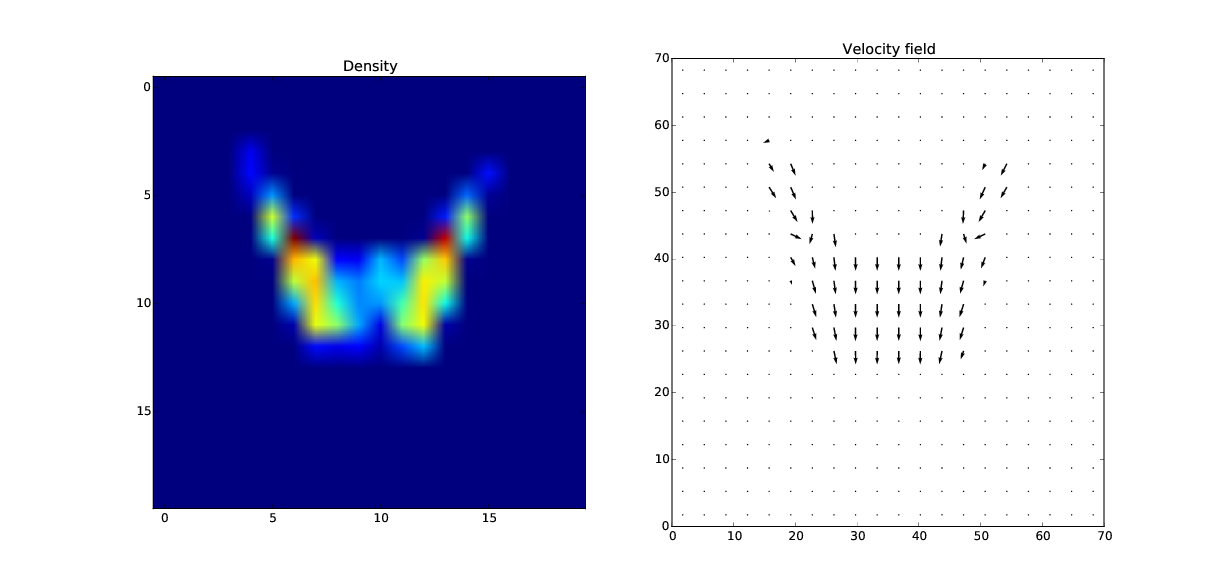}
	\caption{Density field plot corresponding to Figure~\ref{fig:narrow_14}.}
    \label{fig:narrow_14_field}
\end{figure}
The scene is cleared after $53$ seconds, almost twice as long as the first pedestrian needs to reach the exit. This shows that the amount of congestion has a large effect on the evacuation of the scene. 

To support this observation, Figure~\ref{fig:narrow_observed} shows a histogram of the pedestrians exit times. We observe a widespread distribution, while the planned times in the scene (plotted in Figure~\ref{fig:narrow_planned}) are a lot more concentrated. Notice that some pedestrians reach the exit faster than planned, indicating that they have exceeded their maximum velocity. This occurs due to the collision avoidance in combination with the increasing density, pushing some pedestrians forward.
\begin{figure}[h]
\centering
\begin{minipage}{.45\textwidth}
    \centering
    \includegraphics[width=\textwidth]{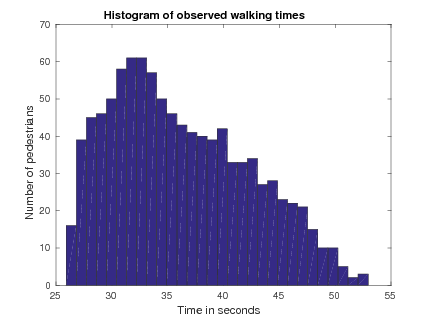}
    \caption{Histogram of the time each pedestrian spent in the scene. This also serves as a measure of throughput for the exit.}
    \label{fig:narrow_observed}
\end{minipage}%
\hfill
\begin{minipage}{.45\textwidth}
    \centering
    \includegraphics[width=\textwidth]{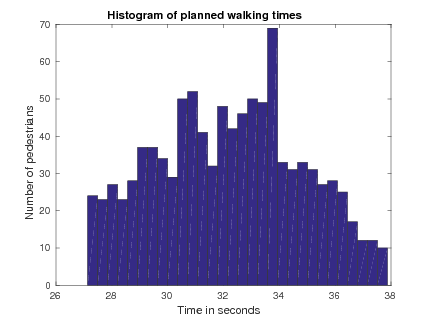}
    \caption{Histogram of the time each pedestrian plans to be in the scene, according to their path to exit. This represents the walking time in case of no other pedestrians.}
    \label{fig:narrow_planned}
\end{minipage}
\end{figure}

In Figure~\ref{fig:narrow_time} we plot the time to exit as a function of the initial location in a scatter plot. This plot reveals 'hot spots' of locations which are bound to have a long evacuation time.
In Figure~\ref{fig:narrow_delay} we plot the relative delay as a function of initial location. 
\begin{figure}[h]
\centering
\begin{minipage}{.45\textwidth}
	\centering
	\includegraphics[width=\textwidth]{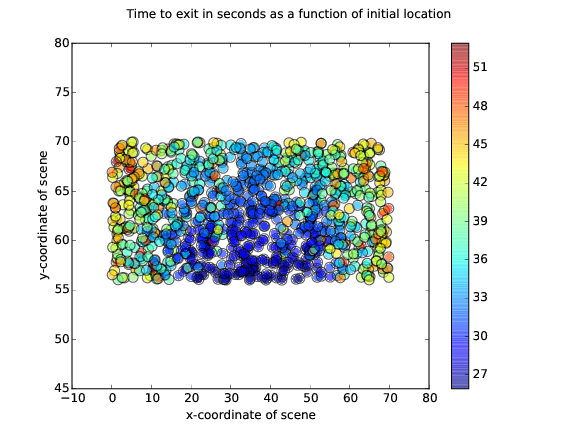}
	\captionof{figure}{Walking time to exit as a function of initial location.}
	\label{fig:narrow_time}
\end{minipage}%
\hfill
\begin{minipage}{.45\textwidth}
	\centering
	\includegraphics[width=\textwidth]{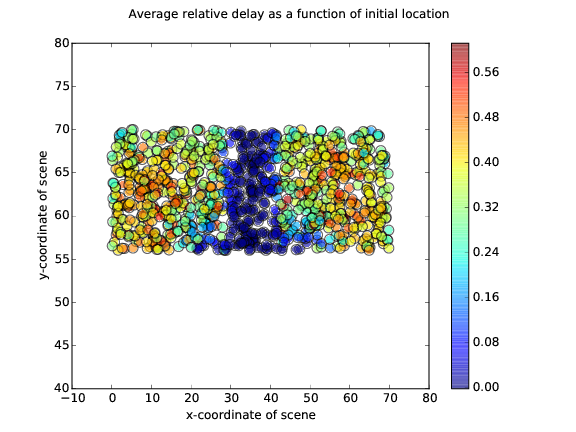}
	\captionof{figure}{Experienced delay as a function of initial location.}
	\label{fig:narrow_delay}
\end{minipage}
\end{figure}

The pedestrian walking times are smoothly distributed. The pedestrians spawned in the bottom centre exit first, and the time spent in the scene increases in a radially symmetric fashion.\\
The delay has a less continuous distribution. This is caused by the fact that all pedestrians plan a path directly to the exit, but only the pedestrians in the centre are able to maintain that path.
All the other pedestrians have to divert from their path, creating a significant delay. This is also visible in the plot of pedestrian paths, provided in Figure~\ref{fig:narrow_paths}.\\
\begin{figure}[h!]
	\centering
	\includegraphics[width=0.7\textwidth]{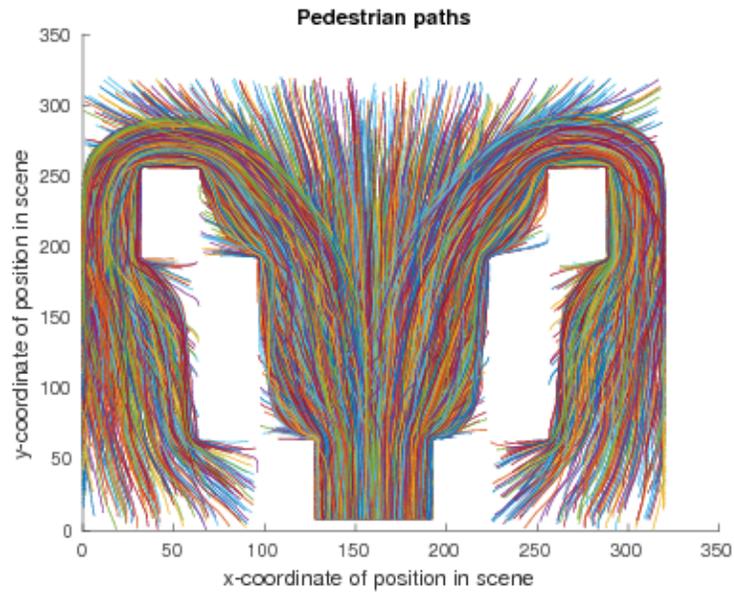}
	\caption{Paths from Case E their initial position to the exit.}
	\label{fig:narrow_paths}
\end{figure}\\
We run multiple simulations, each time increasing the number of pedestrians spawned. When comparing the density to the average delay we obtain the plot in Figure~\ref{fig:delay_vs_density}.\\
\begin{figure}[h!]
    \centering
    \includegraphics[width=0.7\textwidth]{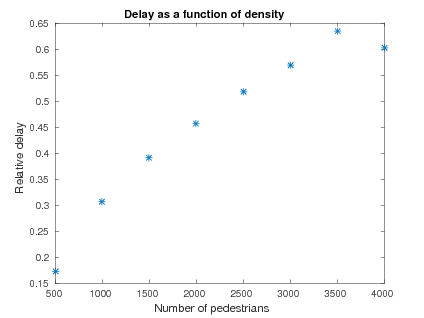}
    \caption{Delay measured over several simulations with varying densities.}
    \label{fig:delay_vs_density}
\end{figure}\\
Of course, the delay increases with the density. 
But Figure~\ref{fig:delay_vs_density} show that for high density values the increase in delay becomes smaller. 
It would be interesting to examine the limiting behaviour of the delay.
This would require running the simulations with extremely high densities.
These values fall outside the model range, so they cannot provide relevant information.
\subsection{Discussion}
In the attempt to model a funnel, a limitation of the obstacle setup and the path planning algorithms is that all obstacle faces have to be horizontal or vertical.
Straight lines with different angles, like a diagonally placed wall, can only be approximated by placing a number of smaller obstacles on that line. 
While the path planner ensures pedestrians are correctly manoeuvred to the exit, when the number of pedestrians becomes large, their paths become less smooth and the path planning takes more time.
In the simulation this is observed in Figure~\ref{fig:narrow_14}, where the pedestrians near the obstacles have difficulty passing their checkpoint. This is caused by having the same checkpoints for all pedestrians, while only a small number of pedestrians can simultaneously be present in that location.

Although it is not unlikely to have congestions near the boundaries of the funnel, the location of the congestions seems unintuitive since it could easily be avoided by the pedestrians.
A strong point of the algorithm is that in spite of fixed angles for the walls and obstacles, the pedestrians path angle is not restricted. This means that smooth diagonal paths are generated, respecting the pedestrians intent to reach the exit as soon as possible. This is visible in Figure~\ref{fig:narrow_paths}, where the paths for the leftmost and rightmost pedestrians are diagonal at first, but become increasingly vertical as they approach their goal.

Observing the propagation of the crowd, apart from curves, their paths follow the shape of the funnel. However, the paths in the centre of the funnel are more straight than to be expected for a crowd this dense. This is caused by the low density in the centre of the crowd, as visible in Figure~\ref{fig:narrow_14}. No congestion happens in the centre, so the pedestrians there do not need to deviate from their original paths.

The low density in the centre is caused by the inflexibility of the path. Pedestrians are able to deviate from the planned path as long as they pass within a certain radius of their checkpoints.
This means that in case of congestions, pedestrians will wait until the blocked path is free, instead of passing around the blockage to regions with a lower density. This causes both the low density in the centre as well as the trails of pedestrians near the edges of the funnel.
This behaviour is reminiscent of laminar flow of a fluid through a pipe; high velocity in the centre, low velocity near the walls.

Finally, we examine the rate of pedestrians leaving the exit. Figure~\ref{fig:narrow_time} shows that the time spent in the scene is lowest for people closest to the exit and from there increases gradually. This is in accordance with the histogram in Figure~\ref{fig:narrow_observed}. \\
Besides the distribution of pedestrian exit times, the histogram shows us something else; the maximum throughput of the exit. Looking at the shape of the histogram, after the first pedestrian reaches the exit, the throughput increases up to almost 4 pedestrians per time step.
After reaching this maximum, the throughput gradually decreases until the final pedestrian leaves the scene.
This is consistent with Helbing and Still. They observe a normal distributed throughput in their evacuation and corridor simulations. While the time distribution is obviously skewed, the amount of variation seems consistent.
Nevertheless, in this model we miss the outliers a normal distribution would have. This is partly caused by the fact that the pedestrian walking speeds are drawn from a uniform distribution.

We compare the density versus delay plot with the results found in \cite{guy10}. Apart from the outlier at 3500 pedestrians, the same behaviour is observed: a higher density implies a more delay, but only up to a certain density. The simulation in \cite{guy10} has been executed for higher, unsafe densities and they show asymptotic behaviour in the delay as a function of density.
\section{Case F: Moving dense crowd}
\label{sec:case_f}
The second scenario models an open space with a packed crowd. Upon starting the simulation, the crowd collectively begins to move towards the exit.
The scenario with the initial crowd configuration is depicted in Figure~\ref{fig:race_0}.
The width of the exit is chosen significantly smaller than the comfort radius of the crowd (which is visible in Figure~\ref{fig:race_11}. This way, we observe the effects of the exit size on the configuration of the pedestrians. 
\begin{figure}[h!]
\centering
\begin{minipage}{.5\textwidth}
	\centering
	\includegraphics[width=\textwidth]{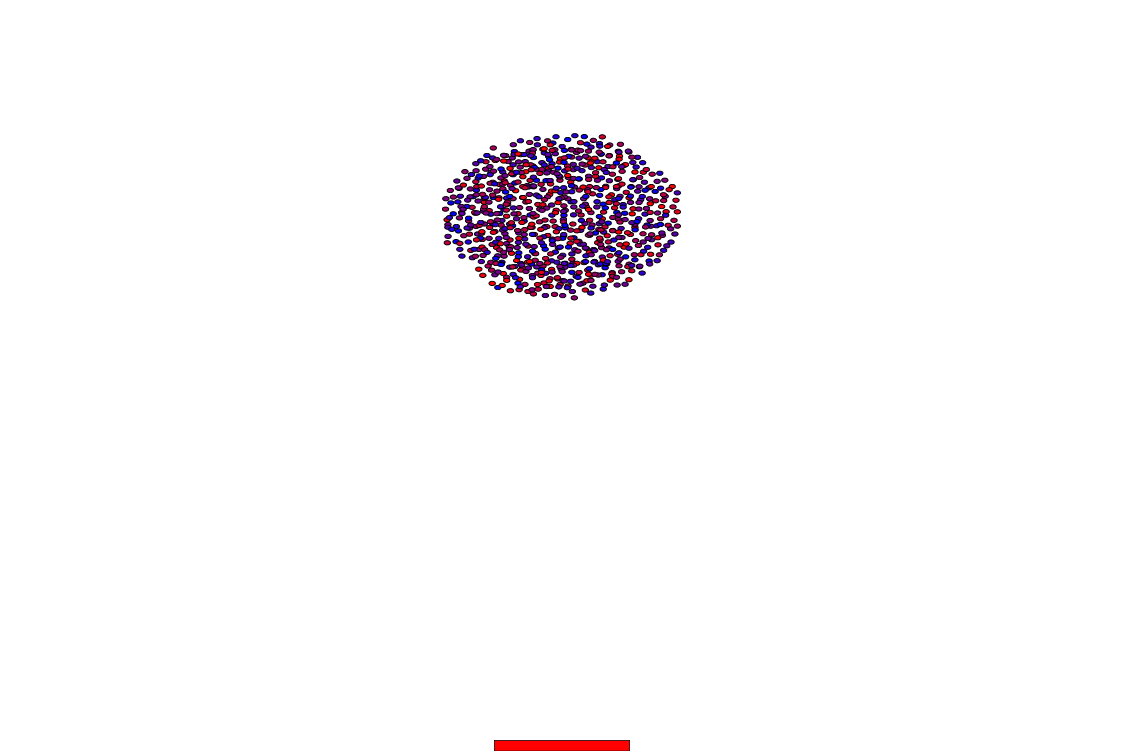}
	\caption{Initial pedestrian distribution in scenario 2.}
	\label{fig:race_0}
\end{minipage}%
\begin{minipage}{.5\textwidth}
    \centering
	\includegraphics[width=\textwidth]{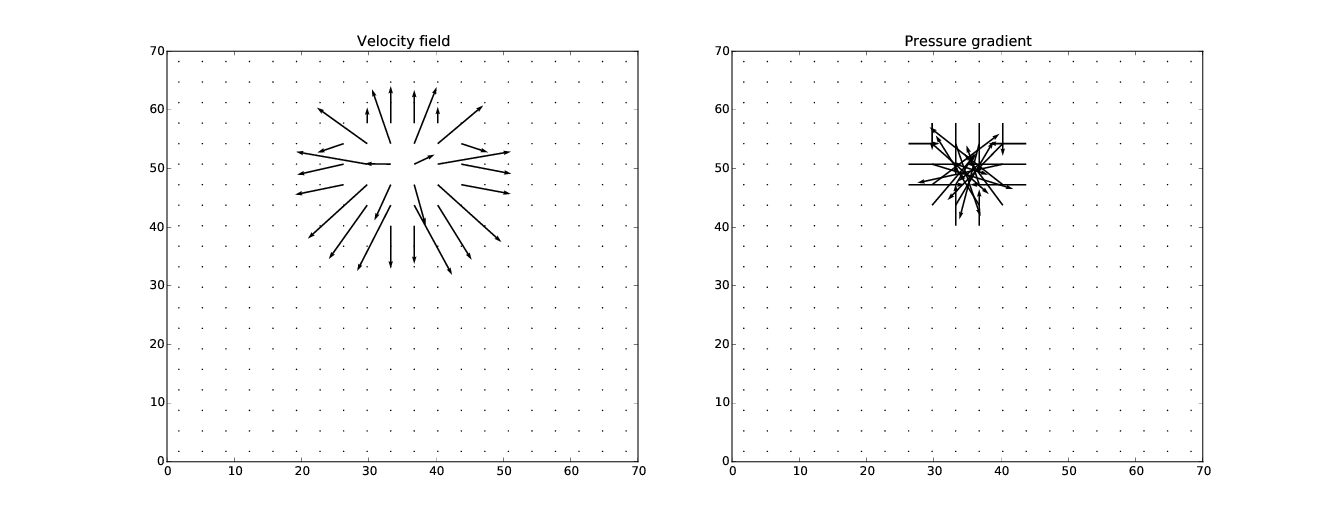}
    \caption{Pressure gradient and velocity field corresponding to Figure~\ref{fig:race_0}.}
	\label{fig:race_0_fields}
\end{minipage}
\begin{minipage}{.5\textwidth}
	\centering
	\includegraphics[width=\textwidth]{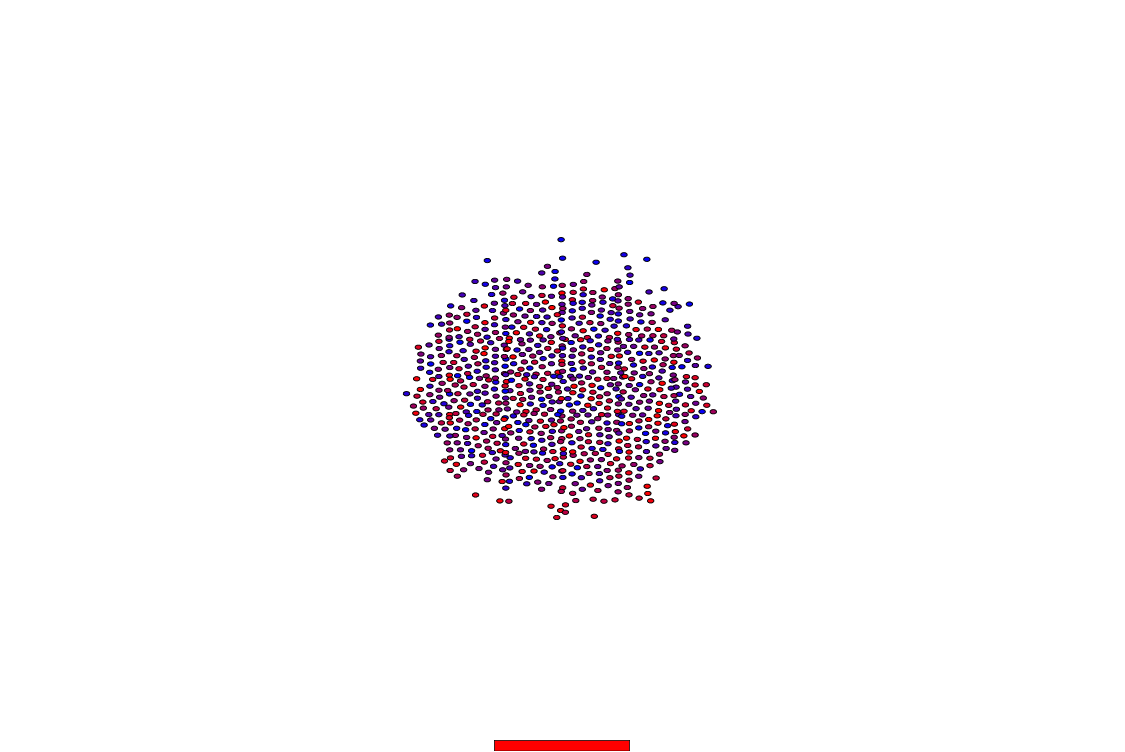}
	\caption{Scene after 11 seconds. \\The crowd expands to a comfortable density.}
	\label{fig:race_11}
\end{minipage}%
\begin{minipage}{.5\textwidth}
	\centering
	\includegraphics[width=\textwidth]{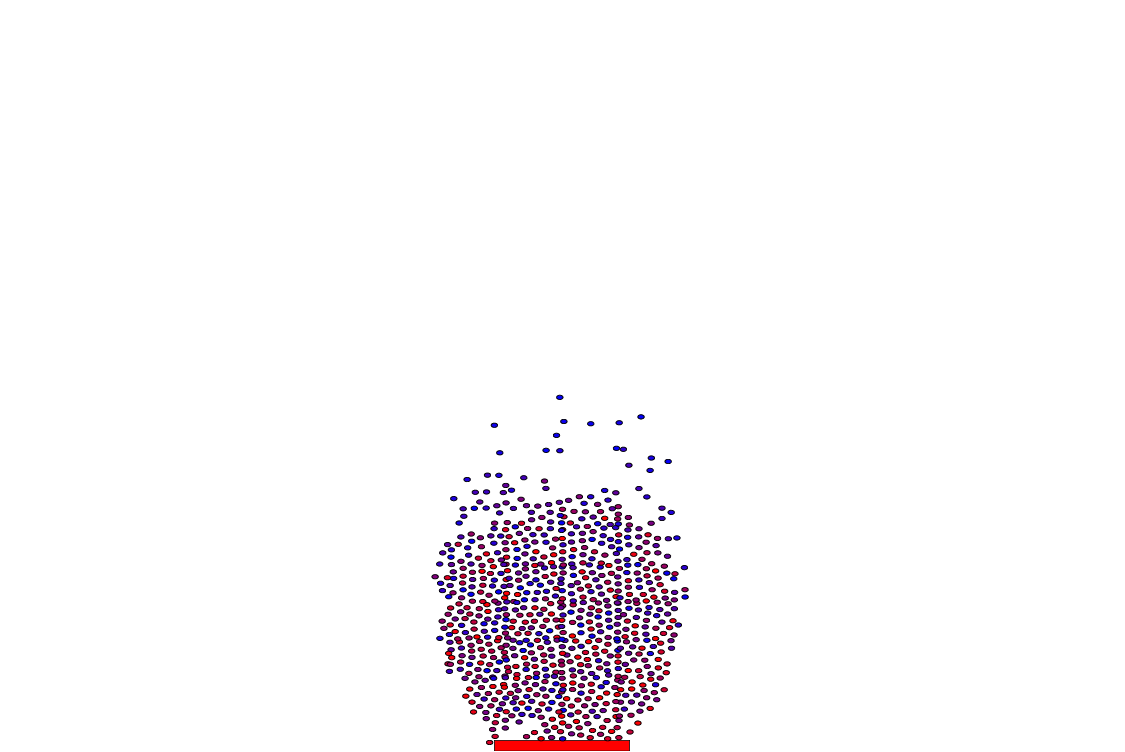}
	\caption{Scene after 25 seconds. The crowd reshapes to fit through the exit.}
	\label{fig:race_25}
\end{minipage}
\end{figure}

\subsection{Choice of parameters}
We initialise a scene with the same size parameters, $70 \times 70 \meter\squared$. 
In this scene, we assume a minimum distance of $0.1\meter$.
With the assumption that a pedestrian has a radius 0.2, this corresponds to a maximum density of $\rho_{\max} = 3.4\meter\rpsquared$, according to \eqref{eq:max_dens_result}.
We spawn a dense crowd in a circle with a radius of 7 meters, as shown in Figure~\ref{fig:race_0}. If we want to satisfy the maximum density, then up to $3.4\cdot \pi*7^2\approx523$ people can be present. 
We spawn an initial crowd of 800 people to violate the maximum density, but remain within physical boundaries. The initial crowd is distributed uniformly, resulting in an average density of $\frac{800}{7^2*\pi} = 5.2\meter\rpsquared$, corresponding with an minimal distance of approximately $0.07\meter$. We exceed the maximum density to test the influence of the pressure.

This time, we pick a pedestrians maximum speed from a uniform distribution of the interval $[1,2]\meter\per\second$. The density is interpolated by a kernel with smoothing length $h=1.75\meter$ meters and support radius $2h$. 
Roughly, this means a pedestrian only feels pressured by other pedestrians within a distance of $3.5$\meter.
The exit has a width of $8.4$\meter.
\subsection{Quantitative results}
Figure~\ref{fig:race_11} and Figure~\ref{fig:race_25} show two snapshots of the simulation at a time of respectively 11 and 25 seconds. 
In Figure~\ref{fig:race_11} the radius of the crowd has increased. 
This is a direct effect of the pressure depicted in Figure~\ref{fig:race_0_fields}: while the crowd exceeds the maximum density, the pressure gradient is computed and subtracted from the velocity. 
In Figure~\ref{fig:race_25} we observe that while time progresses, the crowd gains a more oval shape. 
This is caused by both the path planner and the pressure field. 
The path planner herds the crowd into a form that fits through the exit, while the pressure field ensures the pedestrians do not exceed maximum density by slowing down the pedestrians in the centre of the crowd.

The scene is cleared after 57.5 seconds. We plot the distribution of walking times in Figure~\ref{fig:race_planned}. The spread is wider than in Scenario 1, although the planned walking times (in Figure~\ref{fig:race_planned}) are more concentrated. This is a consequence of the pressure, which causes a significant delay for the pedestrians in the centre. This observation is supported by Figure~\ref{fig:race_delay}, where the relative delay for the outer pedestrians is low, and increases inward.

In Figure~\ref{fig:race_time}, the time scatter plot is depicted. This shows the expected increase in time spent the further a pedestrian is located to the exit. Notice the effect of varying pedestrian speeds; two pedestrians' initial locations might be close, in contrast to their exit times.
\begin{figure}[h]
\centering
\begin{minipage}{.45\textwidth}
	\centering
	\includegraphics[width=\textwidth]{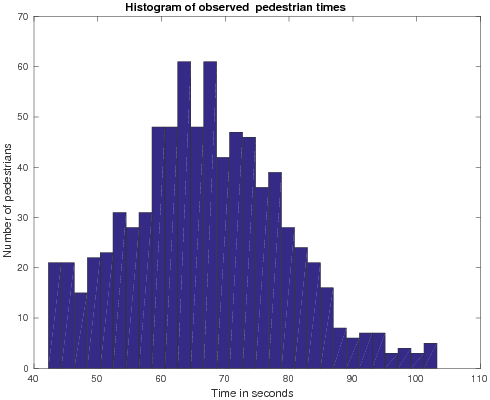}
	\caption{Observed walking time histogram.}
	\label{fig:race_observed}
\end{minipage}%
\hfill
\begin{minipage}{.45\textwidth}
    \centering
    \includegraphics[width=\textwidth]{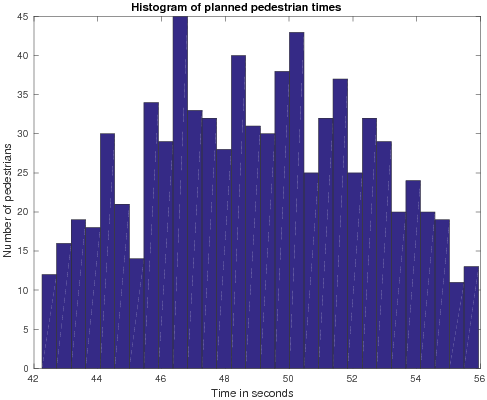}
    \caption{Planned walking time histogram.}
    \label{fig:race_planned}
\end{minipage}
\end{figure}
\\
\begin{figure}[h]
\centering
\begin{minipage}{.45\textwidth}
	\centering
	\includegraphics[width=\textwidth]{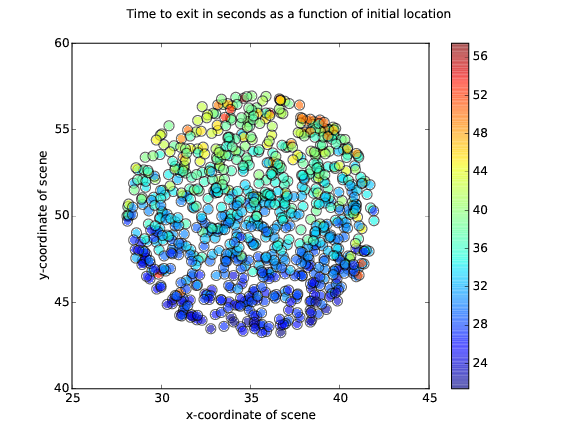}
	\captionof{figure}{Walking time to exit as a function of initial location.}
	\label{fig:race_time}
\end{minipage}%
\hfill
\begin{minipage}{.45\textwidth}
	\centering
	\includegraphics[width=\textwidth]{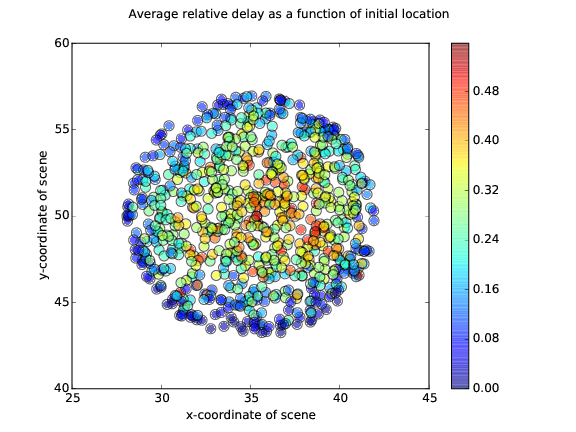}
	\captionof{figure}{Experienced delay as a function of initial location.}
	\label{fig:race_delay}
\end{minipage}
\end{figure}

\begin{figure}[h!]
	\centering
	\includegraphics[width=0.7\textwidth]{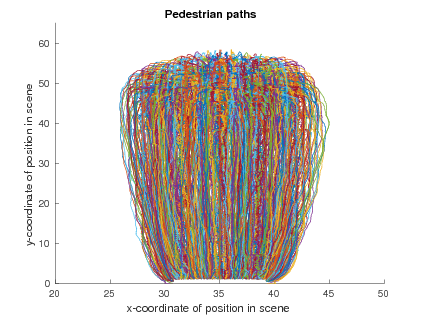}
	\caption{Paths of each pedestrian of case F from their initial position to the exits.}
	\label{fig:race_paths}
\end{figure}
\subsection{Discussion}
Figure~\ref{fig:race_0} to \ref{fig:race_25} show the evolution of a crowd under pressure. 
The grid is relatively coarse, but the bilinear interpolation does a good job in evaluating the pressure forces for each pedestrian, resulting in a radially symmetric dispersing crowd. 
We also observe the effect of the combination of the maximum density and the size of the exit on the global nature of the crowd.
Figure~\ref{fig:race_delay} shows the people at the edge of the crowd move faster than the people in the centre. 
The pedestrians in the centre of the crowd have a reduced freedom of motion, and have to adapt their velocities to the slowest in the group.
This is in accordance with the results in \cite{guy10}, where they report of the 'Edge-Effect Phenomena' which causes a speed drop of approximately 30\% for pedestrians in the centre of a crowd. Our maximum speed drop exceeds that value, but it should be noted that the speed range of the pedestrians is quite large.

No congestion takes place at the exit. While the crowd respects the maximum density, no constraint was set for the throughput of the exit. When such constraints are set, clogging is observed, like in \cite{helbing95}.
Finally, the observed paths of the pedestrians are smooth and we observe random deviations in the paths.

\section{Comparison}
\label{sec:comparison}
We compare the performance of the model based on the domain potential, discussed in Section~\ref{sec:crowds1} and the model based on the interaction potential, discussed in Section~\ref{sec:crowds2}.

The pedestrian interaction between the models is quite different.
The domain potential causes the pedestrians to follow a minimum cost principle. The chosen paths try to minimize discomfort, and to that extent, avoid regions with higher densities.
This leads to smooth paths and crowds that quickly occupy the entire accessible domain, but it also causes oscillating behaviour in these paths if densities are volatile. This is undesired, as this switching of direction in a real life setting does not adhere to the minimal cost principle; pedestrians have to spend energy to change course.

When obstacles are taken into account in the discretisation, they are neatly avoided by the pedestrians. The fast marching method ensures that even for non-convex obstacles pedestrians are able to find their destination.

The interaction potential causes this spreading to happen only on a local basis when the maximum density is violated. The minimum distance that pedestrians try to attain does not depend on the observed density.
Repulsion is modelled with a unilateral incompressibility constraint which is only activated when the density exceeds some threshold. This means deadlocks can occur when multiple pedestrians have the same location as their destination.
Pedestrians move closer to the destination, the density around the destination increases, which causes the pressure to activate and repel the crowd from that destination.
If almost none of the pedestrians managed to reach this destination, the density does not decrease and the same behaviour occurs the next time step.
The result is a {pulsing}-like behaviour. 
Because this is inherent to the way interaction is modelled, the only way to alleviate is to ensure that in simulations the smoothing length is chosen significantly smaller than the size of destinations and the width of corridors.
In complex geometries like buildings this requires a high resolution grid, but we have seen in Section~\ref{sec:lcp} that solving large sparse LCP systems should pose no problem.

Using the domain potential, it is a challenge to find a well performing set of parameters. Most of the coefficients have no physical meaning and that makes it difficult to modify the dimensions of the domain or the pedestrian-specific parameters.
The interaction potential is defined solely by the macroscopic quantities like pressure, density and velocity. This provides a better intuition as to how the parameters should be chosen.

\section{Combination}
It is possible to strip down the domain potential method in Section~\ref{sec:crowds1} in such a way that can be combined with the interaction potential.
When we make the unit cost field independent of the density, we are able to model inhomogeneous domains and have a fully continuous path planner. 
Interaction can then be modelled at runtime by using a interaction potential. 

Computationally, this is advantageous. Since the path planner is now a function independent of time and defined on the entire domain $\Omega$ it only needs to be computed once. At runtime, only the interaction potential needs to be evaluated, making for very efficient simulations.
In addition, because the domain potential is defined everywhere, the pulsing-like behaviour for specific waypoints is removed.

We showcase one simulation in which we provide the results using a combined planner.
\section{Case G: Large indoor domain}
We simulate a scenario corresponding to a similar implementation in \cite{hoeven16}, in which a similar crowd dynamics simulation was constructed.
This scenario was built to compare and validate the two implementations.
The scenario is shown in Figure~\ref{fig:combi_scene}. Pedestrians are spawned from each of the nine entrances.

\subsection{Choice of parameters}
The domain has a size of $30\times30\meter\squared$ and is simulated with a time step of $\Delta t=0.05\second$.
We use a $100\times100$ grid and simulate the system for 157 seconds, in which the simulation has attained a stable state with 560 pedestrians present. 
We prescribe a minimal distance of $0.4\meter$.

\subsection{Results}
Most of the results are similar to the other test cases and are therefore omitted. 
We want to focus on pedestrian distribution in the domain and the minimum distance violations.
We show the paths in Figure~\ref{fig:combi_paths} and the density heatmap in Figure~\ref{fig:combi_heatmap}. The minimum distance violations are shown in Figure~\ref{fig:combi_mde}.

\begin{figure}[h]
\centering
\begin{minipage}{.45\textwidth}
	\centering
	\includegraphics[width=\textwidth]{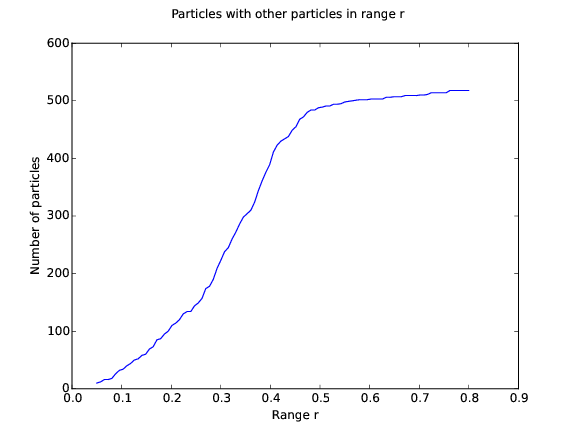}
    \caption{Neighbours within range $r$ for various values of $r$.}
    \label{fig:combi_mde}
\end{minipage}%
\hfill
\begin{minipage}{.45\textwidth}
	\centering
    \includegraphics[width=\textwidth]{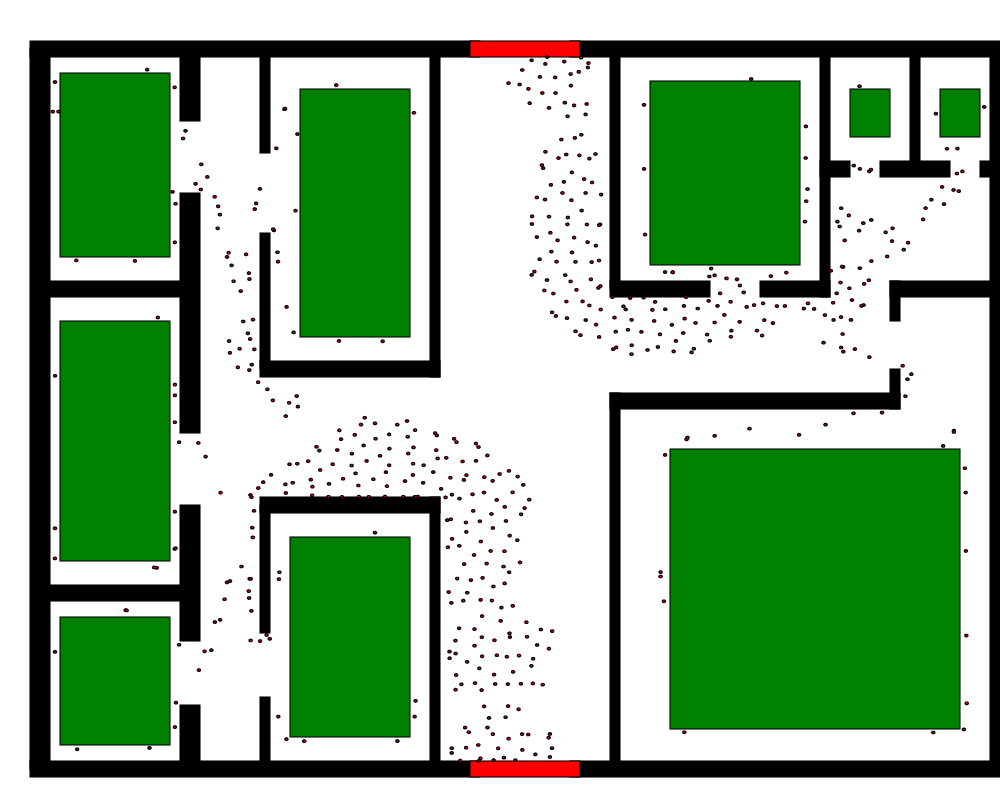}
    \caption{Scenario used for simulation combining potentials.}
    \label{fig:combi_scene}
\end{minipage}
\end{figure}

\begin{figure}[h]
\centering
\begin{minipage}{.45\textwidth}
	\centering
	\includegraphics[width=\textwidth]{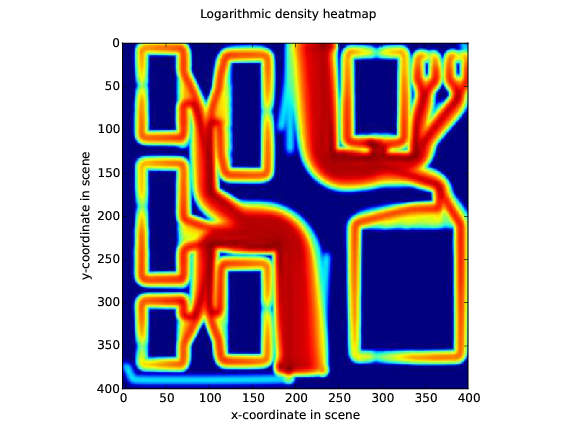}
    \captionof{figure}{Density heatmap for Case G with interaction potential.}
    \label{fig:combi_heatmap}
\end{minipage}%
\hfill
\begin{minipage}{.45\textwidth}
	\centering
	\includegraphics[width=\textwidth]{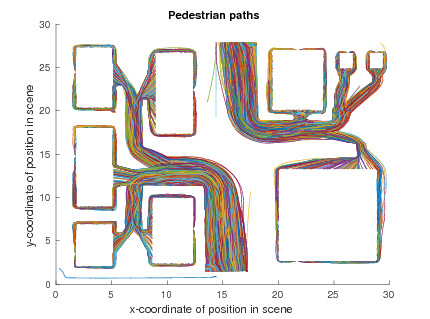}
    \captionof{figure}{Density heatmap for Case G without interaction potential.}
    \label{fig:combi_paths}
\end{minipage}
\end{figure}
Figure~\ref{fig:combi_paths} shows the paths are smooth and non-oscillatory, while the heatmap in Figure~\ref{fig:combi_heatmap} shows the pedestrians spread out over the scene as a result of interaction.

Figure~\ref{fig:combi_mde} shows a violation of the minimum density. However, this is to be expected in a simulation domain as complex as this one.
In spite of the complex geometry and the high interaction, one iteration takes approximately $0.56\second$. This means the simulation runs virtually interactive. This shows the huge speed-up with respect to the previous two implementations.

\chapter{Conclusions}
With this chapter, we conclude this thesis. Here we shortly summarize the results achieved in this work, discuss some of the limitations and identify opportunities for future research.
\section{Summary of results}
In Chapter~\ref{chap:analysis}, we started by exploring the notion of transport in particle systems on a microscopic and macroscopic level.
By defining interaction and domain potentials, we saw that we are able to model interaction and inhomogeneous domains on both modelling levels.
Using interaction potentials, we showed that under certain conditions, limiting behaviour of large-scale particle models converges to solutions of multispecied interactive transport systems. 

We continued to investigate the coupling between these modelling scales in Chapter~\ref{sec:micro_macro}. 
We used an SPH-based interpolation method to translate microscopic measure to a macroscopic representation.
We showed that if we related the smoothing length of this method to the minimum distance between particles, it is possible to obtain a consistent approximation of the microscopic system on macroscale.

In Chapter~\ref{chap:crowds} we provided a literature review for the most popular way of modelling crowds. 
We gave examples of particle models, lattice models and PDE-like models, and discussed their strengths and weaknesses.
We implemented two models, one based on a domain potential and one based on an interaction potential, and used these implementations to simulate various scenarios.

From the simulation results we concluded that the domain potential is a valuable tool in modelling inhomogeneous domains, but is less capable in dealing with dynamical features like preventing congestions.
The interaction potential is better suited for this tasks, but needs a path planner that is able to cope with sudden changes in direction.
We implemented a new path planner based on a visibility graph, creating smoothed shortest paths to the goal.
This path planner works well in combination with the interaction potential for stable simulations, but has a difficult time correctly representing crowd motion in areas with high densities and obstacles.

In Chapter~\ref{chap:validation} we assessed the results and validated them by comparing observed effects to results of experiments found in literature.
We found the simulation replicated various interaction effects observed in real life experiments and other simulations.
\section{Future research}
This thesis leaves many aspects open for further exploration. Below we mention a few extensions.
\begin{itemize}
\item \textbf{Multiple kinds of inhomogeneities}\\
In our simulations, we restricted ourselves to domains with inaccessible objects, essentially modelling domains with two permeability values.
This can be generalised to a range of permeability values to represent different kinds of inhomogeneities. 
Macroscopically, several models exist that are able to represent these situations, but on a microscopical level this is less understood.
\item \textbf{Different kinds of interaction}\\
We focused on using the interaction potential to model short-range repulsion. 
In other interaction-dominated models, like predator-prey or leader-follower models, (long-range) attraction plays an important role as well.
With the right macroscopic formulation, these situations could be modelled as well.
\item  \textbf{Multiple populations}\\
Our implementations only involve pedestrians in single populations. 
The interactions between different types of pedestrians are known to cause phenomena like lane formation and vortices.
Including social dynamics into the crowd dynamics context can make for more realistic simulations.
\item \textbf{Different implementation language for Mercurial}\\
\emph{Mercurial} is built with performance in mind, but having an implementation in a compiled language makes a lot of difference. 
Huge speed-ups could also be gained in parallel implementations. 
Both the microscopic as the macroscopic model use local information in computing the evolution of the system. 
This means the computational domain could be subdivided on different processors to gain large speed-ups in simulations.
\end{itemize}

\bibliography{literature} 
\bibliographystyle{alpha}
\begin{appendices}
\chapter{Draft for publication}
\label{chap:paper}
The following paper is a draft submitted to the \emph{European Physics Journal} for the \emph{Special Topics} edition on statistical mechanics in inhomogeneous domains.\\
It is a result of the collaboration between a researcher from the University of Warwick and the author and main supervisor of this thesis.

\includepdf[pages={-}]{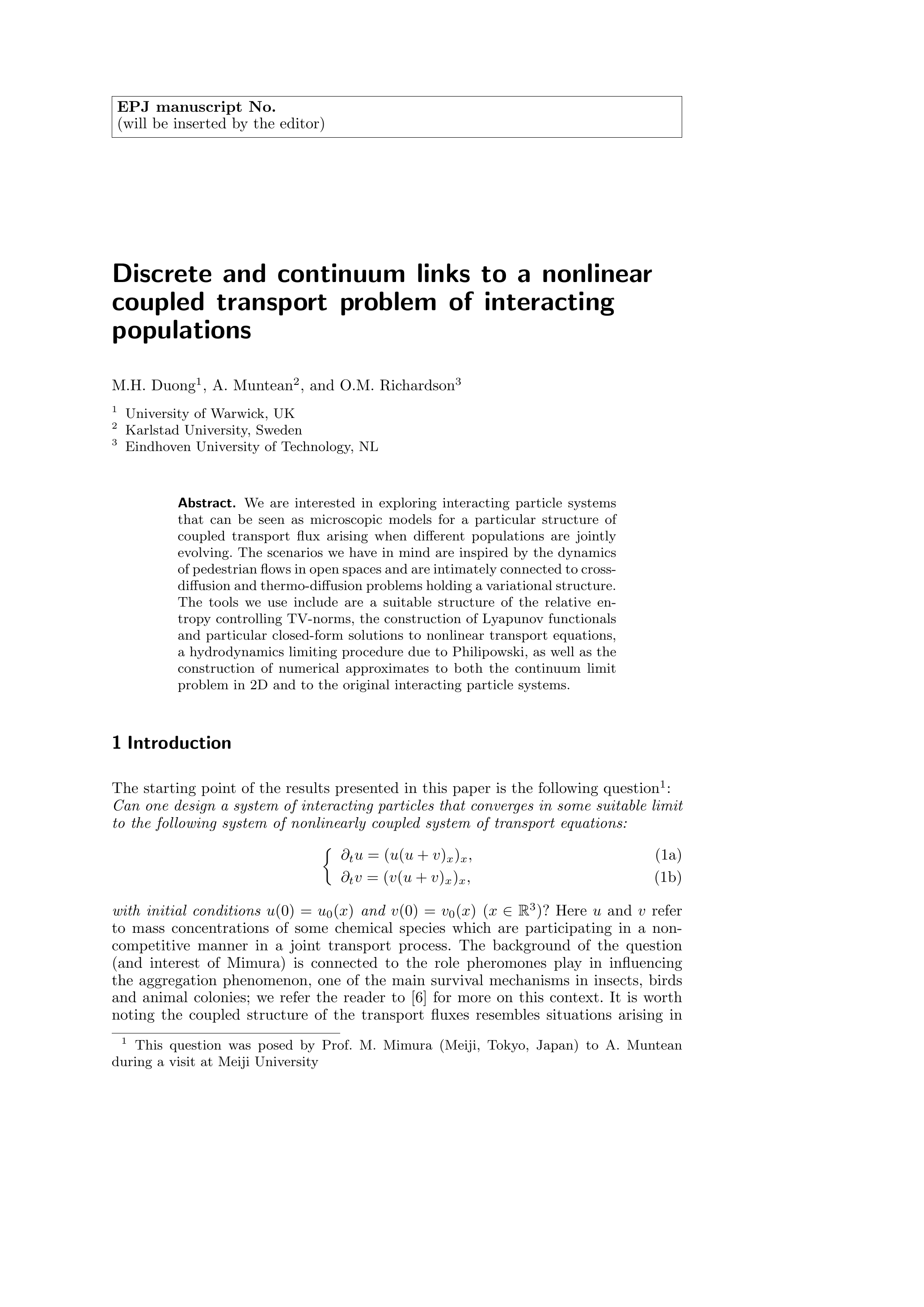}

\chapter{Simulation software: \emph{Mercurial}}
\label{chap:mercurial}
This chapter provides some details on \emph{Mercurial}, the framework developed to perform the simulations in this thesis. It enables users to create inhomogeneous domains and model the transport of particles in these domains on multiple scales. It also provides various options for visualisations and gathering results.
We discuss the programming environment \emph{Mercurial} was developed in, the architectural structure of the framework and highlight some of the features.
The implementation is open source and hosted on GitHub. It can be found on \texttt{https://github.com/0mar/mercurial} or obtained by contacting the author or the supervisors of this thesis.

\section{Programming environment}
\emph{Mercurial} is programmed in Python, a high-level interpreted object-oriented programming language. 
Python is a popular language in scientific computing due to its well-maintained numerical libraries like \texttt{numpy} and \texttt{scipy}, and due to its simple syntax compared to other programming languages.
Its ease of implementation lends itself well for building prototypes.
In addition, since the language and its libraries are open source, it allows us to make \emph{Mercurial} framework open-source as well.

Python allows for rapid prototyping of applications, and is easily extensible with different features. 
However, because Python is an interpreted languages, some operations are very slow in comparison to other (compiled) languages.
This overhead can be circumvented by \worddef{extending} Python with modules built in C/C++ or FORTRAN, two relatively low-level compiled languages.
The \texttt{numpy}/\texttt{scipy} stack are examples of such modules.
For this reason, most of the computationally intensive parts are (whenever possible) performed using the \texttt{numpy}/\texttt{scipy} libraries, whose performance is comparable to pure C/C++ and generally exceeds MATLAB performance.

Not all operations fit in the framework of these third-party libraries. If this is the case for some operation that becomes a bottleneck when executed in Python, we implement the operation in FORTRAN.
Using the tool \texttt{f2py} we are able to convert FORTRAN subroutines to compiled Python libraries. 
A big advantage of this work flow is that this FORTRAN code is able to directly interface with objects from \texttt{numpy}, so no data conversion is necessary.

\subsection{External modules}
Apart from the \texttt{numpy} and \texttt{scipy} modules, we use the following external libraries:
\begin{itemize}
    \item \texttt{matplotlib} (\cite{matplotlib}) modules to provide plots of the macroscopic representations of the current state of a simulation.
    \item \texttt{networkx} (\cite{networkx}) to perform operations related to graph creation and manipulation.
    \item \texttt{cvxopt} (\cite{cvxopt}) to solve quadratic problems
    \item \texttt{nosetests} to build and run unit tests.
\end{itemize}
In addition, we created the following FORTRAN modules which are used in the framework:
\begin{itemize}
    \item \texttt{mde} to count particles violating a minimum distance and, if required, correct their positions.
    \item \texttt{micro\_macro} to convert positions and velocities to a density and velocity field using a kernel interpolant.
    \item \texttt{compute\_pressure} to solve a continuity equation augmented with Darcy's law.
    \item \texttt{pgs} to solve a linear complementary problem using a projected Gauss-Seidel method on sparse matrices (using algorithms from \cite{saad03}).
\end{itemize}

\section{Features}
\subsubsection{High performance for large particles}
\emph{Mercurial} is built to handle many particles simultaneously. To this end, the code uses vectorised operations wherever possible. 
In addition, it supports imposing particle interaction on a macroscopic scale, creating a hybrid simulation conforming to the concepts in Chapter~\ref{chap:analysis} and Chapter~\ref{sec:micro_macro}.
This way, the time-expensive microscopic particle interaction is avoided.

\subsubsection{Drawing tool}
A tool is included for drawing rectangular geometries. 
In these geometries, the user can specify particle inflow boundaries, outflow boundaries, and impermeable areas. 
An example of such a domain is illustrated in Figure~\ref{fig:example_domain}.

\begin{figure}[h!]
    \centering
    \includegraphics[width=0.7\textwidth]{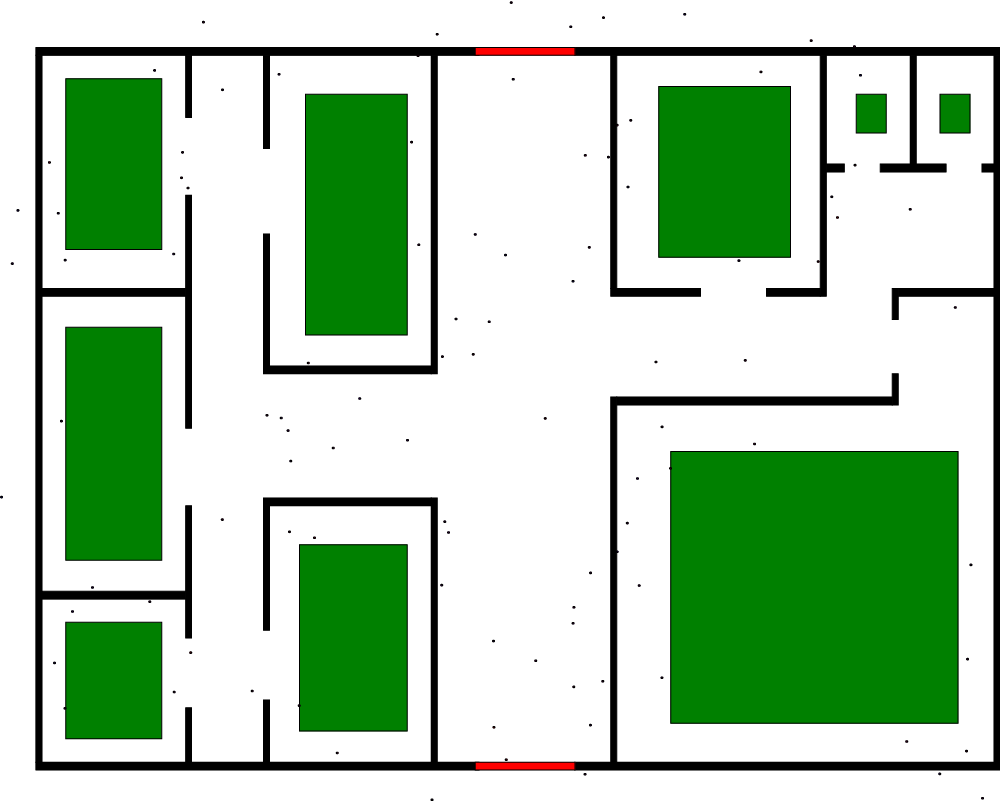}
    \caption{A domain used in \emph{Mercurial}. Entrances are indicated in green. Exits are red, generic obstacles are black.}
    \label{fig:example_domain}
\end{figure}

\subsubsection{Reproducibility and quantification}
The user specifies two files in a simulation: a configuration file containing all parameters and settings of the particle system, and a scene file containing the descriptions of the geometry, including obstacles, inflow, and outflow conditions.

If required, the user can extract and store output in a file with a data format both readable by Python and MATLAB.
The framework comes with a default result processing tool that outputs several graphs on simulated data (depicted in Section~\ref{sec:results} and Section~\ref{sec:results2}).
This way, it is easy to keep track of simulation runs, rerun the same simulation later in time, or extract different data from an earlier run simulation.

\subsubsection{Extensibility}
The framework is composed of several modules where interface and implementation are separated as much as possible. This modularity makes it easier to improve existing parts of the simulation without breaking the framework. 
In addition, it keeps the modifications to a minimum when one attempts to model a different system.

The object-oriented nature for the framework allows easy specification of individual particle traits.
We have separate visualisation modules for the microscopic and macroscopic representations. 
Both modules focus on providing a simple, lightweight graphical overview of the simulation. 
They can be suppressed to devote all computational resources to compute the evolution of the system.

The framework lacks a decent graphical user interface. All parameters and settings are set via the command line. This could be extended by a more user-friendly interface in future versions.

\section{Particle model implementation}
This is a continuous space, discrete time model. 
This means particles are able to assume any position $\vec{x}\in\Omega$, but their position in time is restricted to values in discrete time domain $\left\{ k \Delta t | k\in \mathbb{N} \right\}$ for some time step $\Delta t>0$.
We implement each particle as a separate object, which allows us to easily add and modify of individual properties. 
However, positions, velocities and maximum speeds of all particles are stored collectively in vectors to take advantage of vectorised operations.

\subsection{Geometry}
We assume $\Omega$ is rectangular. Parts of the domain can be made impermeable to particles. These parts are called \worddef{obstacles}. 
Should a particle collide with an obstacle, its motion is stopped at the edge.

Inflow is modelled with \emph{entrances}, while outflow is modelled with \emph{exits}.
\paragraph{Entrances}
We impose an inflow by modelling the number of new particles with a Poisson process depending on inflow parameter $\lambda$ which can be specified for each entrance.
For any given time $\Delta t$, the number of particles $N_I(\Delta t)$ entering the scene in that time has a probability distribution given by
\begin{equation}
    P\left( N_i=n \right) = \frac{(\lambda\Delta t)^n}{n}e^{-\lambda\Delta t}.
    \label{eq:inflow_prob}
\end{equation}
The total number of particles spawning from one obstacle can be limited to model finite-capacity sources, or to couple multiple simulation domains together.
\paragraph{Exits}
Exits are rectangular areas that regulate the outflow. Any particle that reaches the exit is removed from the scene.
By default, no conditions are imposed on the outflow. The particle motion and interaction should naturally incur outflow, and since the particle speed and the exit sizes are limited, so is the outflow.
We do provide the possibility to explicitly limit the outflow with a maximum value. This might be desirable when the exit represents a bottleneck.
In that case, the number of particles $N_{O,\Delta t}$ moving out of the scene in time $\Delta t$ has to satisfy
\begin{equation}
    N_{O,\Delta t} \leq c_O\Delta t,
    \label{eq:outflow_rate}
\end{equation}
where $c_O>0$ represents the outflow rate.
\subsection{Particles}
Particles are represented as moving discs with radius $r>0$, velocity $\vec{v}(t) \in \mathbb{R}^2$ and position $\vec{x}(t) \in\mathbb{R}^2$. 
Each time step, the particle positions are updated with time step $\Delta t$ using Euler integration:
\begin{equation}
    \vec{x}(t+\Delta t) = \vec{x}(t) + \vec{v}(t)\Delta t.
    \label{}
\end{equation}
The user is free in prescribing particle velocities, as it is essential to the nature of the simulation.
\emph{Mercurial} has two planner modules determining particle velocities on each time step, based on macroscopic interaction and the shape of the geometry. These planners are written with the goal of simulating crowds and are discussed in Chapter~\ref{chap:crowds}.

\subsection{Initial condition}
The default initial condition is a random distribution of particles, in accordance to \eqref{eq:langevin_}.
Particles can be spawned in all locations not occupied by obstacles, entrances or exits. 
The user is also able to create specific initial distributions. 
Distributions that are included by default are localised dense particle configurations (used in Section~\ref{sec:case_e}) and partly filled domains (used in Section~\ref{sec:case_f}).

\section{Implemented methods}
We provide a brief summary of some of the methods implemented in \emph{Mercurial}. This is by no means a complete list, but meant to provide some insight in the contents of the framework.
\subsection{Geometry-related methods}
\begin{itemize}
    \item Load geometry file
    \item Load configuration file
    \item Move pedestrians
    \item Apply boundary conditions; inflow/outflow, obstacles.
    \item Compute minimum distance violations.
\end{itemize}

\subsection{Static planner}
\begin{itemize}
    \item Create visibility graph
    \item Find shortest paths
    \item Find line segment intersections
\end{itemize}

\subsection{Domain/interaction potential}
\begin{itemize}
    \item Find scene/boundary interface
    \item Compute unit cost field
    \item Compute potential field
    \item Compute pressure
    \item Apply pressure
\end{itemize}

\subsection{Macro-micro interaction}
\begin{itemize}
    \item Interpolate particle values
    \item Compute bilinear interpolation functions
    \item Plot macroscopic values
\end{itemize}

\subsection{Results}
\begin{itemize}
    \item Create time spent histogram
    \item Create minimum path length histogram
    \item Create scatter plot
    \item Plot density heatmap
    \item Plot minimum distance violations
\end{itemize}

\end{appendices}
\end{document}